\title{Finding and Counting Patterns in Sparse Graphs}
\author{Balagopal Komarath}{IIT Gandhinagar, India}{bkomarath@rbgo.in}{}{}
\author{Anant Kumar}{IIT Gandhinagar, India}{kumar_anant@iitgn.ac.in}{}{}
\author{Suchismita Mishra}{Universidad Andr\'es Bello, Chile}{suchismita.m@iitgn.ac.in}{}{}
\author{Aditi Sethia}{IIT Gandhinagar, India}{aditi.sethia@iitgn.ac.in}{}{}
\authorrunning{B. Komarath, A. Kumar, S. Mishra, and A. Sethia} 
\keywords{Subgraph Detection and Counting,
Homomorphism Polynomials,
Treewidth and Treedepth, Matchings} 
\patchcmd{\BR@backref}{\newblock}{\newblock($\uparrow$~}{}{}
\patchcmd{\BR@backref}{\par}{)\par}{}{}
\newcommand{\NP}{\ensuremath{\mathsf{NP}}\xspace}
\tikzstyle{vertex}=[fill=black, draw=black, shape=circle, tikzit fill=black, tikzit shape=circle, tikzit draw=black]
\tikzstyle{bag}=[fill=none, draw=black, shape=rectangle, tikzit shape=circle]
\tikzstyle{comment}=[fill=white, draw=none, shape=rectangle, tikzit shape=rectangle]
\tikzstyle{etc}=[dashed]
\tikzstyle{gate}=[fill=none, draw=none, shape=circle]
\tikzstyle{wire}=[->]
\tikzstyle{v}=[fill=black, shape=circle]
\tikzstyle{red}=[-, draw=red]
\tikzstyle{rl}=[-, draw=red, bend left]
\tikzstyle{rr}=[-, draw=red, bend right]
\newcommand{\gaut}{\mathsf{Aut}}
\newcommand{\ghom}{\mathsf{Hom}}
\newcommand{\gsub}{\mathsf{Sub}}
\newcommand{\gind}{\mathsf{Ind}}
\newcommand{\edgemon}{\mathsf{EdgeMon}}
\newcommand{\vertexmon}{\mathsf{VertexMon}}
\newcommand{\evmon}{\mathsf{Mon}}
\newcommand{\mapgate}{\mathsf{MapGate}}
\newcommand{\spasm}{\mathsf{Spasm}}
\newcommand{\merge}{\mathsf{Merge}}
\newcommand{\trace}{\mathsf{trace}}
\newcommand{\td}{\mathsf{td}}
\newcommand{\mtd}{\mathsf{mtd}}
\newcommand{\tw}{\mathsf{tw}}
\newcommand{\mtw}{\mathsf{mtw}}
\newcommand{\ghw}{\mathsf{ghw}}
\newcommand{\otilde}{\widetilde{O}}
\newcommand{\depth}{\mathsf{depth}}
\newcommand{\dist}{\mathsf{dist}}
\newcommand{\graphw}{T_{3,3}}
\DeclareMathOperator{\pluseq}{\mathbin{{+}{=}}}
\begin{document}

\maketitle

\begin{abstract}
    We consider algorithms for finding and counting small, fixed graphs in sparse host graphs. In the non-sparse setting, the parameters treedepth and treewidth play a crucial role in fast, constant-space and polynomial-space algorithms respectively. We discover two new parameters that we call matched treedepth and matched treewidth. We show that finding and counting patterns with low matched treedepth and low matched treewidth can be done asymptotically faster than the existing algorithms when the host graphs are sparse for many patterns. As an application to finding and counting fixed-size patterns, we discover $\otilde(m^3)$-time \footnote{$\otilde$ hides factors that are logarithmic in the input size.}, constant-space algorithms for cycles of length at most $11$ and $\otilde(m^2)$-time, polynomial-space algorithms for paths of length at most $10$.
\end{abstract}

\section{Introduction}
\label{sec:introduction}

Given simple graphs $G$, called the \emph{pattern}, and $H$, called the \emph{host}, a fundamental computational problem is to find or count occurrences of $G$ in $H$. What does it mean for $G$ to occur in $H$? The three most common notions of occurrence are characterized by mappings $\phi: V(G) \mapsto V(H)$. We say:
\begin{enumerate}
    \item If $\{u, v\} \in E(G)$ implies $\{\phi(u), \phi(v)\} \in E(H)$ and $\phi$ is one-to-one, then we say that $\phi$ witnesses a subgraph isomorphic to $G$ in $H$. The subgraph is obtained by taking the vertices and edges in the image of $\phi$. The number of $G$-subgraphs of $H$ is just the number of such subgraphs $G'$ of $H$.
    \item If $\{u, v\} \in E(G)$ is equivalent to $\{\phi(u), \phi(v)\} \in E(H)$ and $\phi$ is one-to-one, then $\phi$ witnesses an induced subgraph isomorphic to $G$ in $H$. The induced subgraph is obtained by taking the vertices and \emph{all} edges induced by those vertices in the image of $\phi$.
    \item  If $\{u, v\} \in E(G)$ implies $\{\phi(u), \phi(v)\} \in E(H)$, then we say that $\phi$ is a homomorphism from $G$ to $H$. Note that unlike a subgraph isomorphism, $\phi$ is not required to be one-to-one.
\end{enumerate}

For any of these notions, the detection problem is clearly in $\NP$. All three of them are also straightforward generalizations of the $\NP$-hard problem $\mathsf{CLIQUE}$. Therefore, the existence of efficient algorithms for finding or counting patterns under any of these notions is unlikely in general.

The class of pattern detection and counting problems remain interesting even if we restrict our attention to fixed pattern graphs. Williams~\cite{Williams2008} showed that the improved algorithms for finding triangles could be used to find faster algorithms for even $\NP$-complete problems such as $\mathsf{MAX2SAT}$. For fixed pattern graphs of size $k$, the brute-force search algorithm is as follows: Iterate over all $k$-tuples over $V(H)$ and check whether $G$ occurs in the induced subgraph of $H$ on the vertices in that $k$-tuple. This algorithm takes $\theta(n^k)$ time and constant space. Therefore, when we restrict our attention to fixed patterns, we seek improvements over this running time preferably keeping the space usage low. There are two broad techniques that reduce the running-time: the usage of fast matrix multiplication algorithms as a sub-routine and the exploitation of structural properties of pattern graphs.

If $A$ is the adjacency matrix of the graph, then Ne\v set\v ril and Poljak \cite{N1985} showed that one can obtain an $O(n^\omega)$-time algorithm for counting triangles using the identity $\trace(A^3) = 6\Delta$, where $\Delta$ is the number of triangles in the graph, where $\omega < 2.38$ is the matrix multiplication exponent. Using a simple reduction, they extended this to an algorithm to count $3k$-cliques in $O(n^{k\omega})$-time. They also showed that we can use improved algorithms for counting $k$-cliques to count any $k$-vertex pattern. Later, Kloks, Kratsch and M\"uller \cite{DBLP:journals/ipl/KloksKM00} showed how to use fast rectangular matrix multiplication to obtain similar improvements to the running time for counting cliques of all sizes, not just multiples of three. Note that the improvements obtained by these algorithms are applicable to all $k$-vertex patterns. i.e., they do not use the pattern's structure to obtain better algorithms. Since finding a $k$-clique requires $n^{\Omega(k)}$-time unless ETH is false, we need to exploit the structure of the pattern to obtain significantly better algorithms.

For patterns sparser than cliques, the run-time can be significantly improved over even fast matrix multiplication based (pattern finding) algorithms. The crucial idea is to exploit the structure of the pattern graph. A \emph{$k$-walk polynomial} is a polynomial where the monomials correspond to walks that are $k$ vertices long. For example, a walk $(u, v, w, x)$ will correspond to the monomial $x_{uv}x_{vw}x_{wx}$ and a walk $(u, v, u, v)$ to the monomial $x_{uv}^3$ \footnote{We write $uv$ to denote the edge $\{u, v\}$.}. Williams~\cite{10.1016/j.ipl.2008.11.004} showed that we can detect $k$-paths in graphs by (1) computing the $k$-walk polynomial and (2) checking whether it has multilinear monomials. We can compute the $k$-walk polynomial in linear-time using a simple dynamic programming algorithm and then multilinear monomials can be detected with high probability by evaluating this polynomial over an appropriate ring where the randomly chosen elements satisfy $a^2 = 0$. This yields is a $O(2^k(n+m))$-time algorithm for finding $k$-vertex paths as subgraphs in $n$-vertex, $m$-edge host graphs.
 
We now consider the problem of counting sparse patterns. For counting $k$-paths as subgraphs, the best known algorithm by Curticapean, Dell and Marx \cite{10.1145/3055399.3055502} takes only $O(f(k) n^{0.174k + o(k)})$-time for some function $f$. Coming to fixed pattern graphs, Alon, Yuster and Zwick \cite{Alon1997} gave $O(n^\omega)$-time algorithms for counting cycle subgraphs of length at most $8$ using an algorithm that \emph{combines} fast matrix multiplication and exploitation of the structure of the pattern. Notice that this is the same as the time required for counting triangles ($3$-cliques).
 
The notion of graph homomorphisms was shown to play a crucial role in all the above improved algorithms for finding and counting non-clique subgraphs. More specifically, Fomin, Lokshtanov, Raman, Rao, and Saurabh \cite{DBLP:journals/jcss/FominLRSR12} showed how the efficient construction of homomorphism polynomials (see Definition~\ref{def:hompoly}), a generalization of $k$-walk polynomials, can be used to detect subgraphs with small treewidth efficiently. Their algorithm can be seen as a generalization of Williams's algorithm \cite{10.1016/j.ipl.2008.11.004} for $k$-paths to arbitrary graphs. Similarly, Curticapean, Dell and Marx \cite{10.1145/3055399.3055502} showed that efficient algorithms for counting subgraphs can be derived from efficient algorithms for counting homomorphisms of graphs of small treewidth. Their algorithm can be seen as a generalization of the cycle-counting algorithms of Alon, Yuster, and Zwick \cite{Alon1997}.

Algorithms for finding and counting patterns in \emph{sparse} host graphs are also studied. An additional parameter, $m$, the number of edges in the host graph, is taken into account for the design and analysis of these algorithms. In the worst-case, $m$ could be as high as $\binom{n}{2}$, and hence, an $O(n^t)$-time algorithm and an $O(m^{t/2})$-time algorithm for some $t$ have the same asymptotic time complexity. However, it is common in practice that $m = o(n^2)$. For example, if the host graph models a road network, then $m = O(n)$, where the constant factor is determined by the maximum number of roads at any intersection. In such cases, an $O(m^{t/2})$-time algorithm is asymptotically better than an $O(n^t)$-time algorithm.

The broad themes of using fast matrix multiplication and/or structural parameters of the pattern to obtain improved algorithms are still applicable in the setting of sparse host graphs. Using fast matrix multiplication, Eisenbrand and Grandoni \cite{EisenbrandG04} showed that we can count $k$-cliques in $O(m^{k\omega/6})$-time. Kloks, Kratsch and M\"uller \cite{DBLP:journals/ipl/KloksKM00} showed that $K_4$ subgraphs can be counted in $O(m^{(\omega+1)/2})$-time. Again, since $\omega < 3$, this is better than the $O(m^2)$-time given by the brute-force algorithm. Using structural parameters of the pattern, Kowaluk, Lingas, and Lundell \cite{Kowaluk2013CountingAD} obtained many improved algorithms in the sparse host graph setting. For example, their methods obtain an algorithm that runs in $O(m^4)$-time for counting $P_{10}$ as subgraphs. In this work, we obtain an $\otilde(m^2)$-time algorithm for counting $P_{10}$ (See Theorems~\ref{thm:pkcountmtw},\ref{thm:ckcountmtw},\ref{thm:csixind},\ref{thm:pkbarind} for similar improvements).

The model of computation that we consider is the unit-cost RAM model. In particular, we can store labels of vertices and edges in the host graph in a constant number of words\footnote{In the TM model or the log-cost RAM model, storing labels of vertices would take $O(\log n)$ space.}. In this model, algorithms based on fast matrix multiplication and/or treewidth mentioned above use polynomial space. However, the brute-force search algorithm uses only constant space as it only needs to store $k$ vertex labels at a time (Recall that we regard $k$ as a constant.). How much speed-up can we obtain while preserving constant space usage? The graph parameter \emph{treedepth} plays a crucial role in answering this question. It is well known that we can count the homomorphisms from a pattern of treedepth $d$ in $O(n^d)$-time while using only constant space (See Komarath, Rahul, and Pandey \cite{DBLP:journals/corr/abs-2011-04778} for a construction of arithmetic formulas counting them. These arithmetic formulas can be implicitly constructed and evaluated in constant space.). Since all $k$-vertex patterns except $k$-clique has treedepth strictly less than $k$, this immediately yields an improvement over the running-time of brute-force while preserving constant space usage. In this work, we improve upon the treedepth-based algorithms for sparse host graphs where the pattern graph is a cycle of length at most $11$ (See Theorem~\ref{thm:ckcountmtd}).

\subsection{Connection to arithmetic circuits for graph homomorphism polynomials}

A popular sub-routine in these algorithms is an algorithm by Diaz, Serna and Thilikos \cite{10.5555/646719.702252} that efficiently counts the number of homomorphisms from a pattern of small treewidth to an arbitrary host graph. Indeed, it can be shown that this algorithm can be easily generalized to \emph{efficiently construct} circuits for homomorphism polynomials instead of counting homomorphisms. Bl\"aser, Komarath and Sreenivasaiah \cite{DBLP:conf/fsttcs/BlaserKS18} showed that efficient constructions for homomorphism polynomials can even be used to detect \emph{induced} subgraphs in some cases. They also show that many of the faster induced subgraph detection algorithms, such finding four-node subgraphs by Williams et al.\ \cite{10.5555/2722129.2722240} and five-node subgraphs by Kowaluk, Lingas, and Lundell \cite{Kowaluk2013CountingAD} can be described as algorithms that efficiently construct these homomorphism polynomials. Therefore, arithmetic circuits for graph homomorphism polynomials provide a unifying framework for describing almost all the fast algorithms that we know for finding and counting subgraphs and finding induced subgraphs. Can we improve these algorithms by finding more efficient ways to construct arithmetic circuits for homomorphism polynomials? Unfortunately, it is known that for the type of circuit that is constructed, i.e., circuits that do not involve cancellations, the existing constructions are the best possible for \emph{all} pattern graphs, as shown by Komarath, Pandey, and Rahul \cite{DBLP:journals/corr/abs-2011-04778}. The situation is similar for constant space algorithms. The best known algorithms can be expressed as divide-and-conquer algorithms that evaluate small formulas constructed by making use of the graph parameter treedepth. Komarath, Pandey, and Rahul\cite{DBLP:journals/corr/abs-2011-04778} also showed that the running-time of these algorithms match the best possible formula size for \emph{all} pattern graphs. These arithmetic circuit lower bounds serve as a technical motivation for considering sparse host graphs, in addition to the practical motivation mentioned earlier.

\subsection{Our findings}
In this paper, we study algorithms for finding and counting patterns in host graphs that work well especially when the host is sparse. We discover algorithms that are (1) strictly better than the brute-force algorithm, (2) strictly better than the best-known algorithms when the host graph is sparse, (3) close to the best-known algorithms when the host graphs are dense. Our algorithms are based on two new structural graph parameters -- the \emph{matched treedepth} and \emph{matched treewidth}. (See \ref{def:matchedET} and \ref{def:matched-tree-decomposition-treewidth} for formal definitions). We show that they can be used to obtain improved running times for algorithms that use constant space and polynomial space respectively. Our algorithms are summarized in Table~\ref{tab:algo}. In the table, the parameter $m$ is the number of edges in the host graph. We denote using $\mtw$ the matched treewidth of the pattern and using $\mtd$ the matched treedepth of the pattern. The notation $\otilde$ hides factors that are poly-logarithmic in the input (the host graph) size.

\begin{table}[ht!]
    \centering
    \begin{tabular}{llllll}
         Pattern & Type & Problem & Time & Space & Remarks \\
         \hline
         $C_k$ & Subgraph & Counting & $\otilde(m^3)$ & $O(1)$ & $k \leq 11$ \\
         $P_k$ & Subgraph & Counting & $\otilde(m^2)$ & $\otilde(m^2)$ & $k \leq 10$ \\
         $C_k$ & Subgraph & Counting & $\otilde(m^2)$ & $\otilde(m^2)$ & $k \leq 9$ \\
         Any & Homomorphism & Counting & $\otilde(m^{\lceil \mtd/2 \rceil})$ & $O(1)$ &  \\
         Any & Homomorphism & Counting & $\otilde(m^{\lceil (\mtw+1)/2 \rceil})$ & $\otilde(m^{\lceil (\mtw+1)/2 \rceil})$ &  \\
         $C_6$ & Induced subgraph & Detection & $\otilde(m^2)$ & $\otilde(m^2)$ & \\
         $\overline{P_k}$ & Induced subgraph & Detection & $\otilde(m^{\lceil (k-2)/2 \rceil})$ &  $\otilde(m^{\lceil (k-2)/2 \rceil})$ & \\
         \hline
    \end{tabular}
    \caption{Pattern counting and detection algorithms for sparse host graphs.}
    \label{tab:algo}
\end{table}

We now explain the relevance of our new parameters; state our algorithms, the relationships between various graph parameters, and some structural characterizations that we prove in this paper in the rest of this section.

\paragraph{Treedepth and matched treedepth} The matched treedepth of a graph is closely related to its treedepth. The constant space algorithm based on treedepth is essentially an divide-and-conquer algorithm over a elimination tree of the pattern graph that executes a brute-force search over each root-to-leaf path in the elimination tree. Therefore, it runs in time $O(n^d)$. We exploit the fact that the elimination tree is matched, which forces an additional constraint that the vertices in each root-to-leaf path has to be covered by a matching. This allows the brute-force part of the algorithm to discover all $d$ vertices on the path using only $d/2$ edges. The central algorithm that we use to obtain constant space algorithms is given below:
\begin{restatable}{theorem}{counthommtd}
    \label{thm:count-hom-mtd}%
    Let $G$ be a graph with $\mtd(G) = d$, then given an $m$-edge graph $H$ as input, we can count the number of homomorphisms from $G$ to $H$ in $\otilde(m^{\lceil d/2 \rceil})$-time and constant space.
\end{restatable}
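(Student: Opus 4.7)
The plan is to adapt the classical constant-space elimination-tree algorithm for counting homomorphisms so that the matching witnessing matched treedepth is exploited to enumerate vertex pairs over edges of $H$ rather than vertices one at a time.

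First, I would recall the standard treedepth-based algorithm: given an elimination tree $T$ of $G$ of depth $d$, a depth-first traversal of $T$ keeps the images of already-visited ancestors on the recursion stack; at each new node $v$, we enumerate the $|V(H)|$ candidates for $\phi(v)$, verify the edges of $G$ from $v$ to its processed ancestors, and then recurse on each child of $v$ independently, multiplying the returned counts. Sibling subtrees are disconnected after deleting ancestors, so this factorization is correct. The running time is $O(n^d)$ and the space is $O(d) = O(1)$.

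Now I would introduce the matched version. Let $T$ be a matched elimination tree of $G$ of depth $d = \mtd(G)$, witnessed by a matching $M \subseteq E(G)$ whose edges all join ancestor--descendant pairs in $T$ and which covers all but at most one vertex on every root-to-leaf path. I would modify the recursion as follows: when the DFS arrives at a vertex $v$ whose $M$-partner $w$ is a proper descendant of $v$, rather than iterating over $n$ values for $\phi(v)$, I iterate over the at most $2m$ ordered endpoint-incidences $(a,b)$ of edges of $H$ and assign the pair $\phi(v) = a$, $\phi(w) = b$ simultaneously. This is justified because $\{v,w\} \in M \subseteq E(G)$ forces $\{\phi(v),\phi(w)\}$ to be an edge of $H$ in any homomorphism, so no mapping is lost and none is double-counted. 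The DFS then proceeds to process the children of $v$; when it later reaches $w$ in the unique subtree containing it, $\phi(w)$ is already fixed, so we only verify the $G$-edges from $w$ to its (now fully-processed) ancestors and recurse on the children of $w$. For the at most one $M$-unmatched vertex per root-to-leaf path (appearing only when $d$ is odd), I fall back to the original $|V(H)|$-enumeration; a linear-time preprocessing discarding isolated vertices of $H$ ensures the effective $n$ is $O(m)$.

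The running-time bookkeeping then reduces to tracking a single root-to-leaf branch of the recursion: it performs $\lfloor d/2 \rfloor$ edge-enumerations of cost $O(m)$ and at most one residual vertex-enumeration of cost $O(m)$, for a total of $\otilde(m^{\lceil d/2 \rceil})$. Multiplicativity across sibling subtrees and the $O(d) = O(1)$ space bound are inherited verbatim from the classical algorithm.

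The main delicate step is verifying that pairing $v$ with its descendant $w$ remains compatible with the independent recursion over children of $v$: exactly one child subtree of $v$ contains $w$, and this subtree must still be solvable recursively with $\phi(w)$ pre-assigned, while the remaining sibling subtrees are solved on disjoint vertex sets. This is exactly what the elimination-tree property guarantees---edges of $G$ only join ancestor--descendant pairs in $T$, so the subtree containing $w$ shares no $G$-edge with its siblings---so the matched variant decomposes correctly, and the remaining analysis is routine.
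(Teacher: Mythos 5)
Your proposal captures the same core idea as the paper's Algorithm~\ref{alg:count-hom-mtd}: run the usual depth-first recursion over the elimination tree, but for each matched pair $(v,w)$ enumerate over the $O(m)$ (ordered) edges of $H$ to fix $\phi(v)$ and $\phi(w)$ simultaneously, rather than paying $O(n)$ per vertex, yielding $\otilde(m^{\lceil d/2\rceil})$ time while keeping the $O(d)=O(1)$ recursion stack of the treedepth algorithm. The paper's version does exactly this, processing two consecutive tree levels at a time.

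One small correction on the definition you recall: in a matched elimination tree, the witnessing edges always join \emph{consecutive} vertices on a root-to-leaf path (i.e.\ parent--child pairs in the tree), and when the path has odd length the definition covers every vertex by allowing one vertex $v_i$ to be matched to both $v_{i-1}$ and $v_{i+1}$, rather than leaving a vertex uncovered. Your slightly more general reading (arbitrary ancestor--descendant pairs, with at most one uncovered vertex per path) subsumes the actual definition, and your handling of the odd case by a single residual vertex-enumeration gives the same $\otilde(m^{\lceil d/2\rceil})$ bound after the isolated-vertex preprocessing, so this is a presentational variation rather than a gap. The only implementation point you gloss over, which the paper is careful about, is that obtaining genuinely constant space requires iterating the edges of $H$ ordered by the first endpoint so that the per-child partial sums $s_u$ can be reused across different choices of $\phi(v)$, instead of storing one accumulator per $(u,x)$ pair; but this is a bookkeeping detail, not a flaw in the argument.
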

It is well-known that the number of $G$-subgraphs, for any $G$, can be expressed as a linear combination of the number of homomorphisms from a related set of graphs called the \emph{spasm of $G$}. The spasm of $G$ contains exactly all graphs that can be obtained by iteratively merging the independent sets in $G$. (See \ref{def:spasm} for formal definition). Although the treedepth of the spasm of $C_{11}$ is bounded by $6$, however, the matched treedepth is not necessarily bounded by the treedepth. We analyze all graphs in the spasm of $C_{10}$ and $C_{11}$ (there are $501$ such graphs) and show that the matched treedepth of each graph is at most $6$. This yields the following algorithm:
\begin{restatable}{theorem}{ckcountmtd}
    \label{thm:ckcountmtd}%
    Given an $m$-edge graph $H$ as input, we can count the number of $C_{k}$, where $k \leq 11$, as subgraphs in $\otilde(m^3)$-time and constant space.
\end{restatable}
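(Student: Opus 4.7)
The plan is to combine the standard spasm-based expansion of subgraph counts as a linear combination of homomorphism counts with Theorem~\ref{thm:count-hom-mtd}. For $G = C_k$ there is a fixed identity of the form
\begin{equation*}
\#\{C_k\text{-subgraphs in } H\} \;=\; \frac{1}{|\gaut(C_k)|} \sum_{F \in \spasm(C_k)} \alpha_F \cdot \ghom(F, H),
\end{equation*}
where $\spasm(C_k)$ and the rational coefficients $\alpha_F$ depend only on $k$. Since $k \leq 11$ is treated as a constant, both the set $\spasm(C_k)$ and the coefficients $\alpha_F$ can be precomputed offline and hardcoded into the algorithm.

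The algorithm then iterates through the constantly many graphs $F \in \spasm(C_k)$, computes $\ghom(F, H)$ by invoking Theorem~\ref{thm:count-hom-mtd}, multiplies by the hardcoded $\alpha_F$, and adds the product to a single running accumulator held in $O(1)$ machine words of the unit-cost RAM. At the end the accumulator is divided by the constant $|\gaut(C_k)|$. Each call to Theorem~\ref{thm:count-hom-mtd} runs in $\otilde(m^{\lceil \mtd(F)/2 \rceil})$-time and constant space, so provided every $F \in \spasm(C_k)$ satisfies $\mtd(F) \leq 6$, the total running time is $O(1) \cdot \otilde(m^3) = \otilde(m^3)$ and the total space remains $O(1)$.

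The central technical claim that must be established is therefore $\mtd(F) \leq 6$ for every $F \in \spasm(C_k)$ with $k \leq 11$, and this is where the main obstacle lies. Matched treedepth is not in general bounded by ordinary treedepth, and the independent-set merging operation used to construct the spasm can destroy the path-like structure of $C_k$, producing quotient graphs whose matched elimination trees are not evident from any simple structural argument. The plan is to enumerate $\spasm(C_{10}) \cup \spasm(C_{11})$, which consists of $501$ graphs each on at most $11$ vertices, and for each one exhibit an explicit matched elimination tree of depth at most $6$; since every such graph has constant size, this is a bounded, finite, computer-assisted search. Once these $501$ witness elimination trees are in hand, the theorem follows immediately by combining them with Theorem~\ref{thm:count-hom-mtd} and the spasm identity above.
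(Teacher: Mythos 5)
Your proposal matches the paper's proof essentially verbatim: both reduce $C_k$-subgraph counting to a fixed linear combination of homomorphism counts over $\spasm(C_{10})\cup\spasm(C_{11})$ (using $\spasm(C_k)\subset\spasm(C_{k+2})$ for smaller $k$), invoke Theorem~\ref{thm:count-hom-mtd}, and discharge the key claim $\mtd(F)\leq 6$ for all $501$ spasm graphs by a finite computer-assisted enumeration of explicit matched elimination trees. The only cosmetic difference is that you write the $1/|\gaut(C_k)|$ factor separately while the paper absorbs it into the coefficients $\alpha_{G'}$.
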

For comparison, the brute-force algorithm takes $O(m^6)$-time and constant space; and the treedepth based algorithm takes $O(n^6)$-time and constant space.

As seen from the proof of our algorithm for counting $C_{11}$, the spasm of a pattern can contain a large number of graphs even for relatively small patterns. Therefore, it would be nice to have theorems that upper-bound the matched treedepth. Unfortunately, the property $\mtd(G) \leq k$ is not even subgraph-closed unlike treedepth. For example, it can be proved that $\mtd(K_4 - e) = 3$ but $\mtd(C_4) = 4$. However, interesting structural observations can still be made for matched treedepth. The following is a theorem that upper-bounds matched treedepth in terms of treedepth.
\begin{restatable}{theorem}{mtdtd}
    \label{thm:mtdtd}%
    For any graph $G$, $\mtd(G) \leq 2\cdot\td(G) - 2$.
\end{restatable}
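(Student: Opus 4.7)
The plan is to prove the bound by induction on $d = \td(G)$. For the base case $d = 2$, $G$ is a disjoint union of stars (and possibly isolated vertices), so rooting each star at its centre with its leaves as children gives an elimination tree of depth $2$ in which every depth-$2$ vertex is matched to its parent via a star edge, yielding $\mtd(G) \le 2 = 2 \cdot 2 - 2$.

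For the inductive step with $d \ge 3$, I would fix an optimal elimination tree $T$ of $G$ of depth $d$ with root $r$ (handling disconnected $G$ component by component). For each connected component $G_i$ of $G - r$ one has $\td(G_i) \le d - 1$, so I would pick some $G$-neighbour $s_i$ of $r$ inside $G_i$, which exists because $G_i$ is attached to $r$ in the original graph. Each connected component $C_{ij}$ of $G_i - s_i$ then satisfies $\td(C_{ij}) \le \td(G_i) \le d - 1$, and the inductive hypothesis gives a matched elimination tree $T_{ij}$ of $C_{ij}$ of depth at most $2(d - 1) - 2 = 2d - 4$.

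I would assemble the new elimination tree $T'$ of $G$ by placing $r$ at depth $1$, each $s_i$ at depth $2$ as a child of $r$, and hanging the root of each $T_{ij}$ as a child of the appropriate $s_i$ starting at depth $3$. The depth is at most $2 + (2d - 4) = 2d - 2$. Validity as an elimination tree follows because $r$ is the overall root, $s_i$ is an ancestor of everything in $V(G_i) \setminus \{r\}$, and distinct $C_{ij}$ share no $G$-edge as they are different components of $G_i - s_i$. For the matched property, the new depth-$2$ vertices $s_i$ are $G$-adjacent to their parent $r$ by construction, and any other even-depth vertex $v$ of $T'$ sits exactly two deeper in $T'$ than in its $T_{ij}$, so its parity is preserved and its parent in $T'$ equals its parent in $T_{ij}$, which is $G$-adjacent by the matched property inherited from $T_{ij}$.

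The main subtlety I anticipate is making the shift by two levels at the attachment interface preserve the parity of depths: this is precisely what transfers the matched property of $T_{ij}$ into $T'$ and explains why the bound comes out to $2d - 2$ rather than $2d$. Trivial cases such as components of $G_i - s_i$ that are single vertices or have $\td = 1$ only require a light touch, since such a subgraph can be hung as a leaf at depth $3$ without violating any matched constraint, and such shallow attachments never exceed the depth budget once $d \ge 3$.
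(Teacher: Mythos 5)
Your proof is correct, and it takes a genuinely different route from the paper's. The paper starts from a \emph{connected} elimination tree (Lemma~\ref{lem:connet}) of depth $d$, then in each iteration pulls a neighbour of the root upward to become a new child, possibly increasing depth by one, then re-establishes connectivity, and argues that alternating these two phases terminates; only afterwards does it recurse on the grandchild subtrees. You sidestep all of that machinery: rather than modifying an existing tree, you build the matched elimination tree directly by choosing for each component $G_i$ of $G - r$ a specific $G$-neighbour $s_i$ of $r$, and then recursing on the components of $G_i - s_i$. This exploits the same underlying idea (force the top two levels to be matched via the edge $rs_i$, and shift the recursively matched subtree down by exactly two levels so that parities of root-to-leaf path lengths are preserved and the pairing, or the near-pairing with a $P_3$ in the odd case, is inherited), but avoids the paper's interleaved phase-1/phase-2 termination argument. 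Your approach gives the same $2d-2$ bound because $\td(C_{ij}) \le \td(G_i) \le d-1$ supplies exactly enough slack. The one spot worth tightening is the base of the recursion: the inductive hypothesis is vacuous when $\td(C_{ij}) = 1$ (a single vertex $v$), but then $v$ is necessarily adjacent to $s_i$ in $G_i$ because $G_i$ is connected, so the path $(r, s_i, v)$ is a $P_3$ in $G$ and hanging $v$ at depth $3 \le 2d - 2$ (since $d \ge 3$) is fine, as you note.
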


Theorem~\ref{thm:mtdtd} implies that our constant-space algorithms from Theorem~\ref{thm:count-hom-mtd} for counting homomorphisms are asymptotically faster for \emph{all} patterns, where the inputs are sparse host graphs, when compared to the treedepth-based algorithm.

The following theorem shows that the time complexity for counting homomorphisms of a pattern is lower-bounded by the time complexity for counting all of its induced subgraphs.

\begin{restatable}{theorem}{mtdsub}
    \label{thm:mtdsub}
    Let $G$ be a graph and $G'$ is a connected, induced subgraph of $G$, then:
    \begin{enumerate}
        \item $\mtd(G') \leq \mtd(G)$ if $\mtd(G)$ is even.
        \item $\mtd(G') \leq \mtd(G) + 1$ if $\mtd(G)$ is odd.
    \end{enumerate}
\end{restatable}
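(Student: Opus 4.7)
The plan is to prove by induction on $|V(G)|$ the unified inequality $\mtd(G') \leq 2\lceil \mtd(G)/2\rceil$, which directly implies both parts of the theorem since $2\lceil d/2\rceil$ equals $d$ when $d$ is even and $d+1$ when $d$ is odd. The base cases $|V(G)| \leq 1$ are immediate, and if $G$ is disconnected then the connectedness of $G'$ places $V(G')$ entirely inside a single connected component of $G$, reducing to induction on that component.

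For connected $G$ with $|V(G)| \geq 2$, fix an optimal matched elimination tree of $G$. The recursive structure of matched treedepth puts a matched pair $(u, v) \in E(G)$ at the super-root, and each connected component $C_i$ of $G - \{u, v\}$ satisfies $\mtd(C_i) \leq d - 2$, where $d := \mtd(G)$. The proof splits into cases based on how $\{u, v\}$ intersects $V(G')$. If both $u, v \in V(G')$, then $(u, v) \in E(G')$ because $G'$ is induced, every component of $G' - \{u, v\}$ is a connected induced subgraph of some $C_i$, and assembling sub-METs obtained from the inductive hypothesis (each of depth at most $2\lceil (d-2)/2\rceil$) with super-root $(u, v)$ yields a MET of $G'$ of depth at most $2 + 2\lceil (d-2)/2\rceil = 2\lceil d/2\rceil$. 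If neither $u$ nor $v$ lies in $V(G')$, then $V(G') \subseteq V(G) - \{u, v\}$ and the connectedness of $G'$ forces $V(G') \subseteq V(C_i)$ for a single $i$, so $G'$ is a connected induced subgraph of $C_i$ and the bound follows directly from the inductive hypothesis on $C_i$.

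The delicate case, and what I expect to be the main obstacle, is when exactly one of $u, v$ belongs to $V(G')$; say $u \in V(G')$ and $v \notin V(G')$. The pair $(u, v)$ is no longer available as a super-root for $G'$, and substituting a singleton super-root at $u$ would spoil the parity and add an extra $+1$ in the even case. The plan is instead to pick any neighbor $w$ of $u$ inside $V(G')$, which exists whenever $|V(G')| \geq 2$ by the connectedness of $G'$, and use $(u, w) \in E(G')$ as the super-root of a MET for $G'$. The key containment is that since $v \notin V(G')$, we have $V(G') - \{u, w\} \subseteq V(G) - \{u, v, w\}$, and every connected component of $G - \{u, v, w\}$ is a connected induced subgraph of some $C_i$ (either $C_i$ itself when $w \notin V(C_i)$, or a component of $C_i - w$ otherwise). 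Induction on the $C_i$ then bounds each subproblem of $G' - \{u, w\}$ by $2\lceil (d-2)/2\rceil$, and adding the cost-$2$ super-root closes the case with the same bound $2\lceil d/2\rceil$, completing the induction.
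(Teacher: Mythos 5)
Your proof hinges on the structural claim that \emph{``the recursive structure of matched treedepth puts a matched pair $(u,v)\in E(G)$ at the super-root, and each connected component $C_i$ of $G-\{u,v\}$ satisfies $\mtd(C_i)\le d-2$.''} This claim is false, and the gap cannot be patched by choosing $(u,v)$ more cleverly.

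Concretely, take $G=P_5$ with vertices $a$--$b$--$c$--$d$--$e$; here $\mtd(G)=3$, so $d-2=1$. For every edge $(u,v)$ of $P_5$, some component of $G-\{u,v\}$ has matched treedepth $\ge 2$: removing $\{b,c\}$ leaves the edge $de$ ($\mtd=2$), removing $\{a,b\}$ leaves the path $cde$ ($\mtd=2$), and similarly for the symmetric choices. An even smaller example: for $G=K_{1,3}$ we have $\mtd(G)=2$, yet every choice of matched pair $(u,v)$ leaves isolated-vertex components with $\mtd=1>0=d-2$. The underlying reason is that a matched elimination tree rooted at $u$ generally has \emph{several} children $w_1,\dots,w_m$ of $u$; peeling off a single pair $\{u,v\}$ with $v=w_1$ reduces the depth bound by $2$ only for the subtrees hanging below $w_1$, while the subtrees $T_{w_2},\dots,T_{w_m}$ still have depth up to $d-1$ (and are, moreover, ``co-matched'' rather than matched, so even $\mtd(D_j)\le d-1$ is not immediate). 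The consequence for your induction is fatal in the even-$d$ case: if $\mtd(C_i)=d-1$, the inductive hypothesis gives $2\lceil (d-1)/2\rceil = d$, and adding $2$ for your super-root yields $d+2$, exceeding the target $2\lceil d/2\rceil = d$. There is also a secondary issue you do not address: even if the pieces had the right matched treedepth, the inductive hypothesis hands you a matched elimination tree rooted at an \emph{arbitrary} vertex of each piece, whereas assembling them beneath $u$ (or $v$) requires each sub-tree root to be $G'$-adjacent to $u$ (or $v$), which the bound alone does not guarantee.

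For contrast, the paper avoids any clean recursive decomposition and instead works globally on the fixed matched elimination tree $T$ of $G$: it deletes the vertices outside $G'$, reattaches orphaned subtrees to the nearest surviving ancestor to form a tree $T'$, and then repairs unmatched vertices by pulling descendants upward. The crucial accounting step is that whenever a vertex $u$ became unmatched because its partner $w$ in $T$ was deleted, the deletion of $w$ already shortened every root-to-leaf path through that branch, so $\dist_{T'}(r',u)+\depth(T'_v)<\depth(T)$; this slack exactly absorbs the $+1$ depth incurred by the repair. That compensation argument, tied to the specific tree $T$ rather than to a numerical bound on components, is what your component-wise induction lacks.
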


In light of the importance of matched treedepth, it becomes crucial that we understand this structural parameter as much as possible. The graphs of treedepth $2$ are exactly the class of star graphs. This is also the class of graphs with matched treedepth $2$. However, for the graph $C_4$, we have $\td(C_4) = 3$ and $\mtd(C_4) = 4$. So it is interesting to know what are exactly the graphs where treedepth and matched treedepth coincide. The following theorem should be viewed as giving us a preliminary understanding of the relationship between these two parameters.
\begin{restatable}{theorem}{tdmtdthree}
    \label{thm:tdmtdthree}%
    Let $G$ be a graph such that $\td(G) = 3$. Then $\mtd(G) = 3$ if and only if $G$ is $(C_4, P_6, \graphw)$-free.
\end{restatable}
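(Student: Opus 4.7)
My plan is to prove the two directions separately, both via careful analysis of depth-$3$ elimination trees. Observe first that $\mtd(G) \geq \td(G) = 3$, and by Theorem~\ref{thm:mtdtd} we have $\mtd(G) \leq 2\td(G)-2 = 4$, so the question reduces to whether $\mtd(G) = 3$ or $\mtd(G) = 4$.

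For the necessity direction, suppose $\mtd(G) = 3$ and let $T$ be a witnessing depth-$3$ matched elimination tree of $G$. I would show that the presence of any of $C_4$, $P_6$, or $\graphw$ as an induced subgraph is incompatible with $T$. For each $F \in \{C_4, P_6, \graphw\}$, the argument is a finite case analysis on how $V(F)$ can be placed in $T$: every edge of $F$ must lie on an ancestor-descendant pair of $T$, while every non-edge of $F$ must remain a non-edge of $G$. In a depth-$3$ tree these two constraints admit only a small number of configurations for $V(F)$ up to symmetry, and I would inspect each one to exhibit a root-to-leaf path of $T$ along which the matched pairing is violated. The $C_4$ case is the simplest: its four vertices are forced into the unique ``root, middle, two siblings'' shape, and then the pair of the root and the middle along the depth-$3$ path is required to be the diagonal non-edge of $C_4$, giving the contradiction. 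The $P_6$ and $\graphw$ analyses follow the same template but need larger (though still finite) enumerations.

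For the sufficiency direction, assume $\td(G) = 3$ and that $G$ is $(C_4, P_6, \graphw)$-free. By the standard characterization of graphs with treedepth $3$, there is a vertex $v$ such that each connected component of $G - v$ is a star $K_{1, n_i}$, with center $c_i$ and leaf set $L_i$. The forbidden-subgraph hypotheses then tightly constrain how $v$ attaches to each star: $C_4$-freeness forces $N(v) \cap V(S_i)$ either to contain $c_i$ or to be a single leaf (otherwise two $v$-adjacent leaves together with the center and $v$ would induce a $C_4$), and the $P_6$ and $\graphw$ exclusions further rule out bad global combinations of leaf-only attachments across multiple stars. I would then build the matched elimination tree by placing $v$ at the root, choosing for each $S_i$ an ``effective center'' $c_i^\ast \in N(v) \cap V(S_i)$ as a depth-$2$ child of $v$, and placing the remaining vertices of $S_i$ as depth-$3$ grandchildren of $c_i^\ast$. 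This is a valid elimination tree of $G$, and the matched condition holds because each consecutive pair $(v, c_i^\ast)$ along a root-to-leaf path is a $G$-edge by construction.

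The main obstacle is the sufficiency direction: verifying that a coherent choice of effective centers exists simultaneously for all stars, and that the resulting tree remains a valid elimination tree of $G$ while satisfying the matched condition. This is precisely where $P_6$-freeness and $\graphw$-freeness earn their keep beyond $C_4$-freeness, by eliminating the exact global attachment patterns that would otherwise force $\mtd(G) > 3$.
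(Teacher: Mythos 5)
Your necessity direction takes a different route from the paper's: you propose direct placement analysis of each forbidden graph into a depth-$3$ matched elimination tree, whereas the paper proves a hereditariness lemma (if $\mtd(G)=3$, then every connected induced subgraph $G'$ also has $\mtd(G')\le 3$) and invokes the already-stated fact that $C_4$, $P_6$, $\graphw$ each have $\mtd = 4$. Both routes are workable, although your $C_4$ sketch addresses only one of the configurations that must be eliminated — one still has to rule out the root of $T$ lying outside $V(C_4)$, and the ``middle'' being a $C_4$-neighbor of the root — and the $P_6$ and $\graphw$ enumerations are left entirely implicit.

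The genuine gap is in the sufficiency direction. Your construction — root at $v$, effective center $c_i^\ast \in N(v)\cap V(S_i)$ at depth $2$, and ``the remaining vertices of $S_i$'' at depth $3$ — is not even a valid elimination tree when $c_i^\ast$ is a leaf of $S_i$ and $|V(S_i)|\ge 3$: the center $c_i$ and the other leaves of $S_i$ would then be depth-$3$ siblings, but they are adjacent in $G$, and adjacent vertices cannot be siblings in an elimination tree. Your suggestion that $P_6$- and $\graphw$-freeness ``rule out bad global combinations of leaf-only attachments across multiple stars'' does not repair this, because the failure already occurs with a single star: take $G=P_4$ with $v$ an endpoint; then $G-v$ is a single star $P_3$, $N(v)$ meets it only in a leaf, yet $G$ is $(C_4, P_6, \graphw)$-free and has $\mtd(G)=3$. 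The missing ingredient is the re-rooting step that the paper carries out: when the root $r$ is not adjacent to a level-$1$ child $v_i$, they show $r$ and $v_i$ have a \emph{unique} common neighbor $c$ (uniqueness is where $C_4$-freeness is used), then re-root at $c$ with $r$ and $v_i$ as its two children, and use $P_6$- and $\graphw$-freeness to control the rest of the tree so that the re-rooted tree stays at depth $3$ and is matched. Without that re-rooting, or an argument that some choice of root vertex avoids leaf-only attachments to large stars altogether, the construction does not go through.
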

The graph $\graphw$ is the $(3, 3)$ tadpole graph (See Figure \ref{fig:mtd3forbidden}).

\paragraph{Treewidth and matched treewidth} The treewidth-based dynamic programming algorithm of D\'{\i}az, Serna, and Thilikos \cite{10.5555/646719.702252} can be strengthened to output an \emph{arithmetic circuit} that computes the \emph{homomorphism polynomial} for the pattern. An arithmetic circuit is a directed acyclic graph where each internal node is labeled $+$ or $\times$, each leaf is labeled by a variable or a field constant, and there is a designated output node. Such a graph computes a polynomial over the underlying field in a natural fashion. We find that by using a dynamic programming algorithm over matched tree decompositions, we can improve the size of the arithmetic circuit for \emph{sparse} host graphs. Our central theorem is given below:
\begin{restatable}{theorem}{counthommtw}
    \label{thm:counthommtw}%
    Let $G$ be a graph with $\mtw(G) = t$, then given an $m$-edge host graph $H$ as input, we can construct an arithmetic circuit computing the homomorphism polynomial from $G$ to $H$ in time $\otilde(m^{\lceil (t+1)/2 \rceil})$.
\end{restatable}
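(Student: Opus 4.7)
The plan is to lift the D\'{\i}az--Serna--Thilikos dynamic program for the homomorphism polynomial to a \emph{matched} tree decomposition of $G$, indexing DP tables by oriented host \emph{edges} wherever a pattern vertex is covered by its bag-matching. Fix a width-$t$ matched tree decomposition $(T,\{B_x\}_{x\in V(T)})$ of $G$ together with, for each bag $B_x$, a matching $M_x$ covering all but at most one vertex of $B_x$; after normalizing to a nice decomposition of size polynomial in $|G|$, each bag carries at most $\lceil (t+1)/2 \rceil$ slots: one per matching edge, plus at most one for a lone unmatched vertex.

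At each bag $B_x$ the subcircuit built so far exposes a table of output gates indexed by pairs $(\alpha,\beta)$, where $\alpha\colon M_x \to E(H)$ assigns an oriented host edge to each pattern edge of $M_x$ (determining the host image of both its endpoints) and $\beta$ assigns a host vertex to the at-most-one unmatched vertex. Together $(\alpha,\beta)$ determines the restriction to $B_x$ of a candidate homomorphism, and the gate at index $(\alpha,\beta)$ computes the sum over all partial homomorphisms of the subgraph processed so far whose restriction to $B_x$ agrees with it. Since each of the $\lceil (t+1)/2 \rceil$ slots admits $O(m)$ choices (using $n \le 2m$ for the vertex slot after removing isolated vertices), every table has size $O(m^{\lceil (t+1)/2 \rceil})$.

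For each node type I install the natural gadget. An introduce-vertex node either adds a lone-vertex slot (enumerating $O(m)$ host vertices) or, when the newly introduced vertex completes a matching edge already partially present, replaces two vertex slots by one oriented edge slot, trading an $m^2$ enumeration for $m$. An introduce-edge node multiplies the stored gate by the host-edge variable determined by the current bag assignment. A forget node sums out the discarded coordinate, either dropping the lone-vertex slot or collapsing an edge slot into a single vertex slot. A join node multiplies the two children's gates at consistent indices. Whenever the matching $M_y$ of a child bag disagrees with the matching $M_x$ of its parent, I insert a constant-size \emph{rematch} gadget that reindexes the table under the new matching: since each bag has constant size it admits only constantly many matchings, so the blow-up per rematch is $O(1)$.

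The principal obstacle is keeping the matching indexing coherent across adjacent bags while preserving the size budget, especially at forget nodes and at join nodes whose children carry different matchings; the rematch gadget and careful accounting of the lone-vertex slot are what ensure no intermediate table exceeds $\otilde(m^{\lceil (t+1)/2 \rceil})$ gates. With the gadgets in place, a routine bottom-up induction on $T$ shows that summing the root bag's table over all indices produces the homomorphism polynomial from $G$ to $H$. The total circuit size is $O(|T|)\cdot O(m^{\lceil (t+1)/2 \rceil}) = \otilde(m^{\lceil (t+1)/2 \rceil})$ since $|G|$ and $t$ are constants, and the construction runs in time proportional to the circuit size up to polylogarithmic factors for host-edge lookups.
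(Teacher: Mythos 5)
Your core idea -- indexing DP entries by host \emph{edges} for each bag-matching edge rather than by pairs of host vertices, thereby cutting the per-bag enumeration from $m^{t+1}$-ish to $m^{\lceil (t+1)/2\rceil}$ -- is exactly the insight behind the paper's Algorithm~\ref{alg:hom}. What differs is the execution. The paper works directly over the (possibly non-nice) matched tree decomposition: for each bag $B$ it enumerates tuples in $E(H)^k$ where $k$ is the number of edges in $B$'s matched matching, and wires $\mapgate$ gates keyed on the image of $B\cap B'$ for the parent $B'$. You instead normalize to a nice decomposition and run the classical introduce/forget/join schema with a rematch relabeling between adjacent bags. Both are valid, but the paper's choice sidesteps the two places where your writeup is imprecise.

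First, the claim that after normalization ``each bag carries at most $\lceil (t+1)/2 \rceil$ slots: one per matching edge, plus at most one for a lone unmatched vertex'' is stronger than what actually holds: intermediate bags interpolating between two matched bags can easily have several lone vertices (e.g.\ forget three matching partners from a width-5 bag and you are left with three isolated vertices). The table size bound $\otilde(m^{\lceil (t+1)/2\rceil})$ still holds, but the argument you actually need is that the quantity (number of matching edges) $+$ (number of lone vertices) is non-increasing along the forget chain from a matched bag down to its intersection with a neighbor and non-decreasing along the introduce chain up to the next matched bag, hence bounded throughout by the value at the endpoints (using $n=O(m)$ to price each lone-vertex slot at $O(m)$). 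Your introduce/forget rules gesture at this but do not establish it. Second, the rematch step is not a constant-size gadget: it relabels a $\Theta(m^{\lceil (t+1)/2\rceil})$-entry table, so what is constant is the blow-up \emph{factor}. Finally, ``routine bottom-up induction'' glosses over the point the paper handles carefully via its parse-tree argument: one must fix a unique bag assignment for each vertex and edge of $G$ so that monomials of the resulting circuit are in bijection with homomorphisms, not merely that they contain them. None of these are fatal, but a referee would want them spelled out.
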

For graphs where matched treewidth and treewidth coincide, the running time for counting homomorphisms is a quadratic improvement on the algorithm by D\'{\i}az, Serna, and Thilikos \cite{10.5555/646719.702252} for sparse graphs. Therefore, this is also the best possible improvement one can hope to get without improving upon the algorithm by D\'{\i}az, Serna and Thilikos \cite{10.5555/646719.702252}. What is the worst case? The following theorem implies that the resulting algorithm \emph{cannot} be worse on sparse host graphs.
\begin{restatable}{theorem}{mtwvstw}
    \label{thm:mtwvstw}%
    For any graph $G$, we have $\mtw(G) \leq 2 \cdot \tw(G) + 1$.
\end{restatable}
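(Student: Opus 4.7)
The plan is to construct a matched tree decomposition of $G$ of width at most $2\tw(G)+1$ from an optimal tree decomposition of $G$ by doubling each bag with matching partners.

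Let $(T,\{B_t\})$ be an optimal tree decomposition of $G$ of width $k=\tw(G)$, and fix a matching $M$ of $G$. For each node $t$ I would set
\[
B'_t \;=\; B_t \cup \{\,M(v) : v \in B_t \cap V(M)\,\}, \qquad M_t \;=\; \{\,\{v, M(v)\} : v \in B_t \cap V(M)\,\} \;\subseteq\; M.
\]
Each vertex of $B_t$ contributes at most one new partner to $B'_t$, so $|B'_t|\le 2|B_t|\le 2(k+1)$, giving width at most $2k+1$; the set $M_t$ is automatically a matching since it is a sub-matching of $M$, and $V(M_t)=B'_t\cap V(M)$ together with the added partners.

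The key step is verifying the subtree axiom for $(T,\{B'_t\})$. For each vertex $u\in V(G)$, the set of nodes $t$ whose enlarged bag contains $u$ equals $T_u\cup T_{M(u)}$, where $T_v$ denotes the original subtree of nodes whose bag contains $v$ (and $T_{M(u)}$ is empty if $u\notin V(M)$). Since $\{u,M(u)\}$ is an edge of $G$ whenever $u\in V(M)$, the edge axiom of the original decomposition forces $T_u\cap T_{M(u)}\ne\emptyset$, so $T_u\cup T_{M(u)}$ is connected and hence a subtree of $T$. The remaining two axioms carry over from $(T,\{B_t\})$ because $B_t\subseteq B'_t$.

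The main obstacle is handling vertices of $G$ left unsaturated by $M$: such a vertex $v\in B_t$ belongs to $B'_t$ without a partner in $M_t$, so it is not covered by the local matching. I would resolve this by first reducing to the perfect-matching case: attach a new pendant vertex $v'$ to every vertex of $G$ left unsaturated by a fixed maximum matching $M$, producing a supergraph $G^+$. Pendant attachments do not raise treewidth, so $\tw(G^+)=k$, and $M^+:=M\cup\{\{v,v'\}:v\notin V(M)\}$ is a perfect matching of $G^+$ (any tree decomposition of $G$ extends to one of $G^+$ of the same width by appending a leaf bag $\{v,v'\}$ adjacent to a bag already containing $v$). The construction above, applied to $G^+$ and $M^+$, yields a matched tree decomposition of $G^+$ of width at most $2k+1$ in which every bag is exactly covered by its local matching. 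To pull back to $G$, delete every dummy vertex $v'$ from the bags and every pendant edge from the local matchings; the bag-size bound of $2(k+1)$ persists, and each originally unsaturated $v$, now lacking a partner, is re-covered inside its enlarged bag by replacing the dropped pendant edge with any edge of $G$ incident to $v$ whose other endpoint is already present in the bag, which is available by the doubling construction.
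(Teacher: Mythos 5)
Your doubling idea (add each vertex's matching partner to every bag it lives in) cleanly handles saturated vertices: the subtree axiom holds because $T_u\cup T_{M(u)}$ is connected via the bag covering the edge $\{u,M(u)\}$, and the bag-size bound $2(k+1)$ is immediate. That part is sound. The paper's proof takes a different, top-down route: it computes a \emph{fresh maximal matching inside each bag}, and for a still-unmatched vertex $v$ it explicitly pulls a neighbor of $v$ from $G$ into the bag along the path of $T$ (or pushes $v$ into a new leaf bag when all of $v$'s neighbors are already present).

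The gap in your argument is the pullback step. After deleting dummies, the bag $B''_t = B_t\cup\{M(u):u\in B_t\cap V(M)\}$ leaves every vertex of $B_t\setminus V(M)$ unmatched. You then assert that each such $v$ is ``re-covered by any edge of $G$ incident to $v$ whose other endpoint is already present in the bag, which is available by the doubling construction.'' This does not follow: the doubling construction only adds $M(u)$ for $u\in B_t\cap V(M)$, and none of those need be neighbors of $v$. A concrete failure: take $C_7$ with vertices in cyclic order $v,a,x,z,w,y,b$, maximum matching $M=\{\{a,x\},\{z,w\},\{y,b\}\}$ (so $v$ is exposed), and the valid width-$2$ tree decomposition with bags $\{a,x,z\}$, $\{v,a,z\}$, $\{v,z,w\}$, $\{v,w,b\}$, $\{w,y,b\}$ arranged as a path. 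For the bag $\{v,z,w\}$, both $z$ and $w$ are already each other's partner, so doubling adds nothing and the pulled-back bag is $\{v,z,w\}$ with $v$ unmatched and adjacent to neither $z$ nor $w$ --- not a matched bag. One can repair this here by picking a different $M$ or a different decomposition, but your proof gives no rule for doing so and no argument that a good pair always exists. You also never address the possibility of two exposed vertices landing in the same bag, in which case the ``at most one unmatched vertex'' condition fails outright regardless of adjacency. The paper's per-bag maximal matching avoids both issues structurally: maximality within the bag guarantees that any unmatched $v$ with all neighbors inside the bag is adjacent to a matched vertex, and the explicit ``pull a neighbor in from elsewhere in $T$'' step (increasing the matching and the bag by one) handles $v$ with a neighbor outside the bag.
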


Unfortunately, unlike for treewidth, the parameter $\mtw(G)$ is not monotone over the subgraph partial order. We first observe an explicit graph family with lower $\tw$ and larger $\mtw$. Consider the complete bipartite graph $K_{n,n}$ on $n$ vertices. Notice that $tw(K_{n, n}) = n$.
\begin{restatable}{proposition}{mtwknn}
	 $\mtw(K_{n,n})$ = $2n-2$ for all $n>1$.
\end{restatable}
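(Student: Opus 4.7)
The plan is to prove both inequalities separately. For the upper bound $\mtw(K_{n,n}) \leq 2n-2$, I would exhibit an explicit two-bag decomposition. Write $A = \{a_1,\dots,a_n\}$ and $B = \{b_1,\dots,b_n\}$ for the two sides of $K_{n,n}$. Take $B_1 = A \cup (B \setminus \{b_1\})$ and $B_2 = A \cup (B \setminus \{b_n\})$, each of size $2n - 1$, and connect them by a single edge in the decomposition tree. Every edge $(a_i, b_j)$ of $K_{n,n}$ is covered (those with $j \neq 1$ lie in $B_1$ and those with $j \neq n$ lie in $B_2$), and the running-intersection property is immediate since the only vertices appearing in exactly one bag are $b_1$ and $b_n$. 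Each bag contains $n$ vertices of $A$ and $n - 1$ of $B$, so a matching of $n - 1$ edges of $K_{n,n}$ covers all but one of its vertices, which is permitted for a matched bag.

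For the lower bound $\mtw(K_{n,n}) \geq 2n - 2$, I would show that every matched tree decomposition has a bag of size at least $2n - 1$. If the decomposition tree has only one node, that bag is $V(K_{n,n})$ and has size $2n$, so assume there is at least one tree edge. Contract any tree edge whose endpoints satisfy $B_t \subseteq B_{t'}$ (this preserves both the matched condition and the maximum bag size) until no such edge remains. Fix any tree edge $(t, t')$ and let $V_1, V_2$ be the vertex sets of the two resulting subtrees; standard tree-decomposition arguments give $V_1 \cap V_2 = B_t \cap B_{t'}$, and irreducibility gives $V_1 \setminus V_2 \neq \emptyset$ and $V_2 \setminus V_1 \neq \emptyset$. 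For any $u \in V_1 \setminus V_2$ and $w \in V_2 \setminus V_1$, no bag contains both, so $\{u, w\}$ cannot be an edge of $K_{n, n}$; hence $u$ and $w$ lie in the same part of the bipartition. Therefore one of $A$ or $B$ is entirely contained in $B_t \cap B_{t'}$, say $B \subseteq B_t$. Then $|B_t \cap B| = n$, and since every matching of $K_{n,n}$ uses one vertex from each part, the matched condition forces $|B_t \cap A|$ and $|B_t \cap B|$ to differ by at most $1$, giving $|B_t| \geq (n - 1) + n = 2n - 1$.

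The main obstacle is exploiting the matched condition in the lower bound: since $\tw(K_{n,n}) = n$, treewidth alone gives only $|B_t| \geq n + 1$, and the jump to $2n - 1$ hinges on the bipartite-balanced constraint on bags imposed by the matching structure of $K_{n,n}$. The reduced-decomposition step is the technical heart of the argument, since it is what guarantees that some tree edge has genuinely forgotten vertices on both sides; combined with the bipartite structure this traps an entire part of the bipartition inside the separator of that edge, and the matched condition then amplifies the separator size from $n$ to roughly $2n$.
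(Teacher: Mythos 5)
Your proposal is correct and takes essentially the same approach as the paper: the two-bag upper-bound construction is identical (the paper deletes two vertices $a, a'$ from the same part, which is your $b_1, b_n$), and the lower bound, though you phrase it via an arbitrary irreducible tree edge and the separator property rather than the paper's leaf-bag argument, hinges on the same two observations — a forgotten vertex traps an entire side of the bipartition in one bag, and the bipartite-balanced matching constraint then forces that bag to have size at least $2n-1$.
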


The following observation shows that there exists supergraphs of $K_{n,n}$ with lower $\mtw$ than that of $K_{n, n}$.
\begin{restatable}{observation}{mtw-is-not-subgraph-closed}
Consider the supergraph $G$ of $K_{n,n}$ such that $V(G)=V(H)$, and there are edges in one partition of $K_{n, n}$ such that the independent set of size $n$ becomes a path on $n$ vertices. Note that although $\mtw(K_{n,n}) = 2n-2$, but $\mtw(G) = n$.
\end{restatable}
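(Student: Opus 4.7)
The plan is to establish the bound in two directions: $\mtw(G) \geq n$ and $\mtw(G) \leq n$. For the lower bound I rely on two generic facts. First, a matched tree decomposition is in particular a tree decomposition, so $\mtw(G) \geq \tw(G)$. Second, treewidth is monotone under subgraph inclusion on the same vertex set, so $\tw(G) \geq \tw(K_{n,n}) = n$ (the latter equality is noted in the paragraph preceding the proposition). Combining these gives $\mtw(G) \geq n$.

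For the upper bound I exhibit an explicit matched tree decomposition of width $n$. Let $A = \{a_1, \dots, a_n\}$ and $B = \{b_1, \dots, b_n\}$ denote the two color classes of the underlying $K_{n,n}$, and recall that $G$ additionally contains the path edges $\{b_i, b_{i+1}\}$ for $1 \leq i \leq n-1$. Consider the path-shaped decomposition whose bags are $X_i := \{a_i\} \cup B$ for $i = 1, \dots, n$, arranged along a path $X_1 - X_2 - \dots - X_n$. Each bag has $n+1$ vertices, so the width is $n$. This is a valid tree decomposition: every $K_{n,n}$-edge $\{a_i, b_j\}$ lies in $X_i$; every path edge $\{b_j, b_{j+1}\}$ lies in every bag; each $b_j$ appears in all bags, giving a connected subpath; and each $a_i$ appears in the single bag $X_i$. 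To verify the matching requirement, within $G[X_i]$ I take the bipartite edge $\{a_i, b_1\}$ together with the consecutive path edges $\{b_2, b_3\}, \{b_4, b_5\}, \dots$. When $n$ is odd this is a perfect matching of $X_i$ of size $(n+1)/2$; when $n$ is even it covers all of $X_i$ except $b_n$, which is the single uncovered vertex permitted in a bag of odd size $n+1$. Hence $\mtw(G) \leq n$.

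The main subtlety is not in either bound individually but in recognizing that the added $P_n$ on $B$ is precisely what enables the matched condition to be met at width $n$: the path edges let the entire $B$-side be internally paired up, freeing the one $A$-vertex of each bag to be matched across the bipartition via an edge of the original $K_{n,n}$. Without these additional edges, the only matching edges available inside a bag of the form $\{a_i\} \cup B$ would be the single cross-edge $\{a_i, b_j\}$, which is why $\mtw(K_{n,n})$ jumps up to $2n-2$ in the preceding proposition.
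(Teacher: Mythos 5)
Your proof is correct. The paper states this as an observation and provides no proof of its own, so there is nothing to compare against; your argument fills the gap cleanly. The lower bound via $\mtw(G) \geq \tw(G) \geq \tw(K_{n,n}) = n$ is sound, and the explicit path-shaped decomposition with bags $X_i = \{a_i\} \cup B$ is valid: it is a tree decomposition of width $n$, and in each bag the cross-edge $\{a_i,b_1\}$ together with alternating path edges $\{b_2,b_3\}, \{b_4,b_5\},\dots$ yields a perfect matching when $n$ is odd, or a near-perfect matching leaving only $b_n$ uncovered when $n$ is even, with $b_n$ adjacent to the matched vertex $b_{n-1}$. Both parity cases satisfy Definition~\ref{def:matched-tree-decomposition-treewidth}. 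Your closing remark correctly identifies the structural point the observation is meant to illustrate: adding the path on $B$ supplies intra-bag matching edges that $K_{n,n}$ lacks, which is exactly why matched treewidth fails to be monotone under subgraph inclusion.
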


We show how we can use structural theorems about matched treewidth to prove algorithmic upper bounds. For example, to count the number of $P_{10}$ subgraphs, we only have to show that all graphs in the spasm of $P_{10}$ have low matched treewidth. The spasm of $P_{10}$ is a large set that contains more than $300$ graphs. Indeed, it is possible to analyze the matched treewidth for each of these graphs individually. However, it would be better if we have theorems that eliminate such tedious work.

We derive some structural theorems for low values of matched treewidth. Graphs with matched treewidth $1$ are exactly trees. We also show $\tw(C_5) = 2$ and $\mtw(C_5) = 3$.

We characterize the matched treewidth of partial $2$-trees using forbidden induced minors (See Definition~\ref{def:minor-ind-minor})  wherever possible. We show that $C_5$ is exactly the obstruction that forces higher matched treewidth for partial 2-trees.
\begin{restatable}{theorem}{twmtwtwo}
    \label{thm:twmtwtwo}%
    For any partial 2-tree $G$, the graph $G$ is $C_5$-induced-minor-free if and only if $\mtw(G) = 2$.
\end{restatable}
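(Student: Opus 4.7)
My plan is to prove the equivalence in two directions, crucially using the fact already noted that $\mtw(C_5) = 3$.

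For the ``only if'' direction, I would argue the contrapositive: if $C_5$ is an induced minor of $G$, then $\mtw(G) \geq 3$. The cleanest route is to prove a monotonicity lemma asserting that induced minors do not increase matched treewidth. Starting with a matched tree decomposition of $G$ of width $t$, I handle each elementary operation separately. Vertex deletion is done by removing the vertex from every bag in which it occurs and re-selecting matching edges locally when necessary. Edge contraction of $\{u,v\}$ into a new vertex $w$ is handled by identifying $u$ and $v$ in every bag containing them: if $\{u,v\}$ was a matching edge of a bag then $w$ becomes unmatched but the bag shrinks by one; otherwise a matching edge through $u$ or $v$ survives as an edge incident to $w$. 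Iterating yields $\mtw(G) \geq \mtw(C_5) = 3$, giving the desired contradiction.

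For the ``if'' direction, I reduce to the biconnected case via the block-cut decomposition; each biconnected block is itself a partial 2-tree and inherits the $C_5$-induced-minor-free property. My central structural claim is that a biconnected partial 2-tree that is $C_5$-induced-minor-free must be a single edge, a triangle, a $C_4$, or a ``book'' of triangles (and at most one $C_4$) glued along a shared edge. Any richer series-parallel structure would produce a long internal face that contracts to an induced $C_5$. For each of these small blocks I exhibit a matched tree decomposition of width at most $2$ directly, and then glue the block-wise decompositions along cut vertices in the block-cut tree to obtain the desired matched tree decomposition of $G$.

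The principal obstacle is the structural characterization in the ``if'' direction: ruling out $C_5$ as an induced minor (rather than merely as an induced subgraph) constrains the series-parallel structure in subtle ways, since long paths and stacks of $4$-cycles can avoid $C_5$ as an induced subgraph but not as an induced minor. A careful induction on the SPQR tree of each biconnected block, tracking which cycle lengths can appear after plausible edge contractions, is the route I would follow. A secondary subtlety is that $\mtw$ is not monotone under arbitrary subgraph containment (as the earlier observation about $K_{n,n}$ and its supergraphs shows), so one must specifically use the fact that induced minors allow only vertex deletions and edge contractions, and never arbitrary edge deletions, to make the monotonicity lemma of the first direction go through.
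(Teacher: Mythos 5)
Both directions of your proposal have gaps, and the first one is a flat contradiction with facts established elsewhere in this very paper.

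For your ``only if'' direction you propose a general monotonicity lemma: if $H$ is an induced minor of $G$, then $\mtw(H) \leq \mtw(G)$. This is \emph{false}. The paper itself exhibits graphs $Y$ and $Z$ (Figures~\ref{fig:Y} and~\ref{fig:Z}) where $Y$ is an induced subgraph of $Z$ (hence an induced minor), yet $\mtw(Y) = 5 > 4 = \mtw(Z)$. Your sketch for vertex deletion (``re-selecting matching edges locally when necessary'') does not survive scrutiny precisely because there may be nothing local to re-select: if $v$ is removed from a bag in which $v$ was a matching endpoint, its former partner can become unmatched and non-adjacent to everything remaining in the bag, and fixing this requires pulling in new vertices, which can push the width up. The paper sidesteps this by \emph{not} claiming the derived tree decomposition of $C_5$ is matched. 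It instead contracts and deletes in the bags, observes that any surviving size-3 bag must still contain a $P_3$ (an edge survives contraction), and then shows directly that a width-2 tree decomposition of $C_5$ in which every size-3 bag induces a $P_3$ and no bag consists of two non-adjacent vertices cannot exist. That argument is specific to $C_5$; you cannot replace it with a blanket monotonicity statement.

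Your ``if'' direction relies on the structural claim that every biconnected, $C_5$-induced-minor-free partial $2$-tree is an edge, a triangle, a $C_4$, or a book of triangles sharing an edge. This is also false. The fan $F_k$ (a vertex $c$ joined to every vertex of a path $1,\dotsc,k$) is a $2$-tree, is biconnected for $k \geq 2$, and is $C_5$-induced-minor-free for every $k$: any connected part in a potential $C_5$-model that contains $c$ is adjacent to every other part, which is impossible in $C_5$, while deleting $c$ leaves only a path. Yet for $k \geq 4$ the fan is none of your listed shapes. The paper's route is genuinely different: take a minimal counterexample $G$, prove it is $2$-connected and hence series-parallel, pass to a $2$-tree supergraph $G'$, use $C_5$-induced-minor-freeness to show every edge of $G'$ absent in $G$ has a common neighbor in $G$ (Claim~\ref{claim:common}), and conclude that every triangle of $G'$ contains a $P_3$ of $G$, so the standard tree decomposition of $G'$ is already a matched tree decomposition of $G$ of width $2$. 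You should study that argument; the common-neighbor lemma and the smcc machinery it feeds into are the actual content of this direction and are reused for Theorem~\ref{thm:xminorfree}.
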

Notice that $\tw(G) = 2$ yields $O(n^3)$-time algorithms for counting homomorphisms. Even if $\mtw(G) = 3$, we obtain $\otilde(m^2)$-time algorithms for counting homomorphisms which is an improvement for sparse graphs. Does all treewidth $2$ graphs have matched treewidth at most $3$? No. The graph $X$ in Figure~\ref{fig:x} has treewidth $2$ and matched treewidth $4$ (See Observation~\ref{xmtw}). In fact, we can prove that $X$ is exactly the obstruction that forces treewidth $2$ graphs to have matched treewidth $4$.
\begin{figure}[ht]
\centering
 \begin{tikzpicture}=[circle,draw,thick,fill=white,inner sep=0.5mm]

	 \node[vertex, label=$u_0$] (u1) at (60+30:2) {};
	 \node[vertex, label=$u_1$] (u2) at (60*2+30:2) {}; 
	 \node[vertex, label=left:$u_2$] (u3) at (60*3+30:2) {}; 
	 \node[vertex, label=below:$u_3$] (u4) at (60*4+30:2) {}; 
	 \node[vertex, label=right:$u_4$] (u5) at (60*5+30:2) {}; 
	 \node[vertex, label=$u_5$] (u6) at (60*6+30:2) {}; 
	 
	 \node[vertex, label=$u'_1$] (v1) at (120+30:1) {}; 
	 \node[vertex, label=$u'_3$] (v2) at (240+30:1) {}; 
	 \node[vertex, label=$u'_5$] (v3) at (30:1) {};

	\draw (u1)--(u2)--(u3)--(u4)--(u5)--(u6)--(u1);
	\draw  (u1)--(v1)--(u3)--(v2)--(u5)--(v3)--(u1);
\end{tikzpicture}
\caption{\centering The graph $X$.}\label{fig:x}
\end{figure}

\begin{restatable}{theorem}{xminorfree}\label{thm:mtwHMF}
    \label{thm:xminorfree}%
    For any partial 2-tree $G$, the graph $G$ is $X$-induced-minor-free if and only if $\mtw(G) \leq 3$.
\end{restatable}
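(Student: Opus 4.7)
My plan is to prove both directions following the pattern of Theorem~\ref{thm:twmtwtwo}. For the ($\Leftarrow$) direction, I would first record a monotonicity lemma: if $H$ is an induced minor of $G$, then $\mtw(H) \leq \mtw(G)$. Given a matched tree decomposition of $G$, we contract each branch set to a single representative vertex inside the bags, and the matching edge previously assigned to that representative (inside the branch set) is replaced by any edge from the representative to a neighbor fixed outside the branch set, which is present in $H$. This establishes the direction by contrapositive using Observation~\ref{xmtw}, which gives $\mtw(X) = 4$.

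For the ($\Rightarrow$) direction, I rely on the structural decomposition of partial 2-trees. A 2-connected partial 2-tree is series-parallel, so it admits an SPQR-style recursive decomposition along 2-separators into series and parallel compositions of smaller $(s,t)$-graphs with single edges as atoms. The plan is to induct on this decomposition and construct, for each $X$-induced-minor-free such graph, a matched tree decomposition of width at most $3$ whose ``root'' bag contains both terminals $s$ and $t$ together with a privately chosen edge incident to each, so that the root bag of size at most $4$ is saturated by a matching of size $2$ and can be cleanly combined with the parent's root bag.

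The main technical obstacle lies in the parallel-composition step, where the absence of $X$ as an induced minor first constrains the structure. Informally, $X$ is the graph consisting of three vertices each pairwise linked by two internally-disjoint length-$2$ paths; I would argue that if enough recursively produced children co-exist at a parallel node, such a configuration can be exhibited as an induced minor of $G$ by picking internal witness vertices from the children. Ruling this out leaves only a bounded list of parallel shapes at each node, each of which can be handled by an explicit size-$4$ matched bag. The analogous series-node step is simpler since only two components meet at a shared terminal. Verifying that no step of the construction accidentally introduces $X$ as an induced minor of the composed graph, and that the matching-saturation invariant is preserved after each composition, is where the bulk of the bookkeeping goes.
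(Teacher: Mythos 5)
Your proof of the $(\Leftarrow)$ direction has a fatal flaw: the monotonicity lemma ``if $H$ is an induced minor of $G$, then $\mtw(H) \leq \mtw(G)$'' is false, even inside partial $2$-trees. The paper exhibits a graph $Y$ (Figure~\ref{fig:Y}) with $\tw(Y)=2$ and $\mtw(Y)=5$ that is an induced subgraph (hence induced minor) of a graph $Z$ (Figure~\ref{fig:Z}) with $\mtw(Z)=4$; this is exactly why the paper says that the class $\mtw \le 4$ of partial $2$-trees admits no forbidden-induced-minor characterization. The defect in your sketch is concrete: if a matching edge $uv$ of a bag has both endpoints in the same branch set, the merged representative may have no neighbour left inside the contracted bag, and the replacement edge to a ``neighbor fixed outside the branch set'' that you propose need not lie in that bag, so adding it breaks the width bound. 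The paper sidesteps the (false) general lemma by arguing only about the particular graph $X$ and width $3$: after contracting, every surviving bag of size $4$ remains matched automatically, and every surviving bag of size $3$ contains at least one edge; one then pads some bags top-down by at most one vertex while preserving width $3$, contradicting Lemma~\ref{xmtw}. Without that case analysis (or some replacement), your $(\Leftarrow)$ direction does not go through.

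For the $(\Rightarrow)$ direction you are on a track similar in spirit to the paper's (both exploit the series-parallel decomposition of $2$-connected partial $2$-trees and ultimately reduce to $X$-subdivisions appearing when three parallel ``non-edge'' components co-exist), but you are missing the key intermediate object that makes the bookkeeping tractable. The paper does not directly maintain an invariant on a size-$4$ matched root bag through the SPQR recursion; instead it proves (Lemma~\ref{lem:MFThenSmcc}) that every $X$-induced-minor-free partial $2$-tree admits a \emph{special minimum chordal completion} — a chordal completion of treewidth $2$ in which no triangle comes from an independent triple of $G$ — and then observes (Lemma~\ref{obs:hasEdge}, a refinement of Theorem~\ref{thm:mtwvstw}) that a tree decomposition whose full-size bags all contain an edge can be made matched of width $2\tw-1 = 3$. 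Your proposed invariant, ``root bag contains both terminals together with a privately chosen edge incident to each,'' is not obviously achievable — for example when $s$ and $t$ are adjacent, or when a terminal has no internal neighbour after some parallel children are pruned — and you do not explain how to repair it. The smcc lemma cleanly absorbs all of these cases.
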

This theorem implies that all $X$-induced-minor-free, treewidth $2$ patterns have $\otilde(m^2)$-time homomorphism counting algorithms. This is an improvement for sparse host graph even over the fast matrix multiplication based algorithm given by Curticapean, Dell, and Marx \cite{10.1007/978-3-642-32241-9_4} for counting homomorphisms from treewidth $2$ graphs that runs in $O(n^\omega)$-time. Since the spasm of $P_{10}$ does not contain any treewidth $4$ graph or graph with an $X$-induced minor, we can show that there is an $\otilde(m^2)$-time algorithm for counting subgraph isormophisms of all paths on at most $10$ vertices by showing that all treewidth $3$ graphs in the spasm of $P_{10}$ has matched treewidth $3$. There are only $18$ such graphs. Analyzing their matched treewidth yields the following theorem:
\begin{restatable}{theorem}{pkcountmtw}
    \label{thm:pkcountmtw}%
    Given an $m$-edge graph $H$ as input, we can count the number of $P_k$ subgraphs, where $k \leq 10$, in $\otilde(m^2)$-time.
\end{restatable}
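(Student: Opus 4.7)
The plan is to reduce to counting homomorphisms via the spasm identity and then apply Theorem~\ref{thm:counthommtw}. For any fixed pattern $G$, the number of $G$-subgraphs in a host $H$ equals a fixed rational linear combination of $\ghom(G', H)$ over $G' \in \spasm(G)$, where the coefficients depend only on $G$. Since $k \leq 10$ is constant, $\spasm(P_k)$ is a constant-sized collection of constant-sized graphs and the coefficients are precomputable. It therefore suffices to prove that every $G' \in \spasm(P_{10})$ satisfies $\mtw(G') \leq 3$: Theorem~\ref{thm:counthommtw} then evaluates each $\ghom(G', H)$ in $\otilde(m^{\lceil (3+1)/2 \rceil}) = \otilde(m^2)$ time, and summing a constant number of such contributions gives the claimed bound. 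The paths $P_k$ with $k < 10$ are handled analogously; their spasms are smaller and no worse in matched treewidth.

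The structural part splits by the treewidth of $G'$. First, I would verify by enumeration of $\spasm(P_{10})$ the two global facts the introduction flags: no graph in the spasm has treewidth $\geq 4$, and no graph in the spasm of treewidth $\leq 2$ contains $X$ as an induced minor. Given the second fact, Theorem~\ref{thm:xminorfree} immediately delivers $\mtw(G') \leq 3$ for every spasm member with $\tw(G') \leq 2$, and the first fact reduces the remaining work to the graphs of treewidth exactly $3$.

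For those remaining $18$ treewidth-$3$ members of $\spasm(P_{10})$, I would explicitly construct a matched tree decomposition of width $3$ for each, in the sense of Definition~\ref{def:matched-tree-decomposition-treewidth}. To keep the case analysis manageable, I would group the $18$ graphs by the combinatorial pattern of the merging operation that produced them from $P_{10}$ (identifying symmetric pairs of interior vertices, identifying the two endpoints, identifying an endpoint with an interior vertex, etc.) and reuse decompositions across isomorphism classes. Each decomposition consists of a constant number of bags of size $\leq 4$, so verification of both the tree-decomposition axioms and the matchability of each bag is a finite check.

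The main obstacle is precisely this last step. Because $\mtw$ is not subgraph-monotone (as the paper emphasizes through $K_{n,n}$), no generic upper bound from treewidth can short-circuit the enumeration; each of the $18$ graphs needs an explicitly exhibited decomposition together with a verified matching in each bag. Everything else -- enumerating $\spasm(P_k)$ for $k \leq 10$, computing the spasm coefficients, and ruling out treewidth $\geq 4$ or an $X$-induced minor among the treewidth-$\leq 2$ members -- is routine finite-case work that is absorbed into the hidden constant of $\otilde(m^2)$.
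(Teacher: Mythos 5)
Your proposal matches the paper's argument essentially exactly: the spasm decomposition, Theorem~\ref{thm:counthommtw} giving $\otilde(m^2)$ per homomorphism count once $\mtw \leq 3$, Theorem~\ref{thm:xminorfree} disposing of the treewidth-$\leq 2$ spasm members, a finite check on the $18$ treewidth-$3$ members, and the inclusion $\spasm(P_k) \subseteq \spasm(P_{k+1})$ for smaller paths. The one simplification you could borrow from the paper is that $X$ has $12$ edges while every member of $\spasm(P_{10})$ has at most $9$, so $X$ (and likewise every forbidden minor for treewidth $4$, each of which has at least $10$ edges) is ruled out by a one-line edge count rather than by enumeration.
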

To the best of our knowledge, the best known path counting algorithms take $\Omega(n^4)$ time for paths on $10$ vertices. Therefore, our algorithm is a significant improvement for sparse host graphs and no worse than the best known algorithm for dense host graphs. An easy corollary of the proof of this result is given below:
\begin{restatable}{theorem}{ckcountmtw}
    \label{thm:ckcountmtw}%
    Given an $m$-edge graph $H$ as input, we can count the number of cycles of length at most $9$ in $\otilde(m^2)$-time.
\end{restatable}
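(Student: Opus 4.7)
The plan is to reduce cycle counting to homomorphism counting via the standard spasm identity and then borrow the matched-treewidth analysis already carried out in the proof of Theorem~\ref{thm:pkcountmtw}. Recall that for any pattern $G$ there is a fixed rational linear combination $\#\gsub(G,H) = \sum_{F \in \spasm(G)} \alpha_F \cdot \#\ghom(F,H)$, where the coefficients $\alpha_F$ depend only on $G$. Since $|\spasm(C_k)|$ is a constant for each fixed $k \leq 9$, it suffices to compute $\#\ghom(F,H)$ in $\otilde(m^2)$ time for every $F \in \spasm(C_k)$, and then take the appropriate linear combination.

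The key observation I would verify is the inclusion $\spasm(C_k) \subseteq \spasm(P_{k+1})$ for each $k \geq 3$. This holds because $C_k$ itself arises from $P_{k+1}$ by merging the two endpoints, which are at distance $k \geq 3$ and hence independent; any further sequence of independent-set merges producing an $F \in \spasm(C_k)$ can thus be simulated starting from $P_{k+1}$ by first performing the endpoint-merge. Consequently, each $F \in \spasm(C_k)$ with $k \leq 9$ also lies in $\spasm(P_{k+1})$ with $k+1 \leq 10$.

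Next I would invoke the fact, established inside the proof of Theorem~\ref{thm:pkcountmtw}, that every $F \in \spasm(P_j)$ with $j \leq 10$ satisfies $\mtw(F) \leq 3$. There, the argument combines Theorem~\ref{thm:xminorfree}, which handles all $X$-induced-minor-free partial 2-trees appearing in the spasm, with a finite case analysis of the treewidth-3 graphs that remain (the 18 explicit graphs for $P_{10}$, and their analogues for shorter paths). Through the spasm inclusion above, this matched-treewidth bound transfers verbatim to $\spasm(C_k)$ for every $k \leq 9$.

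Plugging $t = 3$ into Theorem~\ref{thm:counthommtw} yields an $\otilde(m^{\lceil 4/2 \rceil}) = \otilde(m^2)$ algorithm for each homomorphism count $\#\ghom(F,H)$, and summing the $O(1)$ terms in the spasm decomposition gives the claimed $\otilde(m^2)$ bound for counting $C_k$ subgraphs. The only step requiring thought is the verification of the spasm inclusion, which is immediate from the definition of spasm as iterated independent-set merging; all of the combinatorial work---bounding $\mtw$ on each quotient of a short path---is inherited from Theorem~\ref{thm:pkcountmtw}, which is why the statement is advertised as an easy corollary.
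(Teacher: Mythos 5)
Your proposal is correct and follows essentially the same route as the paper's proof, which is simply the one-line observation that $\spasm(C_k) \subset \spasm(P_{10})$ for $k \leq 9$ combined with the matched-treewidth analysis from Theorem~\ref{thm:pkcountmtw}. The only difference is that you spell out \emph{why} the spasm inclusion holds (merging the two endpoints of $P_{k+1}$ yields $C_k$, and spasm is closed under taking spasms of its members), which the paper leaves implicit.
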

These cycle counting algorithms are an improvement on sparse graphs over the $O(n^\omega)$-time algorithms for cycles of length at most $8$ given by Alon, Yuster and Zwick \cite{Alon1997}.

We also show how to use our improved homomorphism polynomial construction algorithm to speed up detection of induced subgraphs. In particular, we show the following:
\begin{restatable}{theorem}{csixind}
    \label{thm:csixind}%
    Given an $m$-edge host graph as input, we can find an induced $C_6$ or report that none exists in $\otilde(m^2)$-time.
\end{restatable}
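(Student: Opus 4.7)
The plan is to apply the induced subgraph detection framework of Bl\"aser, Komarath, and Sreenivasaiah~\cite{DBLP:conf/fsttcs/BlaserKS18} on top of our matched-treewidth-based homomorphism polynomial construction (Theorem~\ref{thm:counthommtw}). The first step is to establish that $\mtw(C_6) \leq 3$. Since $C_6$ has a perfect matching $\{v_1v_2,\, v_3v_4,\, v_5v_6\}$, one can build a matched tree decomposition whose bags each contain one matching edge together with the connector vertices from an adjacent matching edge; by Theorem~\ref{thm:counthommtw} this yields an arithmetic circuit for the $C_6$-homomorphism polynomial of the host graph $H$ constructible in $\otilde(m^2)$ time.

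Next, following the framework of~\cite{DBLP:conf/fsttcs/BlaserKS18}, I would augment the construction so that it tracks injective homomorphisms into \emph{induced} copies of $C_6$, not merely homomorphisms. The augmentation attaches, at each bag of the matched tree decomposition, a small indicator factor that inspects the candidate vertex assignment and zeros out any configuration in which a chord pair of $C_6$ maps to an actual edge of $H$. Detecting the existence of an induced $C_6$ then reduces to detecting a target multilinear monomial (on vertex variables) in the resulting polynomial, which can be done with high probability via the Williams evaluation trick of working over a commutative algebra whose randomly chosen elements square to zero. The overall time remains $\otilde(m^2)$: the arithmetic circuit has size $\otilde(m^2)$, each algebra operation has constant cost (since the pattern has constantly many vertices), and only $O(\log n)$ randomized restarts are needed to boost the success probability.

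The main obstacle is structural: one must ensure that every chord pair of $C_6$ appears together in some bag of the chosen matched tree decomposition, so that each non-edge constraint is locally enforceable at the bag level. With bags of size at most four, each bag covers at most $\binom{4}{2}=6$ pairs, and since $C_6$ has $15$ pairs in total (six edges and nine chord non-edges), coverage is possible only if the bags are chosen carefully; moreover, the subtree-coverage condition of tree decompositions forces some vertices to be duplicated across adjacent bags, which restricts the available configurations. A careful direct construction, possibly augmented with one additional connector bag of modest size if some antipodal pair is not covered by the natural decomposition, handles this within the width budget. Once the decomposition meets this bag-coverage property, plugging it into the two frameworks above gives the claimed $\otilde(m^2)$-time algorithm.
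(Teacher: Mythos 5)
Your proposal diverges from the paper's approach, and the divergence introduces a genuine gap. You correctly compute $\mtw(C_6)=3$ and correctly invoke Theorem~\ref{thm:counthommtw} and the Koutis--Williams multilinear-detection trick, but the core step---``attach, at each bag of the matched tree decomposition, an indicator factor that zeros out any configuration in which a chord pair of $C_6$ maps to an actual edge of $H$''---cannot be carried out within the stated width budget. To enforce a non-edge constraint on a pair $\{u,v\}$ of pattern vertices locally at a bag, that pair must co-occur in some bag. You note this yourself and call it the ``main obstacle,'' but the obstacle is not surmountable by ``careful'' bag choice: requiring \emph{every} pair of $C_6$'s six vertices (all $15$ of them, including the $9$ chords) to co-occur in some bag is precisely the definition of a tree decomposition of $K_6$, whose treewidth is $5$. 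So any tree decomposition with this coverage property has a bag of size $6$, and the resulting circuit size is $\otilde(m^3)$, not $\otilde(m^2)$. Adding ``one additional connector bag of modest size'' cannot help, because connectivity of occurrence-subtrees forces the large bag regardless.

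There is a second, independent problem you do not address. Even if the non-edge constraints could be enforced, the resulting polynomial would count each induced $C_6$ of $H$ with coefficient $|\gaut(C_6)| = 12$, which is even, so over the characteristic-$2$ algebra used for multilinear detection the target monomials vanish identically and the test always reports ``no.'' You would need to break the automorphism symmetry (e.g., by restricting the homomorphisms enumerated in Algorithm~\ref{alg:hom} so each induced copy is counted an odd number of times), but your proposal does not do this.

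The paper avoids both problems by taking a different route. Instead of enforcing non-edges inside a single homomorphism-polynomial computation, it uses Equation~\ref{eq:gindgsub}, which expresses $\gind_{C_6}(E(H))$ as an alternating sum of $\gsub_{G'}[H](x_e{=}1)$ over all $6$-vertex supergraphs $G'$ of $C_6$. Working modulo $2$ kills every term where $\#\gsub_{C_6}[G']$ is even, leaving a short list of $G'$ (Figure~\ref{fig:c6ind}). For each such $G'$, one computes a constrained homomorphism polynomial via Algorithm~\ref{alg:hom}---adding symmetry-breaking predicates on $\sigma$ at carefully chosen bags so that the multilinear part has \emph{odd} coefficients---and each $G'$ individually has $\mtw(G') \leq 3$, so each circuit is $\otilde(m^2)$-size. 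This sidesteps the bag-coverage issue entirely (there is no need for non-edges of $G'$ to co-occur in bags) and handles the automorphism parity explicitly. To repair your proof, you would need to switch to this supergraph-sum-mod-2 framework, or argue concretely how to get odd coefficients and local non-edge enforcement without a width-$5$ bag, which does not appear possible.
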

This algorithm is no worse than the $O(n^4)$ time algorithm that can be derived using the techniques by Bl\"aser, Komarath, and Sreenivasaiah \cite{DBLP:conf/fsttcs/BlaserKS18}. For sparse graphs, our algorithm provides a quadratic improvement. We also show the following:
\begin{restatable}{theorem}{pkbarind}
    \label{thm:pkbarind}%
    Given an $m$-edge host graph as input, we can find an induced $\overline{P_k}$ or report that none exists in $\otilde(m^{(k-2)/2})$-time.
\end{restatable}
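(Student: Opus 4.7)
The plan is to prove $\mtw(\overline{P_k}) \leq k - 3$ and then apply Theorem~\ref{thm:counthommtw} together with the induced-subgraph-detection framework of Bl\"aser, Komarath, and Sreenivasaiah~\cite{DBLP:conf/fsttcs/BlaserKS18}. A quick sanity check: with $\mtw = k - 3$, Theorem~\ref{thm:counthommtw} yields running time $\otilde(m^{\lceil (k-2)/2 \rceil})$, which is exactly the bound claimed by Theorem~\ref{thm:pkbarind}.

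For the structural bound, I would construct an explicit matched tree decomposition of $\overline{P_k}$ of width $k - 3$. The key combinatorial fact is that $v_i$ and $v_j$ are adjacent in $\overline{P_k}$ iff $|i-j|\geq 2$, which yields a strong matching property: for any four vertices $v_a, v_b, v_c, v_d$ with $a<b<c<d$, the pair of edges $\{v_av_c, v_bv_d\}$ is a perfect matching of $\{v_a,v_b,v_c,v_d\}$ in $\overline{P_k}$, and for any three such vertices, $\{v_a, v_c\}$ is an edge. Hence every bag of size at least three in a tree decomposition of $\overline{P_k}$ is automatically matched, and bags of smaller size can be absorbed into a tree neighbor without increasing the width. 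The existence of a width-$(k-3)$ tree decomposition follows from $\tw(\overline{P_k}) = k-3$, the lower bound witnessed by a $K_{k-2}$ minor (odd-indexed singletons together with an appropriate partition of the even-indexed vertices into connected blocks), and the upper bound by an explicit elimination ordering.

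Given this bound, Theorem~\ref{thm:counthommtw} produces in time $\otilde(m^{\lceil (k-2)/2 \rceil})$ an arithmetic circuit computing the homomorphism polynomial from $\overline{P_k}$ to $H$. To detect induced copies I would invoke the BKS recipe: augment the dynamic program so that, for each of the $k-1$ non-edges $\{u,v\}$ of $\overline{P_k}$ (which are exactly the edges of the original path), the contribution of any partial homomorphism $\phi$ is multiplied by the factor $(1 - a_{\phi(u)\phi(v)})$, where $a_{xy}$ indicates whether $xy \in E(H)$. Since $\overline{P_k}$ has only $O(k)$ non-edges, this correction increases the circuit size by only a constant factor. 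Finally, evaluate the augmented polynomial at a random point in a ring where $a^2=0$ (as in Williams~\cite{10.1016/j.ipl.2008.11.004}) to test for a surviving multilinear monomial, certifying the existence of an induced $\overline{P_k}$ with high probability.

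The hardest part will be producing the matched tree decomposition with simultaneously: (i) width $k-3$, (ii) every bag matched, and (iii) enough structure to anchor each non-edge within some bag so that the correction factor can be applied locally by the dynamic program. The strong-matching property of $\overline{P_k}$ makes (ii) essentially free once (i) is achieved; handling (iii) may require augmenting the base decomposition with $O(k)$ auxiliary bags, each holding a single non-edge together with a subset of an adjacent bag, which preserves the asymptotic bound.
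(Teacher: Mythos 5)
There is a genuine gap. The structural part of your plan — bounding $\mtw(\overline{P_k})\leq k-3$ — is on the right track, though your claim that ``every bag of size at least three is automatically matched'' is false: a bag consisting of exactly three consecutive path vertices $\{v_i,v_{i+1},v_{i+2}\}$ has the single edge $v_iv_{i+2}$ and the middle vertex is isolated from it, so it is not matched. Your strong-matching observation only applies directly to bags of size $\geq 4$. The paper avoids this by giving an explicit three-bag matched tree decomposition (for odd $k=2j+1$: the root omits $\{j,j+2\}$, its two children omit $\{j,j+1\}$ and $\{j+1,j+2\}$ respectively), which you should verify by hand rather than appealing to automatic matchedness.

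The more serious problem is the detection step. You propose to inject a factor $(1-a_{\phi(u)\phi(v)})$ into the DP for each non-edge $\{u,v\}$ of $\overline{P_k}$. But a factor can only be applied at a DP state where both images $\phi(u)$ and $\phi(v)$ are available, i.e.\ in a bag containing both $u$ and $v$. The non-edges of $\overline{P_k}$ are exactly the $k-1$ edges of $P_k$, and if every one of them lay in some bag, then every pair of vertices of $\overline{P_k}$ would lie in a bag, forcing a tree decomposition of $K_k$ of width $k-1$. That gives $\otilde(m^{k/2})$, not $\otilde(m^{(k-2)/2})$; the extra $m$-factor is precisely what you were trying to save, so the ``augment with $O(k)$ auxiliary bags'' idea cannot work while keeping the width at $k-3$. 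The paper does something quite different: it invokes the inclusion--exclusion identity for $\gind_G$ in terms of $\gsub_{G'}$ over all $k$-vertex supergraphs $G'$ and works modulo $2$, where all terms with $\#\gsub_{\overline{P_k}}[G']$ even vanish. For $\overline{P_k}$ this collapses the sum to the single term $G'=\overline{P_k}$, so $\gind_{\overline{P_k}}(E(H))\equiv\gsub_{\overline{P_k}}[H](x_e=1)\pmod 2$; no correction factors are applied inside the DP at all.

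Finally, you omit the automorphism issue, which is essential precisely because the computation is mod $2$. Each multilinear monomial of $\ghom_{\overline{P_k}}[H](x_e=1)$ appears with coefficient $|\gaut(\overline{P_k})|=2$, which is $0$ in characteristic $2$, so the raw homomorphism polynomial is useless. The paper breaks the reflection symmetry by restricting the DP to homomorphisms with $\sigma(1)<\sigma(k)$, yielding coefficient $1$ per induced copy. Without some such constraint your final polynomial would be identically zero and the algorithm would always report ``not found.''
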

This is also a quadratic improvement over the $O(n^{k-2})$ time algorithm given by Bl\"aser, Komarath, and Sreenivasaiah \cite{DBLP:conf/fsttcs/BlaserKS18} when the host graph is sparse. These algorithms are obtained by analyzing the matched treewidth \emph{and} the automorphism structure of a set of graphs derived from the pattern.

From a technical standpoint, we see our algorithms as a natural combination of pattern detection and counting algorithms that work well on sparse host graph such as the $\otilde(m)$ algorithm for counting $k$-walks, the $\otilde(m^{k/2})$ algorithm for counting $k$-cliques, and $\otilde(m^{(k-1)/2})$ time algorithm for detecting induced $K_k-e$ by Vassilevska \cite{Virginia08} and insights that improve the running-time on dense graphs by exploiting structural parameters treedepth and treewidth. We do not make use of fast matrix multiplication in any of our algorithms. Such algorithms, called \emph{combinatorial algorithms}, are also of general interest to the community.

\subsection{Related work}
Algorithms for counting \emph{induced subgraphs} are related to the problems that we consider but we do not consider any algorithms for it. This problem seems to be much harder. It is conjectured by Floderus, Kowaluk, Lingas, Lundell \cite{10.1007/978-3-642-32241-9_4} that counting induced subgraphs for any $k$-vertex pattern graph is as hard as counting $k$-cliques for sufficiently large $k$. Several works have considered the parameterized complexity of counting subgraphs (See \cite{DBLP:conf/focs/CurticapeanM14, 10.1145/3055399.3055502, roth_et_al:LIPIcs.ICALP.2021.108, DBLP:conf/stoc/FockeR22, DBLP:journals/algorithmica/DorflerRSW22}) where the primary goal is to obtain a dichotomy of easy vs hard based on structural graph parameters. Some works have also considered restrictions on host graphs such as $d$-degeneracy \cite{BR21}. The papers on parameterized complexity primarily chooses to focus on the growth-rate of the exponent for a family of patterns such as $k$-paths, $k$-cycles, or $k$-cliques and not the exact exponent for small graphs as we do in this paper.

\section{Preliminaries}
\label{sec:prelims}

We consider simple graphs. We refer the reader to Douglas West's textbook \cite{west_introduction_2000} for basic definitions in graph theory. We use the following common notations for some well-known graphs: $P_k$ for $k$-vertex paths, $C_k$ for $k$-cycles, $K_k$ for $k$-cliques, $K_k - e$ for $k$-clique with one edge missing, $K_{m,n}$ for complete bipartite graphs. A $k$-star is a $(k+1)$-vertex graph with a vertex $u$ adjacent to vertices $v_1, \dotsc, v_k$ and no other edges. A \emph{star graph} is a $k$-star for some $k$. For a graph $G$ and $S \subseteq V(G)$, we denote by $G[S]$ the subgraph of $G$ induced by the vertices in $S$.

\begin{definition}
\label{def:hom}
    Given two graphs $G$ and $H$, a graph homomorphism from $G$ to $H$ is a map $\phi: V(G) \rightarrow V(H)$ such that if $uv \in E(G)$, then $\phi(u)\phi(v) \in E(H)$.
\end{definition}

We denote by Hom($G$, $H$), the set of all homomorphisms from $G$ to $H$.

\begin{definition}\label{def:hompoly} 
Given two graphs $G$ and $H$, a homomorphism polynomial is an associated polynomial $\ghom_G[H]$ such that there is a one-to-one correspondence between its monomials and the homomorphisms from $G$ to $H$. We define:
\begin{equation*}
    \ghom_G[H] = \sum_{\phi \in Hom(G, H)} \prod_{u \in V(G)}y_{\phi(u)} \prod_{\{u, v\} \in E(G)} x_{\{\phi(u), \phi(v)\}}
\end{equation*}

Note that $H$ has a subgraph isomorphic to $G$ if and only if $P_G$ has a multilinear monomial.

\end{definition}

We say that a graph $G'$ is $(G_1, \dotsc, G_m)$-free if no induced subgraph of $G'$ is isomorphic to $G_i$ for any $i$. We denote the complement of a graph $G$ by $\overline{G}$. We have $V(G) = V(\overline{G})$ and the edges of $\overline{G}$ are exactly the non-edges of $G$ and vice versa.

We assume that all pattern graphs are connected. Since our primary algorithms are all based on counting homomorphisms, this does not lose generality as the number of homomorphisms from a disconnected pattern is just the product of the number of homomorphisms from its components.

\begin{definition}\label{def:ET} An \emph{elimination tree} $(T, r)$ of a connected graph $G$ is a tree rooted at $r \in V(G)$, where the sub-trees of $r$ are recursively elimination trees of the connected components of the graph $G \setminus r$. The elimination tree of an empty (no vertices or edges) graph is the empty tree. The depth of an elimination tree $(T,r)$ is defined as the maximum number of vertices over all root-to-leaf paths in $T$.  The \emph{treedepth} of a graph $G$, denoted $\td(G)$, is the minimum depth among all possible elimination trees of $G$. 
\end{definition}

Intuitively, treedepth measures the closeness of a given graph to star graphs which are exactly the connected graphs having treedepth $2$. We introduce a related notion called matched treedepth that seems to be helpful when designing algorithms for finding or counting patterns in sparse host graphs.
\begin{definition}\label{def:matchedET}
A \emph{matched elimination tree} for a graph $G$ is an elimination tree such that the following conditions are true for any root-to-leaf path $(v_1, \dotsc, v_k)$:

\begin{itemize}
    \item If $k$ is even, then $v_1v_2, v_3v_4, \dotsc, v_{k-1}v_k$ is a matching in $G$.
    
    \item If $k$ is odd, then there is some $i$ such that $E' = \{v_1v_2, \dotsc, v_{i-1}v_i, v_iv_{i+1}, \dotsc, v_{k-1}v_k\}$ and $E' \subseteq E(G)$. We have that $E' \setminus \{v_{i-1}v_i, v_iv_{i+1}\}$ is a matching on $(k-3)/2$ vertices.
\end{itemize}

The \emph{matched treedepth} of a graph $G$, denoted $\mtd(G)$, is the minimum depth among all possible matched elimination trees of $G$.
\end{definition}

The matched treedepth is always finite (See Proposition~\ref{prop:mtdmm}).

\begin{definition}
\label{def:tree-decomposition-treewidth}%
A \emph{tree decomposition} of a graph $G$ is a pair $(T, B(t)_{t \in T})$ where $T$ is a tree and $B(t)$, called a \emph{bag}, is a collection of subset of vertices of $G$ corresponding to every node $t \in T$. 

\begin{itemize}
    \item (Connectivity Property) For all $v \in V(G)$, there is a node $t \in V(T)$ such that $v \in B(t)$ and all such nodes $t$ that contain $v$ form a connected component in $T$.
    \item (Edge Property) For all $e \in E(G)$, there is a node $t \in V(T)$ such that $e \subseteq B(t)$.
\end{itemize}
The \emph{width of a tree decompostion} $(T, B)$ is defined as the maximum bag size minus one, that is, $\max_{t \in T}|B(t) - 1|$. The \emph{treewidth} of a graph $G$, $tw(G)$, is the minimum possible width among all possible tree decompositions of $G$. 
\end{definition}
Intuitively, treewidth measures the closeness of the given graph to trees which are exactly the graphs with treewidth $1$. We introduce a related notion, called matched treewidth, closely related to treewidth, that seems to be useful for designing dynamic programming algorithms over sparse host graphs.
\begin{definition}\label{def:matched-tree-decomposition-treewidth}%
A \emph{matched tree decomposition} for a graph $G$ is a tree decomposition where for every bag in the tree decomposition, the subgraph of $G$ induced by the vertices in that bag has either a perfect matching or a matching where exactly one vertex $v$ in the bag is unmatched and $v$ is adjacent to some vertex in the matching. We call such bags \emph{matched}. The \emph{matched treewidth} of a graph $G$, $mtw(G)$, is the minimum possible width among all possible matched tree decompositions of $G$.
\end{definition}

Matched treewidth is finite for all graphs (See Proposition~\ref{prop:mtwmtd}). This is not trivial unlike treewidth because a single bag tree decomposition that contains all the vertices in the graph need not be matched.

We call a tree decomposition \emph{reduced} if no bag $B$ is a subset of another bag. Given any tree decomposition $T$, we can obtain a reduced tree decomposition $T'$ such that the width of $T'$ is at most the width of $T$. Moreover, all bags in $T'$ are also bags in $T$. This implies that if $T$ is matched, then $T'$ is matched as well.

There are several equivalent characterizations for treewidth. Below, we state the ones that we use in this paper.

\begin{definition}
  A \emph{$k$-tree} is a graph formed by starting with a $(k+1)$-clique and repeatedly adding a vertex connected to exactly $k$ vertices of the existing $(k+1)$-clique. A \emph{partial} of a graph $G$ is a graph obtained by deleting edges from $G$. The set of all graphs with treewidth at most $k$ is exactly the class of partial $k$-trees.
\end{definition}

We can construct a \emph{standard tree decomposition} $T$ for any $k$-tree as follows: The root bag of $T$ contains the vertices in the initial $(k+1)$-clique. Let $S$ be a $k$-sized subset of this clique such that a new vertex $v$ is added to the $k$-tree by connecting it to all vertices in $S$. Then, we add a sub-tree to $T$ that will be a standard tree decomposition of the $k$-tree constructed using $S\cup \{v\}$ as the starting $(k+1)$-clique.

A \emph{chordal completion} of a graph $G$ is a super-graph $G'$ of $G$ such that $G'$ has no induced cycles of length more than $3$. A chordal completion that minimizes the size of the largest clique is called \emph{minimum chordal completion}. The treewidth of a graph $G$ is the size of the largest clique in its minimum chordal completion.
  
Two paths $P_1$ and $P_2$ from $u$ to $v$ are \emph{internally disjoint} if and only if $P_1$ and $P_2$ do not have any common internal vertex.

A graph $G$ is $\emph{2-connected}$ or \emph{biconnected}, if for any $x\in V(G)$, $G - x$ is connected. Equivalently, for any two vertices in $G$, there are at least $2$ internally disjoint paths in $G$. 
\begin{definition} \label{def:sp-graph}
 A series-parallel graph is a triple $(G, s, t)$ where $s$ and $t$ are vertices in $G$. This class is recursively defined as follows:
  \begin{itemize}
      \item An edge $\{s, t\}$ is a series-parallel graph.
      \item (series composition) If $(G_1, s_1, t_1)$ and $(G_2, s_2, t_2)$ are series-parallel graphs, then the graph obtained by identifiying $s_2$ with $t_1$ is also series-parallel.
      \item (parallel composition) If $(G_1, s_1, t_1)$ and $(G_2, s_2, t_2)$ are series-parallel graphs, then the graph obtained by identifiying $s_1$ with $s_2$ and $t_1$ with $t_2$ is also series-parallel.
  \end{itemize}
  A graph has treewidth at most $2$ is if and only if all of its biconnected components are series-parallel graphs.
\end{definition}  

\begin{definition}\label{def:minor-ind-minor}
A graph $G$ is said to be a \emph{minor} of a graph $G^\prime$ if $G$ can be obtained from $G^\prime$ either by deleting edges/vertices,  or by contracting the edges. (The operation of contraction merges two adjacent vertices $u$ and $v$ in the graph and removes the edge $(u,v)$.) If $G$ is obtained from an induced subgraph of $G^\prime$ by contracting the edges, then it is said to be an \emph{induced minor} of $G^\prime$.
\end{definition}

A graph $G$ is called $G'$-induced-minor-free ($G'$-minor-free) if $G'$ is not an induced minor (resp. minor) of $G$.

An edge subdivision is an operation which deletes the edge $(u, v)$  and adds a new vertex $w$ and the edges $(u, w)$ and $(w, v)$. A graph $G^\prime$ obtained from $G$ by a sequence of edge subdivisions is said to be a \emph{subdivision} of $G$.

\subsection{Representation of graphs}
We assume the following time complexities for basic graph operations. Any representation that satisfies these is sufficient.
\begin{itemize}
    \item Given $u$ and $v$, it can be checked in $\otilde(1)$-time whether $uv$ is an edge.
    \item Iterating over all the edges $xy \in E(H)$ ordered by $x$ can be done in $\otilde(m)$-time, where $m$ is the number of edges in $H$.
\end{itemize}
These requirements are satisfied by the following adjacency-list representation. To represent a graph $H$, we store a red-black tree $T$ that contains non-isolated vertices of $H$ where vertices are ordered according to their labels. Consider a node in this tree that corresponds to a vertex $u$. This node stores another red-black tree $T_u$ that stores all neighbors of $u$ in $H$. Now, to check whether $uv$ is an edge or not, we perform a lookup for $u$ in $T$ followed by a lookup for $v$ in $T_u$ if $u$ was found. We can iterate over all edges $xy$ ordered by $x$ by performing an inorder traversal of $T$ where for each node $u$, we perform an inorder traversal of $T_u$.

If the pattern does not contain any isolated vertices, then we can ignore isolated vertices in the host graph as well. If the pattern is $G = G' + v$, where $v$ is an isolated vertex and $G'$ is any graph, then the number of homomorphisms from $G$ to $H$ is obtained by multiplying the number of homomorphisms from $G'$ to $H$ by $n$, where $n$ is the number of vertices in $H$. This can be calculated by simply storing the number of vertices of $H$ in the data structure.

\section{Matched treedepth}
\label{sec:mtd}

In this section, we introduce algorithms that count homomorphisms and subgraphs efficiently in constant space on sparse host graphs. The central theorem in this section is given below.

\counthommtd*

\begin{proof}
The algorithm is given in Algorithm~\ref{alg:count-hom-mtd}. We can compute the result needed by calling $\textsc{COUNT-HOM-MTD}(G, E, r, H, \phi)$, where $E$ is an elimination tree for $G$ of depth $d$, $r$ is the root vertex in $E$, and $\phi$ is the empty homomorphism. For simplicity of presentation, we assume that each root-to-leaf path in $E$ has an even number of vertices. Odd number of vertices in a root-to-leaf path is handled similarly.

We assume that the host graph $H$ is represented using a symmetric adjacency list representation. This is mainly to ensure that we can iterate over all edges $xy$ in $H$ ordered by $x$ in Line~\ref{line:mtdmainloop}.

First, we prove that the algorithm is correct. We claim that the call $\textsc{COUNT-HOM-MTD}$ $(G, E, v, H, \sigma)$ where the parameters are as specified in the algorithm returns the number of homomorphisms from $G_v$ to $H$ that extends $\sigma$. This is proved by an induction on the height of $v$ in $E$. Since $v$ is a top node, the base case is when the height is $2$. In this case, $G_v$ is a star graph and it is easy to see that the algorithm works. We now prove the inductive case. The variable $t$ computes the final answer. Denote by $s_{u, x}$ for vertex $x$ in $H$ the number of homomorphisms from $G_u$ to $H$ that extends $\tau = \sigma \cup \{v \mapsto x\}$. Notice that since $uv \in E(G)$, to extend $\tau$, the vertex $u$ must be mapped to some $y$ such that $xy \in E(H)$. Therefore, iterating over all such $y$ is sufficient. Notice that we can compute $t$ as $\sum_{\tau} \prod_u s_{u, x}$. However, this would need storing $|V(G)||V(H)|$ variables. By iterating over the edges of $H$ ordered by $x$, we can afford to reuse a single $s_u$ for different $x$ instead of keeping a separate $s_{u, x}$ for each $x$. Now, we show that $s_u$ correctly computes $s_{u, x}$ once the main loop finishes with an $x$. By the inductive hypothesis, the variable $c_w$ is the number of homomorphisms from $G_w$ to $H$ that extends $\sigma' = \sigma \cup \{ v \mapsto x, u \mapsto y\}$. Therefore, we have $s_{u, x} = \sum_y \prod_w c_w$. Line~\ref{line:sucompute} correctly computes this into $s_u$. Line~\ref{line:tcompute} correctly updates $t$ once an $x$ is finished. Finally, we reset $s_u$ to $0$ before processing the next $x$.

Now, we prove the running-time and space usage of the algorithm. Notice that the depth of the recursion is bounded by the depth of the elimination tree $E$ and each level of recursion stores only constantly many variables. Therefore, the space usage is constant. The main loop in Line~\ref{line:mtdmainloop} runs for $2m$ iterations. The inner-loops only have a constant number of iterations. Therefore, the recursive calls are made only $O(m)$ times. We process two levels of the elimination tree in a level. Therefore, the total running-time is given by $t(d) \leq O(m)t(d-2) + \otilde(1) = \otilde(m^{d/2})$.

\newpage
\begin{algorithm}
\caption{COUNT-HOM-MTD(G, E, v, H, $\sigma$)\label{alg:count-hom-mtd}}
\begin{algorithmic}[lines=1]
\Require{$G$ - The pattern graph}
\Require{$E$ - Matched elimination tree for $G$}
\Require{$v$ - A \emph{top} vertex in $E$}
\Require{$H$ - The host graph}
\Require{$\sigma$ - A partial homomorphism from the ancestors of $v$ to $H$}
\State $t\gets 0$
\State $s_u \gets 0$ for all children $u$ of $v$
\ForAll{edges $xy \in E(H)$ ordered by $x$}\label{line:mtdmainloop}
\ForAll{children $u$ of $v$ in $E$}
  \State $\sigma' \gets \sigma \cup \{v \mapsto x, u \mapsto y\}$
  \If{$\sigma'$ is an invalid homomorphism}
    \State \textbf{continue}
  \EndIf\label{SigmaPrimeCheck}
  \ForAll{children $w$ of $u$}
    \State $c_w \gets \text{COUNT-HOM-MTD}(G, E, w, H, \sigma')$
  \EndFor
  \State $s_u \gets s_u + \prod_w c_w$ \label{line:sucompute}
\EndFor\label{ChildLoop}
\If{$xy$ is the last edge on $x$}
  \State $t \gets t + \prod_u s_u$ \label{line:tcompute}
  \State $s_u \gets 0$ for all $u$ \label{line:sureset}
\EndIf\label{LastX}
\EndFor\label{MainLoop}
\State \Return $t$
\end{algorithmic}
\end{algorithm}
\end{proof}

A fundamental question regarding matched treedepth is whether it is finite for all graphs. It is, as the following proposition shows.
\begin{proposition}
    \label{prop:mtdmm}
    For any graph $G$, we have $\mtd(G)$ is at most 1 + the number of vertices in the smallest maximal matching in $G$.
\end{proposition}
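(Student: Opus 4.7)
The plan is to prove, by induction on $|M|$, the slightly stronger statement that for every connected graph $G$ and every maximal matching $M$ of $G$, there exists a matched elimination tree of $G$ of depth at most $2|M|+1$. The proposition follows by applying this statement to a smallest maximal matching, whose vertex count is $|V(M)|=2|M|$. The base case $|M|=0$ is immediate: maximality forces $E(G)=\emptyset$, and since $G$ is connected this means $G=K_1$, whose one-vertex elimination tree is trivially matched and has depth $1=2\cdot 0+1$.

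For the inductive step I would pick any edge $uv\in M$ and build a matched elimination tree rooted at $u$ with $v$ placed as a child of $u$. The key structural observation is that the maximality of $M$ forces $I:=V(G)\setminus V(M)$ to be an independent set. From this two facts drop out: first, every matching edge $xy\in M\setminus\{uv\}$ lies entirely inside one component of $G\setminus\{u,v\}$, because the edge $xy$ itself survives the deletion; second, if $D_1,\dots,D_p$ are the components of $G\setminus\{u,v\}$ and $M_j:=M\cap E(D_j)$, then $V(D_j)\setminus V(M_j)\subseteq I$ is still independent, so each $M_j$ is a maximal matching of $D_j$. In particular $\sum_j |M_j|=|M|-1$ and $|M_j|\le |M|-1$.

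To assemble the tree, for each $D_j$ I would select a vertex $r_j\in V(D_j)$ adjacent in $G$ to $u$ or to $v$ (such a vertex exists because $G$ is connected). If $r_j$ is adjacent to $v$, I would hang the subtree for $D_j$ off $v$; if $r_j$ is only adjacent to $u$, I would hang it off $u$. Applying the inductive hypothesis to $(D_j,M_j)$ yields a matched elimination tree of $D_j$ of depth at most $2|M_j|+1\le 2(|M|-1)+1$. Combining this with the matched edge $uv$ at the top, every root-to-leaf path through $D_j$ has length at most $2+(2|M_j|+1)\le 2|M|+1$, which is the bound we want.

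The main hurdle is that the inductive hypothesis has to be strong enough to furnish a sub-elimination-tree \emph{rooted at the specific vertex} $r_j$; only then does the edge from $r_j$ to its parent contribute correctly to the matched-path condition of Definition~\ref{def:matchedET}. When $r_j\in V(M_j)$ this is easy, since the recursion is free to start with the matched edge of $M_j$ containing $r_j$. The delicate case is when the only neighbors of $\{u,v\}$ inside $D_j$ lie in $I\cap V(D_j)$; there one chooses an $r_j\in I\cap V(D_j)$ and a neighboring matched edge $x_jy_j\in M_j$ (which must exist because $I\cap V(D_j)$ is independent in $D_j$), and uses $r_j$ as the pivot vertex permitted by the odd-length clause of Definition~\ref{def:matchedET} before resuming the recursion from $y_j$. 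A careful accounting of depths in this case still yields the claimed bound $2|M|+1$.
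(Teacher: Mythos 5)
Your recursive approach differs from the paper's: you remove a matched edge $\{u,v\}$ and recurse on the components of $G\setminus\{u,v\}$, whereas the paper builds the tree in one shot by laying all of $V(M)$ on a single root-to-leaf spine $v_1,w_1,\dotsc,v_m,w_m$ and hanging each independent vertex as a leaf off the deepest spine vertex adjacent to it. On the spine the consecutive pairs are exactly the matching edges of $M$, and a hung leaf either extends the last pair or acts as the pivot, so the matched condition is immediate and no induction is needed.

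Your write-up, however, leaves a real gap in the combining step, which your final sentence waves away. Consider a component $D_j$ attached only to $u$, so its subtree hangs off $u$ directly rather than under $v$. A root-to-leaf path through it reads $(u,r_j,x_1,x_2,\dotsc)$, and by induction the $D_j$-path $(r_j,x_1,x_2,\dotsc)$ is matched with $r_j$ paired to $x_1$, $x_2$ to $x_3$, and so on (possibly with a pivot). Prepending $u$ shifts the parity: when the $D_j$-path has odd length with its pivot beyond position $2$, the combined path has even length and must be perfectly matched as $(u,r_j),(x_1,x_2),(x_3,x_4),\dotsc$ --- but nothing in your inductive hypothesis guarantees that $x_1x_2$, $x_3x_4$, \ldots\ are edges of $G$. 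Demanding that the subtree be rooted at $r_j$, which you rightly identify as necessary, is not by itself sufficient. The missing idea is to strengthen the invariant to: \emph{every node of the tree you build is adjacent in $G$ to its parent}. Your construction actually enforces this ($r_j$ is chosen adjacent to $u$ or $v$; its first child is its $M_j$-partner or an adjacent matched vertex; and so on recursively), and once this holds, the matched condition follows for free on every root-to-leaf path: pair consecutive vertices on an even-length path, and on an odd-length path use position $2$ as the pivot, since $w_1w_2$, $w_2w_3$, and all subsequent consecutive pairs are edges. With that observation the parity bookkeeping dissolves and your depth accounting goes through; without it (or an equivalent patch) the proof is not complete.
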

\begin{proof}
    We partition $V(G)$ into a maximal matching $M = (v_1w_1, \dotsc, v_mw_m)$ and an independent set $\{u_1, \dotsc, u_k\}$. A matched elimination tree $T$ for $G$ can be constructed as follows: Put the vertices $v_1, w_1, \dotsc, v_m, w_m$ on a root-to-leaf path in that order. Let us call this path the spine. For each vertex $u_i$, make the lowest vertex in the spine adjacent to $u_i$ in $G$ its parent in $T$. It is easy to see that $T$ is a matched elimination tree and its depth is at most 1 + the number of vertices in $M$.
\end{proof}
Although, Proposition~\ref{prop:mtdmm} proves an upper-bound for matched treedepth. It is not very useful from an algorithmic perspective as it is easy to see that there is an $O(m^{k+1})$-time, constant space algorithm for counting patterns with maximal matchings of $k$ edges. Any smallest maximal matching in $C_{11}$, for example, has $4$ edges. However, the proof of Theorem~\ref{thm:ckcountmtd} given below shows that we can do better. The algorithm is based on the well-known technique of expressing subgraph count as a linear combination of homomorphism counts.

\begin{definition}
\label{def:spasm}
Let $\mathcal{I}$ be the set of all the independent sets in the graph $G$. For some $I \in \mathcal{I}$$, \merge(G,I)$ is the graph obtained by merging the vertices of $I$. Then
\begin{equation*}
\spasm(G) = \{G\} \cup \bigcup_{I \in \mathcal{I}} \spasm(\merge(G, I))
\end{equation*}
\end{definition}
For any pattern graph $G$, it turns out that the number of subgraph isomorphisms from $G$ to a host graph $H$ are just the linear combination of all possible graph homomorphisms from $\spasm{G}$ to $H$. That is, there exists constants $\alpha_{G'} \in \mathbb{Q}$ such that:
\begin{equation}\label{eq:subcount}
\gsub_G[H] = \sum_{G'} \alpha_{G'} \ghom_{G'}[H]
\end{equation}
where $G'$ ranges over all graphs in $\spasm(G)$. This equation is used to count subgraphs by many authors (See for example \cite{Alon1997, 10.1145/3055399.3055502}).

\ckcountmtd*
\begin{proof}
    We analyzed all graphs in $\spasm(C_{11})$ and $\spasm(C_{10})$ and concluded that each of them has matched treedepth at most $6$. A pdf that contains all these graphs and their corresponding matched elimination trees can be found at (\url{https://github.com/anonymous1203/Spasm}). For seeing that the stated algorithms exist for smaller cycles, observe that $\spasm(C_k) \subset \spasm(C_{k+2})$ for $k \geq 3$.
\end{proof}

The class of treedepth $2$ graphs is exactly the star graphs. These graphs have an $O(n^2)$-time, constant space, treedepth-based algorithm. This class of graphs is also the graphs of matched treedepth $2$. It is easy to see that the number of homomorphisms from star graphs can also be counted in $O(m)$-time and constant space using the following observation. Let $d_1, \dotsc, d_n$ be the degrees of vertices in the graph. Then, the count is given by the expression $\sum_i d_i^k$. Note that this is an asymptotic improvement for sparse graphs. We now show that this asymptotic improvements exists for all patterns. First, we need a restricted form of elimination trees.
\begin{definition}
    An elimination tree $T$ is \emph{connected} if for every node $u$ in $T$ and a child $v$ of $u$ in $T$, $u$ is adjacent to some node in $T_v$
\end{definition}
Now, we show that connected elimination trees are optimal.
\begin{lemma}\label{lem:connet}
    Every connected graph $G$ has a connected elimination tree of depth $\td(G)$.
\end{lemma}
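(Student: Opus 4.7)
The plan is to proceed by induction on $|V(G)|$, directly using the recursive structure of Definition~\ref{def:ET}. In fact I expect to show the slightly stronger statement that \emph{any} elimination tree of a connected graph, as built by that definition, is automatically connected, so no real ``repair'' of the tree is required.

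The base case $|V(G)| = 1$ is trivial: the one-node tree has no parent-child pairs to check and has depth $1 = \td(G)$. For the inductive step, fix an elimination tree $T$ of $G$ rooted at some $r$ that achieves depth $\td(G)$. By Definition~\ref{def:ET}, the children subtrees $T_1, \dotsc, T_p$ of $r$ in $T$ are elimination trees of the connected components $C_1, \dotsc, C_p$ of $G \setminus r$. Without loss of generality each $T_i$ has depth exactly $\td(C_i)$; otherwise I replace $T_i$ with an optimal elimination tree for the (strictly smaller, connected) graph $C_i$, which cannot increase the overall depth since $T$ is assumed optimal. By the inductive hypothesis each such $T_i$ can be taken to be connected, so the connectedness condition is inherited for every parent-child pair $(u,v)$ that lies strictly inside some $T_i$.

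The only non-inductive check, and the main (if small) obstacle in the argument, is the condition for the pair $(r, \operatorname{root}(T_i))$: I need $r$ to have a neighbor in $V(T_i) = V(C_i)$ for each $i$. This is exactly where the connectedness of $G$ is used. Since $C_i$ is a connected component of $G \setminus r$ inside a connected $G$, any $G$-path from a vertex $c \in C_i$ to a vertex outside $C_i \cup \{r\}$ must exit $C_i$ via $r$, and therefore the last vertex of $C_i$ along such a path is a neighbor of $r$ in $G$. Hence $r$ has a neighbor in each $C_i$, so the modified tree is connected and still has depth $\td(G)$, completing the induction.
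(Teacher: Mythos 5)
Your proof is correct, but it takes a genuinely different route from the paper's. The paper treats the starting elimination tree as a ``black box'' that might fail the connectedness condition and gives a repair procedure: whenever a node $u$ has a child $v$ with no $G$-edge from $u$ into $T_v$, it finds a vertex $w \in T_v$ adjacent to a proper ancestor $x$ of $u$ (such a $w$ exists by connectivity of $G$), detaches $T_v$, and reattaches it under $x$; this strictly decreases the number of violating pairs and never increases depth. You instead exploit the literal recursive structure of Definition~\ref{def:ET}: the children subtrees of $r$ are elimination trees of the connected components of $G \setminus r$, so $r$ automatically has a $G$-neighbor in each such component, and induction on $|V(G)|$ handles the deeper levels. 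Read strictly, Definition~\ref{def:ET} indeed forces connectedness, so your observation is sound and shows the lemma is close to automatic under the stated definition. (You could even drop the optimality/replacement step: for the ``stronger'' claim you mention — that \emph{every} elimination tree of a connected graph is connected — the same induction goes through directly, without picking an optimal tree.) The paper's repair approach is more robust in one respect: it also applies to the looser ``treedepth decomposition'' notion (a rooted tree in which every edge of $G$ joins an ancestor-descendant pair), under which connectedness can genuinely fail (e.g.\ $P_5$ on $1$--$2$--$3$--$4$--$5$ with the rooted tree $1 \to 3 \to 5 \to \{2,4\}$). That looser notion is what the paper implicitly manipulates in the subsequent proof of Theorem~\ref{thm:mtdtd}, where subtrees are re-hung in ways that break the strict recursive structure, so the paper needs the repair lemma in the form it states it.
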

\begin{proof}
    We show how to construct a connected elimination tree $T'$ from an elimination tree $T$ without increasing its depth. Let $T' = T$ initially. Suppose there exists some node $u$ in $T'$ that violates the property (If not, we are done). Then, there exists a child $v$ of $u$ in $T'$ such that there is no edge in $G$ between $u$ and any node in $T'_v$. Let $w$ be a node in $T'_v$ such that $w$ is adjacent to some proper ancestor $x$ of $u$. Such a $w$ must exist because $G$ is connected. Now, in $T'$, remove the subtree $T'_v$ and make it a subtree of the node $x$. Repeat this process until all nodes in $T'$ satisfy the required property. This process must terminate since at each step, we reduce the number of nodes violating the property by at least one. This process cannot increase the depth of $T'$ because the only modification is to move a subtree upwards to be a subtree of a proper ancestor of its parent.
\end{proof}

\mtdtd*
\begin{proof}
    We start with a connected elimination tree $T$ with depth $d$ of $G$ and show how to construct a matched elimination tree of $G$ from $T$. We use induction on $d$. For $d=2$, the tree is already matched and has depth $2 = 2\cdot 2 - 2$.
    
    Our construction is iterative and top-down. Each iteration ensures that the current top-most node in the elimination tree is adjacent, in the graph $G$, to all its children and that the elimination tree is connected.
    
    Each iteration consists of two phases iteratively executed until the desired property is satisfied. The first phase ensures that the root node $r$ is adjacent in $G$ to all its children in $T$. If $r$ has a child $v$ that is not adjacent in $G$ to $r$, then since $T$ is connected, there is some node $w$ in $T_v$ such that $rw \in E(G)$. Then, we make $w$ a child of $r$ in $T$, delete $w$ from $T_v$, and make all children of $w$ children of parent of $w$. The resulting tree is an elimination tree of depth at most $d+1$. After this phase, the root node is adjacent in $G$ to all its children, the tree's depth has increased by at most one. However, it may not be connected.
    
    In the second phase, we use the construction of Lemma~\ref{lem:connet} to make the tree connected without increasing the depth. Observe that the construction will keep the existing children of root as is and may add new chlidren to $r$ that are not adjacent to $r$ in $G$. Suppose a new node $u$ was added as a new child to $r$ in this second phase. The height of subtree rooted at $u$ is at most $d-1$. Therefore, the tree $(r, T_u)$ obtained by attaching $u$ to $r$ has height at most $d$. We can now execute phase 1 on all the trees $(r, T_u)$ for all such $u$ without increasing depth beyond $d+1$. This process must eventually terminate as we add at least one new child to the root every time.
    
    At the end of the iteration, consider a grandchild $x$ of $r$. If it is a leaf, since the tree is connected, $x$ must be adjacent in $G$ to its parent in $T$ and we are done. Otherwise, the subtree $T_x$ is a tree of depth at least $2$ and at most $d-1$ that is connected. So by the induction hypothesis, we obtain that $T_x$ is a matched elimination tree of depth at most $2(d-1) - 2$. This means that the original tree is converted to a matched elimination tree of depth at most $2 + 2(d-1) - 2 = 2d-2$ as required.
\end{proof}

\begin{corollary}
    Suppose $G$ has treedepth $d$. Given an $m$-edge host graph as input, we can count the number of homomorphisms from $G$ to the host graph in $\otilde(m^{d-1})$-time and constant space.
\end{corollary}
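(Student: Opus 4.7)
The plan is to obtain the corollary as an immediate composition of the two main theorems of the section, namely Theorem~\ref{thm:mtdtd} (which bounds matched treedepth in terms of treedepth) and Theorem~\ref{thm:count-hom-mtd} (which provides the constant-space homomorphism counting algorithm parameterised by matched treedepth). There is no new combinatorial or algorithmic content to invent here; the entire content is a clean arithmetic substitution.

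First, I would invoke Theorem~\ref{thm:mtdtd} on the pattern graph $G$. Since $\td(G) = d$, we immediately conclude $\mtd(G) \leq 2d - 2$. Next, I would feed $G$ into the algorithm guaranteed by Theorem~\ref{thm:count-hom-mtd}, which, given any graph with matched treedepth $t$, counts homomorphisms into an $m$-edge host in time $\otilde(m^{\lceil t/2 \rceil})$ using constant space. Substituting the upper bound $t \leq 2d-2$ yields a running time of $\otilde(m^{\lceil (2d-2)/2 \rceil}) = \otilde(m^{d-1})$, and the space usage remains constant since we use the algorithm of Theorem~\ref{thm:count-hom-mtd} as a black box.

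There is essentially no obstacle to overcome: both theorems have already been stated, and the proof is a one-line calculation after their application. The only subtlety worth mentioning explicitly is that the algorithm of Theorem~\ref{thm:count-hom-mtd} requires a matched elimination tree as input, and this can be produced from any elimination tree witnessing $\td(G) = d$ via the construction given inside the proof of Theorem~\ref{thm:mtdtd}; since $G$ is a fixed constant-size pattern, this preprocessing takes $O(1)$ time and does not affect the claimed bounds. With these observations, the corollary follows directly.
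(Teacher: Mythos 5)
Your proposal is correct and follows exactly the same route the paper intends for this corollary: apply Theorem~\ref{thm:mtdtd} to get $\mtd(G) \leq 2d-2$ and then invoke Theorem~\ref{thm:count-hom-mtd} to obtain the $\otilde(m^{\lceil (2d-2)/2 \rceil}) = \otilde(m^{d-1})$ running time with constant space. The added remark that the matched elimination tree can be precomputed in $O(1)$ time because $G$ is a fixed pattern is a nice explicit touch but not strictly needed.
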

Notice that that above algorithm is asymptotically better than the treedepth based algorithm for all patterns on sparse host graphs.

We now prove that the matched treedepth of an induced subgraph cannot be much larger compared to that of the graph containing it. In other words, we can obtain lower bounds on the matched treedepth of a graph by obtaining lower bounds on the matched treedepth of any of its induced subgraphs. We prove the theorem for connected subgraphs. But note that the matched treedepth of a disconnected graph is the maximum of its connected components.

\mtdsub*
\begin{proof}
    We start with a matched elimination tree $T$ of even (The odd case is similar) depth $d$ for $G$ and construct a matched elimination tree for $G'$. First, we delete all nodes in the elimination tree that are in $G$ but not in $G'$. If a node $u$ in $T$ has parent $v$ that was deleted, then we make $u$ a child of the closest ancestor of $v$ that is still in $T$. The final forest thus obtained is a tree $T'$ because $G'$ is connected. We assume without loss of generality that $T'$ is a connected elimination tree.
    
    Suppose $r'$ is the root of $T'$. We will now modify $T'$ into a matched elimination tree. We will first analyze paths in $T'$ that correspond to even length root-to-leaf paths in $T$. If $T'$ is not matched on this path, then there exists a node $u$ closest to $r'$ such that $u$ is not connected in $G'$ to a child $v$ in $T'$. This is only possible if $u$ was either matched to a child $w$ of $u$ in $T$ or its parent $w$ in $T$ and $w$ is not in $G'$. Therefore, we have $\dist_{T'}(r', u) + \depth(T'_v) < \depth(T)$ and this means we can afford to increase the length of any path that passes through edge $uv$ by $1$. Since $T'$ is connected, we can now apply the transformation in the proof of Theorem~\ref{thm:mtdtd} to match $u$ with one of its descendants in $T'_v$. The depth is still at most $d$ because this transformation increases the depth by at most $1$ In effect, the increase in depth in this branch of the tree by pulling up a descendant is compensated by the fact that the unmatched vertex was introduced by deleting a vertex in this branch.
    
    We can iteratively apply the above construction to make $T'$ a matched elimination tree while keeping $T'$ connected. However, applying the above transformation may introduce a node $u$ in $T'$ that is not matched to a child $v$ \emph{because} $v$'s parent $w$ was pulled up in the tree to match with some other vertex. Such $u$ also satisfy the inequality $\dist_{T'}(r', u) + \depth(T'_v) < \depth(T)$. Why? Any path in the tree $T'$ that passed through both $u$ and $v$ earlier had to pass through $w$. But, the fact that $w$ was pulled up implies that these paths had length strictly less than $\depth(T)$ by the argument in the previous paragraph. And shifting $w$ to a position earlier in the path cannot increase its length. 
    
    For root-to-leaf paths in $T$ of odd length, there might be a matched $P_3$ on vertices $uvw$ such that $v$ is not in $G'$. In this case too, by the same argument, the transformations increase the depth to at most $d+1$ (In this case, we may have to pull up two distinct vertices for matching $u$ and $w$.). When $d$ is even, increasing the length of such paths by $1$ does not increase the depth of the tree. When $d$ is odd, increasing the length of such paths by $1$ increases the depth by atmost $1$.
\end{proof}

\begin{figure}[ht!]
    \centering
    \ctikzfig{c4p6x98b}
    \caption{Forbidden graphs for $\mtd(.) \leq 3$}
    \label{fig:mtd3forbidden}
\end{figure}
Theorem~\ref{thm:mtdtd} implies that all graphs $G$ with $\td(G) = 3$ also have $\mtd(G) \leq 4$. It is easy to see that $C_4$, $P_6$, and $\graphw$ given in Figure~\ref{fig:mtd3forbidden} have $\td(.) = 3$ and $\mtd(.) = 4$. By Theorem~\ref{thm:mtdsub}, it is also possible that some super graph of these graphs have $\mtd(.) = 3$. However, in the below theorem we show that this cannot happen and that the graphs $G$ with $\td(G) = \mtd(G) = 3$ are exactly the graphs where we forbid these graphs as induced subgraphs.

\tdmtdthree*
\begin{proof}
(Proof for the ``if'' direction) Let $G$ be a connected graph with $td(G) = 3$ such that $G$ is $(C_4, P_6, \graphw)$-free. We will construct a matched elimination tree for $G$ of depth $3$ from its elimination tree.

Consider an elimination tree $T$ of $G$ with depth $3$, rooted at some vertex $r$. Let $\{v_1, v_2, \ldots v_k\}$ be the children of $r$ in $T$. Notice that for any child $c$ of $v_i$ (for any $i$), if $c$ is not adjacent to $v_i$ in $G$ then it must be adjacent to $r$ in $G$ (else, $G$ is not connected). We make all such $c$, which are not adjacent to $v_i$ in $G$, a child of $r$ instead of $v_i$, in the elimination tree $T$. Note that this neither violates any property of elimination tree nor does it increase the depth of elimination tree. Also, after doing this modification, we can further assume that all the children of $v_i$ in $T$ are adjacent to $v_i$ in $G$.

Now, if $rv_i \in E(G)$ for all $i \in [k]$, then $T$ itself is a matched elimination tree and we are done. Suppose there exists an $i$ such that $rv_i \notin E(G)$. Since $G$ is connected, a path must exists from $r$ to $v_i$. Since $T$ has depth $3$, this path must be a $P_3$ and therefore $r$ and $v_i$ has at least one common neighbor. Any common neighbour $u$ of $r$ and $v_i$ must be a child of $v_i$ in $T$. Moreover, if $r$ and $v_i$ have two common neighbors, say $u_1$ and $u_2$, then $ru_1vu_2r$ is an induced $C_4$. So $r$ and $v_i$ must have exactly one common neighbor. We now consider the following cases: 

\begin{itemize}
    \item $r$ has exactly one child $v_1$.

    Let ${c_1, c_2, \dotsc, c_k}$ be the children of $v_1$. Recall that $rv_1\notin E(G)$, and they have exactly one common neighbor in $G$, say $c_1$. Then, we construct a matched elimination tree $T'$ as follows: The root of $T'$ is $c_1$, vertices $r$ and $v_1$ become the children of $c_1$, the children of $v_1$ in $T'$ are $\{c_2, \dotsc, c_k\}$ The tree $T^\prime$ is a valid matched elimination tree of depth $3$ for $G$.

    \item $r$ has multiple children, say $\{v_1, \ldots v_\ell\}$ for some $\ell > 1$. 
    
    We split this into two cases.
    
    \begin{itemize}
        \item ($r$ is not adjacent to exactly one of its children, say $v_1$)
        
        The vertex $v_1$ is not a leaf in $T$ since $G$ is connected. If $v_1$ has exactly one child, then, since it must be a common neighbor $c$ of $r$ and $v_1$, we swap $c$ and $v_1$, and this does not increase the depth of $T$. Else, suppose $v_1$ has at least two children,  namely $\{c_1, c_2, \ldots c_k\}$ (where the common neighbor with $r$ is $c_1$). We now argue that all the other children $\{v_2, \ldots v_l\}$ of $r$ must be leaves in $T$. If not, then say $v_2$ has a child $c_3$, then we encounter an induced $P_6$, namely $(c_2, v_1, c_1, r, v_2, c_3)$ (if $rc_3 \not\in E(G)$) or an induced $\graphw$ (if $rc_3 \in E(G)$), which is a contradiction. So, $\{v_2, \ldots v_l\}$ are leaf vertices. Now, we can convert $T$ to $T^\prime$ by rooting it at $c_1$ instead of $r$. The children of $c_1$ in $T^\prime$ are precisely $r$ and $v_1$, the children of $r$ are precisely all the leaves $\{v_2, \ldots v_l\}$, and the children of $v_1$ are all the same except $c_1$, that is, $\{c_2, \ldots c_k\}$. It is easy to see that $T^\prime$ is a matched elimination tree of depth $3$.
         
        \item ($r$ is not adjacent to at least two of its children, say $v_1$ and $v_2$ and maybe more)
 
        Due to the connectivity of $G$, the vertices $v_1$, $v_2$ cannot be leaves in $T$. If $v_1$ has exactly one child $c$, then $c$ must be a common neighbor of $r$ and $v_1$. We swap $v_1$ and $c$ in $T$. This will either fall into the previous case or we can assume $v_1$ has more than one child. Let $c_{1}, c_{2}$ be two children of $v_1$, where $c_{1}$ is the common neighbor of $r$ and $v_1$. Then we get an induced $P_6$, namely, $(c_{2}, v_1, c_{1}, r, c, v_2)$, where $c$ is a common neighbor of $r$ and $v_2$. (Such a $c$ must exist as ($r, v_2 \notin E(G)$).
    \end{itemize}
\end{itemize}

(Proof for the ``only if'' direction) Suppose $G$ is a connected graph with $td(G)=3$ such that $mtd(G) = 3$. We will show that $G$ cannot contain $C_4$, $P_6$, or $\graphw$ as induced subgraphs. We will prove that any connected induced subgraph $G'$ of $G$ has matched treedepth at most $3$.

Let $T$ be the matched elimination tree of $G$ of depth $3$. Let $r$ be the root of $T$. If $r$ is not in $G'$, then since $G'$ is connected, all vertices in $G'$ must come from a single sub-tree of $r$ in $T$. In that case, that sub-tree witnesses a matched elimination tree of depth at most $2$ for $G'$. If $r$ is in $G'$, then $G'$ may be obtained by deleting some level $1$ and level $2$ vertices in $T$. We will construct a matched elimination tree for $G'$ from $T$. If a level $2$ vertex is not in $G'$, we simply delete it from $T$ as well. If a level $1$ vertex $v$ is not in $G'$, then for each $u \in V(G')$ that is also a child of $v$ in $T$, there must be an edge $ru$ in $G'$ since $G'$ is connected. Therefore, we make $u$ a child of $r$ in the matched elimination tree for $G'$.
\end{proof}

By Theorem~\ref{thm:count-hom-mtd} and Theorem~\ref{thm:mtdtd}, we can conclude that every pattern $G$ with $\td(G) = 3$ has an $\otilde(m^2)$ algorithm for counting homomorphisms from $G$ to $m$-edge sparse host graphs. Therefore, the presence of $C_4$, $P_6$, or $\graphw$ as induced subgraphs in $G$ does not affect the running-time of Algorithm~\ref{alg:count-hom-mtd}. If we consider patterns $G$ with $\td(G) = 4$, then we have examples such as $P_8$ where $\td(P_8) = 4$ and $\mtd(P_8) = 5$. Therefore, we can obtain only an $\otilde(m^3)$ algorithm for counting homomorphisms from $P_8$. It would be interesting to prove theorems similar to Theorem~\ref{thm:tdmtdthree} for higher treedepth, say $4$. But, we do not even know the exact set of forbidden induced subgraphs for treedepth $4$ \cite{GT09} so this seems difficult.

\section{Matched treewidth}
\label{sec:mtw}

In this section, we introduce algorithms that count homomorphisms and subgraphs efficiently on sparse host graphs by using polynomial space. The central theorem in this section is given below.

\counthommtw*

\begin{proof}
Let $T$ be a matched tree decomposition of $G$. We fix an arbitrary assignment of edges and vertices of $G$ to bags of $T$ such that each vertex and edge is assigned to exactly one bag that contains it. Let $B$ be a bag in $T$. We consider the matched matching $M$ in $B$ as a sequence of edges $(a_1b_1$, $\dotsc$, $a_kb_k)$ by arbitrarily ordering them. Given edges $u_1v_1,\dotsc,u_kv_k$ in the host graph $H$ such that $\sigma(a_i) = u_i$ and $\sigma(b_i) = v_i$ is a valid partial homomorphism on the vertices of $B$, we define the following monomials:
\begin{align*}
 \edgemon(B, u_1v_1, \dotsc, u_kv_k) &= \prod_{e} x_{\sigma(e)} \\
 \vertexmon(B, u_1v_1, \dotsc, u_kv_k) &= \prod_v y_{\sigma(v)} 
\end{align*}
where $e$ and $v$ range over all edges and vertices assigned to $B$. 
We also define $\evmon(B, u_1v_1, \dotsc,$ $u_kv_k)$ as the product of $\edgemon(B, u_1v_1, \dotsc, u_kv_k)$ and $\vertexmon(B, u_1v_1, \dotsc, u_kv_k)$.

Let $B$ and $B'$ be bags in $T$ such that $B'$ is the parent of $B$. We arbitrarily order the vertices to obtain a sequence $X(B \cap B')$ of the vertices in $B \cap B'$ (We may assume that the vertices are labeled from $[|V(G)|]$ and choose increasing order). We define $\mapgate(B, B'$, $u_1$, $\dotsc$, $u_k)$ where $X(B \cap B') = (a_1, \dotsc, a_k)$ as the named gate which corresponds to the partial homomorphism $\sigma(a_i) = u_i, 1 \leq i \leq k$.

Algorithm~\ref{alg:hom} constructs the required circuit. Notice that all operations within the loops in Line~\ref{alg:line9},~\ref{alg:line20},~\ref{alg:line36} runs in $O(\log n)$ time. The loops itself executes for $O(m^{\lceil (t+1) / 2 \rceil})$ iterations since any matched matching on $t+1$ vertices contains at most $\lceil (t+1) / 2 \rceil$ edges. The loops in Line~\ref{alg:line2},~\ref{alg:line7}, and~\ref{alg:line18}, executes for $O(1)$ iterations since the pattern graph $G$ is fixed. Notice that we iterate over pairs $uv$ that correspond to edges instead of edges $\{u, v\}$ because the order determines the homomorphism.

When a hash table lookup for a $\mapgate(.)$ gate fails. We add it to that table with an initial value if the gate occurred on the left-hand side and replace it with $0$ if it occurs on the right-hand side.
We now prove the correctness of the circuit using parse trees. Notice that the only $+$ gates in the circuit are $\mapgate(\dotsc)$. For this proof, we can think of $r$ as $\mapgate(R, \phi)$ where $R$ is the root bag in $T$ with an empty bag as parent. Therefore, the monomials of the final polynomial correspond to the choices at these gates while building the parse tree. For some bag $B$ with parent $B'$ in $T$ and vertices $x_1,\dotsc,x_k$ in $G$, the inputs to the gate $\mapgate(B, B', x_1, \dotsc, x_k)$ correspond to valid partial homomorphisms on the vertices of $B$ that map $a_i$ to $x_i$ for all $a_i \in X(B\cap B')$. We identify these gates with the bag $B$. Now, given a homomorphism $\sigma$ from $G$ to $H$, we build its parse tree by choosing for each such gate, the restriction of $\sigma$ to the vertices in that bag. We have to prove that this choice will be consistent with the images $x_1,\dotsc,x_k$ at each gate. Indeed, this is vacuously true at the root gate (the list is empty). For an arbitrary $a \in V(G)$, consider the topmost bag $B$ where $a$ first appears. In the partial homomorphism chosen at $B$, we can freely fix the image of the vertex $a$ to that of in $\sigma$. By construction, this choice is then propagated when moving to its child gates corresponding to bags where $a$ is present (See Lines~\ref{alg:line15}, \ref{alg:line30}, and Lines~\ref{alg:line42}  ). Also, if a child bag of $B$ does not contain the vertex $a$, then the vertex $a$ will not appear in that subtree too. It is easy to see that this parse tree computes the correct monomial. Since the circuit is monotone, this proves that all monomials that correspond to homomorphisms are present in the polynomial.
For the other direction, we have to argue that only monomials that correspond to homomorphisms are present in the polynomial. Indeed, any parse tree corresponds to a sequence of choices of partial homomorphisms at each gate. We argue that these partial homomorphisms must be consistent with each other and therefore can be combined into a valid homomorphism. This is because $T$ is a tree decomposition and therefore any $a\in V(G)$ must occur in a (connected) subtree of $T$. The construction of the circuit ensures that once the image of vertex $a$ is determined in a parse tree, it is correctly propagated to all partial homomorphisms where $a$ is a member of the domain. Furthermore, since each vertex and edge in $G$ are assigned to a unique bag, their images appear exactly once in the monomial. This completes the proof.

\begin{algorithm}
\caption{Computing $\ghom_{G}[H]$}
\label{alg:hom}
\begin{algorithmic}[1]
\State Let $T$ be a near-perfect tree decomposition of $G$.

\For{each bag $B$ in $T$} \label{alg:line2}
    \For{each child $B'$ of $B$ in $T$} 
        \State Initialize empty hash table $T_{(B, B')}$.
    \EndFor
\EndFor \label{alg:line6}

\For{each non-root leaf bag $B$ in $T$} \label{alg:line7}
    \State Let $B'$ be the parent of $B$ in $T$.
    \For{$(u_1v_1,\dotsc,u_kv_k) \in E(H)^k$} \label{alg:line9}
        \State{Let $\sigma$ be $\sigma(a_i) = u_i$, $\sigma(b_i) = v_i$.} \label{alg:line10}
        \If{$\sigma$ is not a valid partial homomorphism from $G$ to $H$}
            \State{Skip this iteration}
        \EndIf
        \State Let $x_1, \dotsc, x_{k'}$ be the images of vertices in $X(B \cap B')$ in $\sigma$. 
        \State{$\mapgate(B, B', x_1,\dotsc,x_{k'}) \pluseq \evmon(B, u_1v_1, \dotsc, u_kv_k)$} \label{alg:line15}
    \EndFor
\EndFor \label{alg:line17}

\For{each non-root bag $B$ in $T$ in a bottom-up order} \label{alg:line18}
    \State Let $M = (a_1b_1,\dotsc,a_kb_k)$.
    \For{$(u_1v_1,\dotsc,u_kv_k) \in E(H)^k$} \label{alg:line20}
        \State Let $\sigma$ be $\sigma(a_i) = u_i$, $\sigma(b_i) = v_i$.
        \If{$\sigma$ is not a valid partial homomorphism from $G$ to $H$} \label{alg:line22}
            \State{Skip this iteration}
        \EndIf
      
        \State Let $B'$ be the parent of $B$ in $T$.
        \State Let $B_1,\dotsc,B_s$ be the children of $B$ in $T$.
        \State Let $x_1, \dotsc, x_{k'}$ be the images of vertices in $X(B \cap B')$ in $\sigma$.
        \State Let $w_{i1},\dotsc,w_{ik_i}$ be the images of vertices in $X(B \cap B_i)$ in $\sigma$.
        \State{$\mapgate(B, B', x_1,\dotsc,x_{k'}) \pluseq$} 
        \State{  $\evmon(B, u_1v_1, \dotsc, u_kv_k)\prod_{i=1}^{s}\mapgate(B, B_i, w_{i1},\dotsc,w_{ik_i})$} \label{alg:line30}
    \EndFor
\EndFor \label{alg:line32}

\State{Let $B$ be the root in $T$}
\State Let $B_1,\dotsc,B_s$ be the children of $B$ in $T$.

\State{$r \gets 0$}
\For{$(u_1v_1,\dotsc,u_kv_k) \in E(H)^k$} \label{alg:line36}
    \State{Let $\sigma$ be $\sigma(a_i) = u_i$, $\sigma(b_i) = v_i$.}
    \If{$\sigma$ is not a valid partial homomorphism from $G$ to $H$}
        \State{Skip this iteration}
    \EndIf
    \State Let $w_{i1},\dotsc,w_{ik_i}$ be the images of vertices in $X(B \cap B_i)$ in $\sigma$.
    \State{$r \pluseq \evmon(B, u_1v_1, \dotsc, u_kv_k)\prod_{i=1}^{s}\mapgate(B_i, B, w_{i1},\dotsc,w_{ik_i})$} \label{alg:line42}
\EndFor \label{alg:line43}
\State \Return $r$
\end{algorithmic}

\end{algorithm}

\end{proof}

\begin{remark}
    An anonymous reviewer on an earlier draft of this paper commented that the graph parameter \emph{generalized hypertree width}, denoted $\ghw$, may yield similar running-times. Indeed, we have verified that $\otilde(m^{\ghw})$-time algorithms exist \emph{and} that $\ghw \leq \lceil (\mtw + 1) / 2 \rceil$ \emph{and} the running-times coincide in the worst-case. To the best of our knowledge, the parameter $\ghw$ has not been analyzed in the context of fast algorithms for small patterns. It is primarily used for designing efficient algorithms on hypergraphs with high treewidth and low $\ghw$. We believe analyzing $\mtw$ is also useful since it is sometimes more fine-grained than $\ghw$. i.e., the class $\ghw = k$ may contain graphs from both classes $\mtw = 2k-2$ and $\mtw = 2k-1$.
\end{remark}

First, we make some fundamental observations about $\mtw$.

We can relate it to matched treedepth as we relate treewidth to treedepth.
\begin{proposition}\label{prop:mtwmtd}
  For any graph $G$, we have $\mtw(G) \leq \mtd(G) + 1$.
\end{proposition}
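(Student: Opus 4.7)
I will build a matched tree decomposition of $G$ directly from a matched elimination tree $T$ of $G$ realizing depth $d = \mtd(G)$, using the root-to-leaf paths of $T$ as bags. Specifically, list the leaves $l_1, l_2, \ldots, l_m$ of $T$ in the order they appear in a fixed depth-first traversal of $T$, and set $B_i = V(P_{l_i})$, the vertex set of the root-to-$l_i$ path in $T$. The underlying tree of the decomposition will be the path $B_1 - B_2 - \cdots - B_m$.

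I then need to check that this is a valid tree decomposition. For each vertex $v \in V(G)$, the bag $B_i$ contains $v$ exactly when the leaf $l_i$ is a descendant of $v$ in $T$ (or equals $v$); under a depth-first traversal of $T$ the leaves descended from a fixed vertex appear consecutively, so $\{\,i : v \in B_i\,\}$ is an interval and hence connected in the underlying path. For each edge $\{u, w\} \in E(G)$, the elimination tree property forces one of $u, w$ to be an ancestor of the other in $T$; taking any leaf descended from the deeper endpoint yields a bag containing both $u$ and $w$. Every vertex also lies on some root-to-leaf path, so every vertex appears in some bag.

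The matched property then follows from the very definition of matched elimination tree. Each $B_i$ is the vertex set of a root-to-leaf path of $T$, and by Definition~\ref{def:matchedET}, $G[B_i]$ admits a matching that is either perfect (when $|B_i|$ is even) or a near-perfect matching whose single unmatched vertex is adjacent, via a bend edge, to one of the matched vertices (when $|B_i|$ is odd). This is exactly the matched bag condition of Definition~\ref{def:matched-tree-decomposition-treewidth}. Since each $|B_i|$ is at most $d$, the width of the decomposition is at most $d - 1$, giving $\mtw(G) \leq d - 1 \leq \mtd(G) + 1$, as required.

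The only somewhat delicate point is the DFS-ordering step used for connectivity, which rests on the standard fact that the descendants of any fixed vertex form a contiguous range in a depth-first traversal; aside from that, the construction is entirely mechanical, so I do not anticipate a major obstacle.
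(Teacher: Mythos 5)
Your construction is exactly the paper's: take the root-to-leaf paths of an optimal matched elimination tree as bags and join them in left-to-right (equivalently, DFS) order of the leaves, so the proof is correct and takes the same route. Note that, as you observe, the construction actually yields the sharper bound $\mtw(G) \leq \mtd(G) - 1$ (since every bag has at most $\mtd(G)$ vertices), which is stronger than the $\mtd(G)+1$ bound stated in the proposition; the paper's own proof also delivers this stronger bound but states the weaker one.
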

\begin{proof}
  Let $E$ be the optimal matched elimination tree for $G$. We construct a matched tree decomposition $T$ for $G$ from $E$ as follows: For each path from root to leaf in $E$ from the leftmost path to the rightmost path, construct bags that contain all vertices in those paths. Then, join those bags into a path by adding an edge between bags $B$ (corresponds to path to leaf $u$) and $B'$ (corresponds to the path to leaf $v$) if and only if $v$ is the next leaf in $E$ from $u$ when leaves are ordered from left to right.
\end{proof}

Now, we prove that matched treewidth cannot be much higher than treewidth.
\mtwvstw*
\begin{proof}
    Let $T$ be a tree decomposition of $G$ of width $k$. We will describe a procedure to convert $T$ to a matched tree decomposition. The construction is top-down. The final tree will have the property the vertices in each non-leaf bag will induce a perfect matching in $G$.
    
    Let $B_r$ be the root bag in $T$. The following procedure will be applied to $B_r$.
    \begin{enumerate}
    \item Find maximal matching $M$ in $B_r$.
    \item For each $v\in B_r\setminus M$ such that $N_G(v) \subseteq B_r$. Observe that since $M$ is a maximal matching $N_G(v) \subseteq V(M)$ as well. We delete $v$ from $B_r$ and add a leaf bag $B_v$ as a child of $B_r$. The bag $B_v$ will contain $v$ and the vertices in $M$. Since $v$ is adjacent to at least one vertex in $M$, this bag is matched.
    \item For each $v \in B_r\setminus M$ such that $N_G(v)\nsubseteq B_r$. We can find a $u$ such that  $u\notin B_r$ and $u\in N_G(v)$. We choose such a $u$ that is in a bag $B$ that is at a minimal distance from $B_r$ in $T$. We add $u$ to $B_r$ and all the bags in the path from $B$ to $B_r$ in $T$. We then modify $M=M\cup \{uv\}$.
\end{enumerate}
Notice that each iteration in step 2 and step 3 reduces the unmatched vertices in $B_r$ by 1 and only adds leaf bags to $T$ that are matched. In addition, if $B_r$ originally had $x$ vertices, after this procedure it will have at most $2x$ vertices as we add at most one vertex corresponding to each of the original $x$ vertices. Therefore, the size of modified $B_r$ is at most $2k+2$ and the graph induced by vertices in $B_r$ has a perfect matching.

Now, consider an arbitrary bag $B$ such that all its ancestors are bags with a perfect matching. Let $B_p$ be the parent of $B$ in $T$ and let $M$ be the perfect matching on the vertices in $B_p$. We now apply the following procedure on $B$.
\begin{enumerate}
    \item For each $v \in B \cap B_p$, let $u$ be the partner of $v$ in $M$. If $u$ is in $B$, then we match $v$ to $u$ in $B$ as well. If not, then we add $u$ to $B$ and match $v$ to $u$ in $B$. This does not violate any properties of tree decompositions. Notice that if $v$ was added to $B$ in step 3 of the procedure for the root bag, then its partner $u$ must be in $B$ as well.
    \item For $v\in B \setminus B_p$, we apply steps 2 and 3 in the procedure described for the root bag. These steps only modify the bags in the subtree of $T$ rooted at $B$ and will only add to $T$ leaf bags that are matched. Moreover, at the end of this step, the vertices in bag $B$ induce a graph that has a perfect matching.
\end{enumerate}

Observe that the size of a bag $B$ can at most double from its original size since we add at most one vertex for each vertex originally in the bag. This is true for all the newly added leaf bags as well. Therefore, we have constructed a matched tree decomposition of width at most $2k+1$.
\end{proof}

We now show an explicit family that has close to the worst-case relation between treewidth and matched treewidth.

\mtwknn*
\begin{proof}
	Let $T$ be just an edge with vertex set $\{K_{n,n} \setminus \{a\},K_{n,n} \setminus \{a'\}\}$ for some $a$ and $a'$ that are in the same part. One can easily verify that $T$ is matched tree decomposition of $K_{n,n}$. Thus $mtw(K_{n,n}) \leq 2n-2$.

	Let $T$ be an arbitrary matched tree decomposition of $K_{n, n}$. Root $T$ at a leaf bag, say $X$ and let $X_1$ be the only child of $X$. If $X \subseteq X_1$, then we can delete $X$ from $T$. Therefore, there exists a vertex $u \in X$ of $K_{n,n}$ such that $u \not\in X_1$. So $N(u) \subseteq X$. Assume wlog that $u \in A$. Then, we get $B \subseteq X$. Now to match the $n$ vertices in $B$, we need at least $n-1$ vertices from $A$ in the bag $X$. So $mtw(K_{n,n}) \geq 2n-2$.
\end{proof}

It is easy to see that $\tw(K_{n, n}) = n$. Therefore, its matched treewidth is only 3 less than the worst case $2n+1$.

We now use the Algorithm~\ref{alg:hom} and Definition~\ref{def:spasm} to count paths in cycles in sparse host graphs. Instead of analyzing all graphs in $\spasm(P_{10})$, we completely characterize the matched treewidth of graphs with treewidth 2. This will simplify the case analysis required for proving algorithmic upper bounds for many small pattern graphs.

\subsection{Matched treewidth of partial 2-trees}

In this section, we study the matched treewidth of partial 2-trees. A summary is given in Table~\ref{tab:mtw2tree}. From Theorem~\ref{thm:mtwvstw}, we have $\mtw(G) \leq 5$ when $\tw(G) = 2$ which yields the last row. The graph $Y$ (See Figure~\ref{fig:Y}) satisfies $\tw(Y) = 2$ and $\mtw(Y) = 5$. But $Y$ is also an induced subgraph of $Z$ (See Figure~\ref{fig:Z}) which satisfies $\tw(Z) = 2$ and $\mtw(Z) = 4$. Therefore, a forbidden induced minor subgraph characterization is not applicable for this case. We now prove the remaining two characterizations.

\begin{table}[ht]
    \centering
    \begin{tabular}{rl}
         $\mtw \leq .$ & Forbidden induced minor \\
         \hline
         2 & $C_5$ \\
         3 & $X$ \\
         4 & Not applicable. \\
         5 & None. 
    \end{tabular}
    \caption{Matched treewidth of partial 2-trees.}
    \label{tab:mtw2tree}
\end{table}

\twmtwtwo*

\begin{proof}
(Proof for the ``if'' direction) Suppose for contradiction that there is a partial 2-tree $G$ such that $\mtw(G) = 2$ and $G$ has a $C_5$-induced-minor. Let $T$ be a matched tree decomposition of $G$ with width $2$. Since $C_5$ is an induced minor, we can obtain $C_5$ from $G$ by deleting vertices and contracting edges. For all $v$ that is deleted, delete $v$ from all bags of $T$. Similarly, for all edges $uv$ that are contracted, replace $u$ and $v$ consistently in all the bags of $T$ by one of $u$ or $v$. We obtain a (not necessarily matched) tree decomposition $T'$ of $C_5$. Assume wlog that $T'$ is a reduced tree decomposition. Since $T$ has width at most $2$, any bag in $T'$ contains at most $3$ vertices. If a bag in $T'$ has $3$ vertices, then they already form a $P_3$ since it must have been so in $T$. There cannot be a bag of one vertex in $T'$ because it is reduced and $C_5$ is connected. We claim that a bag of size $2$ in $T'$ must contain some $u$ and $v$ such that $uv \in E(C_5)$.
\begin{claim}
  Let $u$ and $v$ be two vertices in $C_5$ that are not adjacent. Then, the tree decomposition $T'$ cannot contain the bag $\{u, v\}$.
\end{claim}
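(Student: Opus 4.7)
The plan is to argue by contradiction using only the standard tree decomposition properties (connectivity of the bags containing each vertex, and the edge property) together with the fact that $T'$ is reduced.

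Suppose toward a contradiction that $B_0 = \{u, v\}$ is a bag of $T'$, where $uv \notin E(C_5)$. In $C_5$, any two non-adjacent vertices have a (unique) common neighbor; call it $w$. So $uw, vw \in E(C_5)$, while $w \notin B_0$. My first step is to locate $w$: since the bags of $T'$ containing $w$ form a connected subtree not meeting $B_0$, this subtree lies entirely in a single component $S$ of $T' \setminus \{B_0\}$. Let $N$ be the neighbor of $B_0$ in $T'$ lying in $S$ (the root of $S$).

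Next I use the edge property on $uw$ and $vw$. Each of these edges lies in some bag, and any bag containing $w$ belongs to $S$; hence there are bags in $S$ containing $u$ and bags in $S$ containing $v$. Now the subtree of bags containing $u$ is connected and includes both $B_0$ and some bag of $S$, so the path between them passes through $N$, forcing $u \in N$. The same argument with $v$ gives $v \in N$. Therefore $N \supseteq \{u, v\} = B_0$, contradicting the assumption that $T'$ is reduced.

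The main (and only mildly nontrivial) point is making sure the common neighbor $w$ actually lives in a single component of $T' \setminus \{B_0\}$, and that a common neighbor exists at all; the former is immediate from the connectivity property once we note $w \notin B_0$, and the latter is a direct inspection of non-adjacent pairs in $C_5$. Everything else is routine bookkeeping with the two defining properties of tree decompositions.
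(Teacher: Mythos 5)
Your proof is correct, and it takes a genuinely different route from the paper's. The paper roots $T'$ at the bag $B=\{1,3\}$ and carries out a multi-level case analysis on where the ``far'' edges $15$ and $34$ (and hence $45$) sit relative to $B$'s subtrees: one child vs.\ many children, the two edges in distinct subtrees vs.\ the same subtree, and two further sub-cases within the latter. You instead focus on the unique common neighbor $w$ of $u$ and $v$ (the ``near'' vertex), observe that since $w\notin B_0$ the entire $w$-subtree lives in one component $S$ of $T'\setminus\{B_0\}$, and then use the edge property on $uw$ and $vw$ together with connectivity of $u$ and $v$ to force $u,v\in N$, where $N$ is the unique neighbor of $B_0$ inside $S$. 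That immediately yields $B_0\subseteq N$, contradicting reducedness. Your argument collapses the paper's four sub-cases into a single uniform step, is arguably easier to verify, and relies on nothing beyond the two defining properties of tree decompositions plus the elementary observation that non-adjacent vertices in $C_5$ have a unique common neighbor. The only thing the paper's version arguably buys is that it never needs the common-neighbor fact about $C_5$ specifically; but since the surrounding lemma is about $C_5$ anyway, that generality is not exploited, and your version is the cleaner of the two.
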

\begin{proof}
Suppose for contradiction that $T'$ contains the bag $B = \{1, 3\}$. The other cases are symmetric. Root the tree $T'$ at that bag. We now analyze various cases.
\begin{enumerate}
    \item ($B$ has one child) Since $1$ and $3$ are adjacent to other vertices, the child of $B$ must contain both $1$ and $3$, contradicting the fact that $T'$ is reduced.
    \item ($B$ has more than one child) We split this case sub-cases.
    \begin{enumerate}
        \item (The edges $15$ and $34$ are covered in distinct subtrees (Say $T_1$ and $T_2$)  of $B$) Since $45 \in E(C_5)$, this edge must be covered. The bag that covers $45$ must have a path containing $5$ to the bag covering $15$ in $T_1$ due to the connectivity of $5$. Similarly, this bag must have a path containing $4$ to the bag covering $34$ in $T_2$ due to the connectivity of $4$. But, this is impossible since $B$ contains only $1$ and $3$.
        \item (The edges $15$ and $34$ are covered in the same subtree of $B$) This case is split into further sub-cases.
        \begin{enumerate}
            \item (The bag $B$, and the bags containing $15$ and $34$ occur on the same path in $T'$) Suppose the path is from $B$ to the bag containing $3$ and $4$ via the bag $B'$ containing $1$ and $5$. The other case is similar. Then, the bag $B'$ must contain $3$ to maintain the connectivity of $3$. Now, we have $B' \supset B$, a contradiction. 
            \item (The bags containing $15$ and $34$ has a common ancestor that is a proper descendant of $B$) Let $B'$ be this common ancestor. This bag $B'$ must contain $3$ as it lies on the path from $B$ to the bag containing $3$ and $4$. Also, this bag $B'$ must contain $1$ as this lies on the path from $B$ to the bag containing $1$ and $5$. But, then $B' \supseteq B$, a contradiction.
        \end{enumerate}
    \end{enumerate}
\end{enumerate}
\end{proof}
This completes the proof of the "if" direction.

(Proof of the ``only if'' direction) We use proof by contradiction. Suppose $G$ is a counter-example on minimum number of vertices.

\begin{claim}\label{claim:G2-conn}
    The graph $G$ is $2$-connected.
\end{claim}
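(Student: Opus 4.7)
The plan is to argue by contradiction using the minimality of $G$ together with a standard gluing of matched tree decompositions along a cut vertex. If $G$ is not $2$-connected, then either $G$ is disconnected or $G$ contains a cut vertex $v$; in either case I can write $G = G_1 \cup G_2$ where $G_1, G_2$ are induced subgraphs of $G$ with $V(G_1) \cap V(G_2) \subseteq \{v\}$ (empty if $G$ is disconnected) and with $|V(G_i)| < |V(G)|$ for $i = 1, 2$.

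Both ``is a partial $2$-tree'' and ``is $C_5$-induced-minor-free'' are hereditary under induced subgraphs: treewidth is subgraph-monotone, and any induced minor of an induced subgraph of $G$ is itself an induced minor of $G$. Hence each $G_i$ still satisfies the hypotheses of the theorem, and by the minimality of $G$ as a counter-example, the theorem holds for $G_1$ and $G_2$. I therefore may fix matched tree decompositions $T_1, T_2$ of $G_1, G_2$ of width at most $2$.

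Next I would construct a matched tree decomposition $T$ for $G$ by picking a bag $B_i$ of $T_i$ that contains $v$ (or any bag of each $T_i$ if $G$ is disconnected), and joining $T_1$ and $T_2$ by a new tree-edge $B_1 B_2$. The bags of $T$ are exactly the bags of $T_1$ and $T_2$, so each is still matched and of size at most $3$. The only nontrivial check is the connectivity property for $v$: the subtrees of $T_1$ and $T_2$ consisting of bags containing $v$ are each connected and are linked by the new edge, so their union is connected in $T$. The edge property transfers because $v$ being a cut vertex (or $G$ being disconnected) ensures $E(G) = E(G_1) \cup E(G_2)$. Thus $\mtw(G) \leq 2$, contradicting the assumption that $G$ is a counter-example.

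The plan has no real obstacle: once the hereditary closure of the two properties is noted, the gluing is essentially forced. The only care needed is to check that no new non-matched bag is introduced by the join, and this is immediate because $T$ reuses the bags of $T_1$ and $T_2$ verbatim.
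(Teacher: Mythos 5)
Your proof is correct and takes essentially the same approach as the paper: split $G$ at a cut vertex into smaller induced subgraphs (which remain partial $2$-trees and $C_5$-induced-minor-free by heredity), invoke minimality to obtain width-$2$ matched tree decompositions of each piece, and glue them by a tree edge through bags containing the cut vertex. The paper glues all components of $G - v$ in a star at once and tacitly treats only the connected case (per its standing connectivity assumption), whereas you do a binary split and handle disconnectedness explicitly, but these are cosmetic differences.
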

\begin{proof}
Suppose $G$ has a cut vertex $v$. By deleting $v$, we obtain smaller graphs $G_1, \dotsc, G_m$ for some $m > 1$. Any cycle in $G$ is also a cycle in $G_i$ for some $i$. Since $G$ is minimal, each $G_i$ has a matched tree decomposition of width at most $2$, say $T_i$. For all $i$, let $B_i$ be a bag in $T_i$ that contains $v$. Add edges between $B_1$ and $B_i$ for all $1 < i \leq m$. This is a matched tree decomposition for $G$ of width $2$, a contradiction.
\end{proof}

We take a 2-tree $G'$ that is a super-graph of $G$ and has the same vertex set as $G$.

\begin{claim}\label{claim:common}
  Let $uv$ be an edge in $G'$ but not in $G$. Then $u$ and $v$ have a common neighbor in $G$.
\end{claim}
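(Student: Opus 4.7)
The plan is to prove the claim by contradiction. Suppose there is a fill edge $uv \in E(G') \setminus E(G)$ with $N_G(u) \cap N_G(v) = \emptyset$. The strategy combines the 2-connectivity of $G$ (Claim~\ref{claim:G2-conn}) with the $C_5$-induced-minor-free hypothesis to force a contradiction. A key preliminary observation I will rely on is that any $C_5$-induced-minor-free graph contains no induced cycle of length at least $5$: contracting any $k-5$ consecutive edges of an induced $C_k$ with $k \ge 5$ yields $C_5$ as an induced minor. This structural consequence will be invoked repeatedly.

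Using the 2-connectivity of $G$ and the fact that $uv \notin E(G)$, Menger's theorem gives two internally vertex-disjoint $u$-$v$ paths in $G$, so there is a cycle through both $u$ and $v$. Let $C$ be a shortest such cycle in $G$, with $u$-$v$ arcs $P : u = w_0, w_1, \dotsc, w_k = v$ and $Q : u = z_0, z_1, \dotsc, z_\ell = v$. The assumption $N_G(u) \cap N_G(v) = \emptyset$ forces $k, \ell \ge 3$, so $|C| \ge 6$. A standard shortest-cycle analysis rules out any chord of $C$ in $G$ that is incident to $u$ or $v$ or that lies within a single arc: any such chord would produce a strictly shorter cycle through both $u$ and $v$, contradicting the choice of $C$. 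Consequently every chord of $C$ in $G$ must be a \emph{cross chord} of the form $w_i z_j$ joining the two arcs.

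If $C$ has no chord in $G$ at all, then $C$ is an induced cycle of length at least $6$, contradicting the preliminary observation and we are done. Otherwise pick a cross chord $w_i z_j$; this chord partitions $C$ into a shorter cycle $C_u$ through $u$ on vertex set $\{u, w_1, \dotsc, w_i, z_j, \dotsc, z_1\}$ and a shorter cycle $C_v$ through $v$ on $\{v, w_{i+1}, \dotsc, w_{k-1}, z_{\ell-1}, \dotsc, z_j\}$. I plan to iterate the same shortest-cycle analysis on one of $C_u, C_v$ in $G$, driving down the length until either the iteration produces an induced cycle of length $\ge 5$ (contradicting the preliminary observation) or terminates at a $C_4$ through both $u$ and $v$, whose two internal vertices would then be common neighbors of $u$ and $v$ in $G$ (contradicting the starting assumption).

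The main obstacle I expect is formalising this recursive reduction cleanly: as we recurse the cross chords may stack, and we must control their configuration. The hypothesis $\tw(G) \le 2$, equivalently $G$ is $K_4$-minor-free, is crucial here because it severely restricts how cross chords can be nested or crossed (two ``crossing'' cross chords together with the arcs of $P$ and $Q$ threaten a $K_4$-minor). Ruling out such crossings forces the cross chords into a near-laminar pattern, from which the iteration can be shown to terminate with one of the two contradictions described above, completing the proof.
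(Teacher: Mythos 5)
Your proposal correctly identifies the opening moves (2-connectivity $\Rightarrow$ two internally disjoint $u$-$v$ paths; induced arcs force cross chords; a chordless cycle of length $\ge 6$ is forbidden by $C_5$-induced-minor-freeness), but the second half has a genuine gap, and you are missing the one idea that makes the paper's proof short.

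The key insight you overlook is that the contradiction lives in $G'$, not in $G$. Remember $G'$ is a 2-tree containing $G$ on the same vertex set, hence $\tw(G')=2$ and $G'$ is $K_4$-minor-free. Once you have the cycle $PP'$ through $u$ and $v$ with a single cross chord $u_i v_j$, the fill edge $uv\in E(G')\setminus E(G)$ completes a $K_4$-minor in $G'$ immediately: take branch sets $\{u\}$, $\{v\}$, the internal vertices of $P$, and the internal vertices of $P'$. The first two are joined by the fill edge, each path-branch is joined to both $\{u\}$ and $\{v\}$ by the end edges of the arcs, and the two path-branches are joined by the cross chord. That is a $K_4$-minor in $G'$, contradicting $\tw(G')=2$. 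No recursion, no control of chord configurations needed. You noticed that $K_4$-minor-freeness is relevant, but you tried to extract a $K_4$-minor from two crossing cross chords \emph{inside $G$}, which is both harder and unnecessary.

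Separately, your recursion as described is not well-founded. After splitting along a cross chord, you recurse on $C_u$ (which contains $u$ but not $v$) or $C_v$ (which contains $v$ but not $u$). The invariant ``cycle through both $u$ and $v$'' is broken at the first step, so the advertised termination case ``a $C_4$ through both $u$ and $v$ whose internal vertices are common neighbors'' can never actually arise from that recursion. To repair it you would need a different invariant (or to stop recursing altogether), which brings you back to the direct $K_4$-minor-in-$G'$ argument.
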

\begin{proof}
Since $G$ is 2-connected, there are two internally disjoint paths $P = uu_1\dotsm u_kv$ and $P' = uv_1\dotsm v_\ell v$ in $G$ between $u$ and $v$. We may assume that $P$ and $P'$ are induced paths in $G$. If $k$ or $\ell$ is $1$, then $u$ and $v$ has a common neighbor. So we assume $k$ and $\ell$ are at least $2$. Now $PP'$ is a cycle of length at least $6$ and it must have a chord (otherwise, this is a $C_5$-induced-minor in $G$.). Therefore, there is some $i$ and $j$ such that $u_i$ is adjacent to $v_j$. the edges on $P$, $P'$, together with this chord is a $K_4$-minor in $G'$, a contradiction.
\end{proof}

Let $T$ be a standard tree decomposition of $G'$. This is a matched tree decomposition for $G$. We show that this is also a matched tree decomposition for $G$. It is enough to show that a set $\{u, v, w\}$ which forms a triangle in $G'$ induces either a $P_3$ or a triangle in $G$. Assuming the contradiction, we get that $v$ (say) is not adjacent to both $u$ and $w$. By Claim~\ref{claim:common}, we obtain vertices $x$ that is a common neighbor of $u$ and $v$ in $G$, and $y$ that is a common neighbor of $v$ and $w$ in $G$. Since $G'$ is $K_4$-minor-free, we have $x \neq y$, $wx \not\in E(G)$, $xy \notin E(G)$, and $uy \not\in E(G)$. If $uw \in E(G)$, then $uwyvxu$ is an induced $C_5$ in $G$, a contradiction. So $uw \not\in E(G)$, and by Claim~\ref{claim:common}, we obtain a vertex $z$ that is a common neighbor of $u$ and $w$ in $G$. Again, since $G'$ is $K_4$-minor-free, we have $z \neq x$, $z \neq y$, $zx \not\in E(G)$, $zv \notin E(G)$, and $zy \not\in E(G)$. Then $uzwyvxu$ is an induced $C_6$ in $G$, a contradiction. This completes the proof.
\end{proof}

We now show that the graph $X$ (See Figure~\ref{fig:x}) is a partial 2-tree that has matched treewidth more than 3.
\begin{lemma}
    $mtw(X)\geq 4$
    \label{xmtw}
\end{lemma}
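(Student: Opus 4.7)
The plan is to argue by contradiction: assume $X$ admits a matched tree decomposition $T$ of width at most $3$, so that every bag has size at most $4$. Two preliminary facts drive the proof. First, $X$ is triangle-free, so every size-$3$ bag must induce a $P_3$ in $X$ and every size-$4$ bag must induce a subgraph of $X$ with a perfect matching. Second, the three degree-$4$ vertices $u_0, u_2, u_4$ (which I will call \emph{centres}) are pairwise non-adjacent, and their pairwise common neighbourhoods in $X$ are exactly the two-element sets $\{u_1, u'_1\}$, $\{u_3, u'_3\}$, and $\{u_5, u'_5\}$.

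The first step is to rule out the possibility that the three subtrees $T_{u_0}, T_{u_2}, T_{u_4}$ of $T$ (bags containing each centre) are pairwise intersecting. By the Helly property for subtrees of a tree, pairwise intersection would produce a common bag $B \supseteq \{u_0, u_2, u_4\}$; but $B$ has size $3$ or $4$, and in the size-$3$ case the three centres are mutually non-adjacent so $B$ is not matched, while in the size-$4$ case the extra vertex is adjacent to exactly two of the centres, making all induced edges of $X[B]$ share a common endpoint and so precluding a perfect matching. Hence some pair of subtrees is disjoint, and using the $3$-fold rotational symmetry of $X$ I may assume $T_{u_0} \cap T_{u_2} = \emptyset$.

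The second step pins down the ``bridge'' in $T$ between $T_{u_0}$ and $T_{u_2}$. The bags covering $u_0 u_1$ and $u_1 u_2$ lie in $T_{u_0}$ and $T_{u_2}$ respectively, so the connectivity of $T_{u_1}$ forces every bag on the unique path between the two disjoint subtrees to contain $u_1$; the same argument with $u'_1$ shows those bags all contain $u'_1$. A matched bag of size at most $4$ containing the non-adjacent pair $\{u_1, u'_1\}$ but neither $u_0$ nor $u_2$ cannot exist, since $N(u_1) \cap N(u'_1) = \{u_0, u_2\}$ blocks every possible $P_3$ or perfect-matching configuration. Therefore the bridge is a single edge, and a similar check rules out size-$4$ at its endpoints, forcing exactly $B_0^\ast = \{u_0, u_1, u'_1\}$ and $B_2^\ast = \{u_2, u_1, u'_1\}$.

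The main obstacle, and the step that ties everything together, is to derive a contradiction using the third centre $u_4$. Since $u_4$ lies in neither $B_0^\ast$ nor $B_2^\ast$, the connected subtree $T_{u_4}$ must lie entirely in one of the two components of $T$ obtained by removing the bridge edge. If $T_{u_4}$ lies on the $T_{u_0}$ side, then the bag covering $u_2 u_3$ sits on the $T_{u_2}$ side while the bag covering $u_3 u_4$ sits on the $T_{u_0}$ side; hence $T_{u_3}$ must cross the bridge, putting $u_3$ into $B_0^\ast = \{u_0, u_1, u'_1\}$, which is impossible. The symmetric argument, with $u_5$ replacing $u_3$ and the edges $u_0 u_5$, $u_5 u_4$ replacing the ones above, handles the case where $T_{u_4}$ lies on the $T_{u_2}$ side. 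Every branch ends in a contradiction, so $\mtw(X) \geq 4$.
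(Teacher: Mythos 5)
Your proof is correct but takes a genuinely different route from the paper's. The paper roots a reduced matched tree decomposition at a leaf bag $B_\ell$ and argues by case analysis that some bag is forced to have size at least $5$, tracking how $u_0, u_2, u_4$ and their common neighbours must appear as one walks down from the leaf. You instead invoke the Helly property of subtrees of a tree to show that two of the three subtrees $T_{u_0}, T_{u_2}, T_{u_4}$ must be disjoint, and then analyse the bridge path between them: connectivity of $T_{u_1}$ and $T_{u'_1}$ forces every bridge bag to contain both $u_1$ and $u'_1$, and since $N(u_1)\cap N(u'_1)=\{u_0,u_2\}$ no matched bag of size at most $4$ can hold $\{u_1,u'_1\}$ without also holding $u_0$ or $u_2$; this collapses the bridge to a single tree-edge between $B_0^\ast=\{u_0,u_1,u'_1\}$ and $B_2^\ast=\{u_2,u_1,u'_1\}$, and then the third centre $u_4$ has no consistent side to live on, since either $T_{u_3}$ or $T_{u_5}$ would have to cross the bridge and put $u_3$ or $u_5$ into $B_0^\ast$. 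Both proofs are self-contained and correct; yours is more structural and makes the obstruction transparent (the three pairwise non-adjacent centres with doubled common neighbourhoods create a forced bridge that width-$3$ bags cannot support), while the paper's leaf-rooted case analysis is more local and somewhat terser in its justification of individual steps.
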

\begin{proof}
    Consider a reduced, matched tree decomposition $T$ of $X$. Suppose for contradiction that $X$ has width strictly less than $4$. Consider a leaf bag $B_l$ in $T$. The bag $B_l$ must contain $u_0$ or $u_2$ or $u_4$ as all edges in $X$ are incident on one of these vertices. All cases are symmetric so wlog, we can assume $B_l$ is a super-set of $\{u_0,u_1\}$. Consider the tree $T$ to be rooted at $B_l$. As neither $u_0$ nor $u_1$ is pendant and $T$ is reduced, it follows that $B_l$ must contain at least one other vertex. Let $B_p$ be the bag adjacent to $B_l$. We claim that $\{u_0, u_2\}$ or $\{u_0, u_4\}$ is contained in $B_p$. If $u_0 \notin B_p$, then $\{u_1,u_1',u_5,u_5'\} \subseteq B_l$ and therefore $|B_l| \geq 5$. Suppose $u_1 \in B_p$. Since $T$ is reduced, this means that some vertex other than $u_0$ and $u_1$ was in $B_l$ but not in $B_p$. If this vertex is $u_2$ or $u_4$, then $|B_l| \geq 5$. If it is one of the other vertices, either $u_2$ or $u_4$ is in $B_l$ and $B_p$. If $u_1 \notin B_p$, then $B_l$ must contain $u_2$ and therefore so must $B_p$. Now, we assume wlog that both $u_0$ and $u_2$ are in the bag $B_p$. If $B_p$ also contains $u_4$, then we are done as the size of matched bag would be at least $5$.
    
    Let $B_c$ be a descendant bag of $B_p$ in $T$ such that $B_c$ is the root of the subtree of $T$ that contains $u_4$. Again, if $\{u_0, u_2, u_4\} \subseteq B_c$, then $|B_c| \geq 5$ and we are done. So assume wlog that $u_2 \notin B_c$. If $u_0 \notin B_c$, then $B_c \supseteq \{u_5, u_5', u_3, u_3'\}$ and we are done. This is because these vertices are common neighbors of $u_4$ with $u_0$ or common neighbors of $u_4$ with $u_2$ and $B_c$ is the root bag of the sub-tree where $u_4$ appears in $T$. So $u_0$ is also in $B_c$. But $B_c$ also contains $u_3$ and $u_3'$ and $B_c$ must contain at least one more vertex to match $u_0$.
\end{proof}

We now prove that forbidding $X$-induced-minor (for partial $2$-trees) exactly gives us the class of graphs with $\mtw(.) \leq 3$.

\begin{definition}
 A minimum chordal completion $\Tilde{G}$ of some graph $G$ is said to be \emph{special} if no independent set of size $3$ in $G$ induces a clique in $\Tilde{G}$. We write smcc instead of special minimum chordal completion.
\end{definition}

Note that treewidth of a graph is same as the treewidth of its smcc (if it exists).
\xminorfree*
\begin{proof}
  (Proof for the ``if'' direction) Let $G$ be a graph with a matched tree decomposition $U$ of width $3$. Suppose for contradication that $G$ has an $X$-induced-minor. Then, we can also obtain $X$ by contracting some edges of an induced subgraph of $X'$ of $G$. Let $T'$ be the tree decomposition of $X'$ obtained from $U$ by deleting vertices in the set $S$ from all bags. Since $X$ can be obtained from $X'$ by contracting some edges, it will have a set of three vertices, say $u_0, u_2, u_4$, that form an independent set and internally vertex-disjoint paths $Q_{i+1}, Q'_{i+1}$ from $u_i$ to $u_{i+2}$ \footnote{All indices $i$ for $u_i$ and $u_i'$ in this proof are modulo $6$.}. We define the map $f: V(X') \mapsto V(X)$ is such that $f(u_i) = u_i$ for $i \in \{0, 2, 4\}$, $f$ maps all internal vertices in the path $Q_{i+1}$ to $u_{i+1}$ and $Q'_{i+1}$ to $u'_{i+1}$. The map $f$ corresponds to the edge contraction that we use to obtain $X$ from $X'$. Let $T$ be the tree-decomposition of $X$ obtained from $T'$ by applying $f$ to all vertices in all bags. We can remove bags of size $0$ and size $1$ from $T$ using standard techniques.
  
  Now we modify $T$ to obtain a matched tree decomposition of width $3$ for $X$ thereby deriving a contradiction. Every bag $B$ in $T$ has a corresponding bag $P$ in $T'$. Note that if $uv$ is an edge in $X'$, then $f(u)f(v)$ is an edge in $X$, whenever $f(u) \neq f(v)$. If $B$ has size $4$, then since $U$ is a matched tree decomposition, the bag $P$ must also be matched. Therefore, if all bags in $T'$ has size $4$, we are done. Otherwise, there is a bag $B$ in $T$ that has size less than three. We root the tree $T$ at $B$ and modify $T$ in a top-down fashion. We need the following claim to prove the correctness of this procedure.
  
  \begin{claim}
      Let $B'$ be some bag in $T$ of size $3$, then the vertices in $B'$ cannot form an independent set of size $3$ in $X$.
  \end{claim}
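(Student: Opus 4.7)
The plan is to proceed by contradiction: assuming $B'$ is an independent set of size $3$ in $X$, I trace $B'$ back through $f$ to the bag $P$ of $T'$ with $f(P) = B'$, and thence to the bag $Q$ of $U$ satisfying $P = Q \cap V(X')$. Since $U$ is a matched tree decomposition of $G$ of width $3$, the bag $Q$ is matched in $G$ with $|Q| \leq 4$, so the aim is to show that $G[Q]$ has too few edges to carry the required matching.

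The main technical tool will be a simple lifting observation: whenever $a, b \in V(X)$ are non-adjacent in $X$ and $x, y$ are distinct vertices of $X'$ with $f(x) = a$ and $f(y) = b$, then $xy \notin E(G)$. Indeed, $X'$ is induced in $G$, so $E(X') = E(G) \cap \binom{V(X')}{2}$; and if $xy$ were an edge of $X'$, then contracting the internally-disjoint paths that realize $X$ from $X'$ would leave $ab \in E(X)$, contradicting independence of $\{a,b\}$ in $X$.

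With this observation in hand, I would split on $|P| \in \{3,4\}$, which is forced by $|B'| = 3$ and $|Q| \leq 4$. When $|P| = 3$, the restriction $f|_P$ is a bijection onto $B'$ and the observation renders $G[P]$ edgeless; then whether $|Q| = 3$ (so $Q = P$ is independent, hence unmatched) or $|Q| = 4$ (so the extra vertex of $Q$ lies in $V(G) \setminus V(X')$ and can contribute at most one matching edge, leaving two vertices of $P$ uncovered), the bag $Q$ fails to be matched. When $|P| = 4$, necessarily $Q = P$ and $f$ must merge exactly two vertices $x_1, x_2 \in P$ into a single $a_1 \in B'$; because each of $u_0, u_2, u_4$ has a singleton $f$-preimage, $a_1$ must be an internal vertex of $X$, forcing $x_1, x_2$ to lie on the same contracted path of $X'$. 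The lifting observation then leaves $\{x_1, x_2\}$ as the only possible edge of $G$ inside $P$, which is insufficient for a perfect matching on four vertices.

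The step I expect to require the most care is the $|P| = 4$ case: one must first extract, from the merging structure of $f$, the fact that the two pre-images $x_1, x_2$ must come from a common contracted path, and only then invoke the observation to rule out every other potential edge in $G[P]$.
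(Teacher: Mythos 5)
Your proof is correct and follows essentially the same reasoning as the paper's: both rest on the observation that $f$ preserves adjacency between vertices with distinct images, and both reduce the claim to the matched property of the corresponding bag of $U$ — you phrase it contrapositively (an independent $B'$ forces that bag to be unmatched), while the paper argues forward (the matched bag of $U$ leaves at least one edge in $B'$). One small note: the ``same contracted path'' step you flag in the $|P|=4$ case is not actually needed — $\{x_1,x_2\}$ is the unique pair whose $f$-images coincide, so the lifting observation already applies to every other pair in $P$ and caps $G[P]$ at a single possible edge, with no reference to the path structure of $f^{-1}$.
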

  \begin{proof}
    If the corresponding bag $P$ to $B'$ in $U$ has size $4$, then since $U$ is a matched tree decomposition, the bag $P$ is matched. So $B'$ was obtained by the deletion of a vertex or the contraction of an edge. Both of which will retain at least one edge in the bag.
    
    If the corresponding bag $P$ has size $3$, then since $U$ is a matched tree composition, the subgraph induced by $B'$ has $P_3$ as a subgraph.
  \end{proof}
  
  So we can assume the following:
  \begin{enumerate}
      \item Every bag in $T$ of size $4$ is matched.
      \item Every bag in $T$ of size $3$ has at least one edge.
      \item There are no bags in $T$ of size $0$ or size $1$.
  \end{enumerate}
  
  We now process a bag $B$, starting from the root, assuming that all ancestors of $B$ are already matched. We split the procedure into two cases.
  
  ($|B| = 3$) Let $B = \{u, v, w\}$, $uv$ is an edge, and $w$ is not adjacent to $u$ or $v$. If $w$ is present in the parent of $B$, then it is matched to some $w'$. We add $w'$ to $B$ as well. If $w$ is not present in the parent, then we choose a nearest descendant $B'$ of $B$ in $T$ that contains a neighbor $w'$ of $w$. Every bag in the path from $B$ to $B'$ contains $w$. We add $w'$ to every bag in this path. Now, the bag $B$ is matched. Also, every other bag that we modified (in the path from $B$ to $B'$) now has size $3$ and has at least one edge or has size $4$ and is matched. So all properties are preserved.
  
  ($|B| = 2$) Let $B = \{u, v\}$ and $uv$ is not an edge. If $u$ is present in $B$'s parent in $T$, then it is matched to some $u'$ there. We add $u'$ to $B$. Otherwise, we choose a nearest descendant $B'$ of $B$ in $T$ that contains a neighbor $u'$ of $u$. We add $u'$ to all bags in the path from $B$ to $B'$. $B$ is now a bag of size $3$ with at least one edge. Also, every other bag that we modified now has size $3$ and has at least one edge or has size $4$ and is matched. Now, if $B$ is matched we are done. Otherwise, we apply the previous case to $B$.
  
  (Proof for the ``only if'' direction) We now prove a series of lemmas that will prove the theorem.

\begin{lemma}\label{lem:IntDisPath}
Let $G$ be a graph with $tw(G) \leq 2$ and $\Tilde{G}$ is a smcc of $G$. If there exists three internally disjoint paths from a vertex $u$ to $v$ in $G$, then $uv$ is an edge in $\Tilde{G}$ 
\end{lemma}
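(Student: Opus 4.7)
The plan is to argue by contradiction: suppose $uv \notin E(\tilde{G})$. Since $\tilde{G}$ is a minimum chordal completion of $G$, the graph $\tilde{G}$ is chordal and satisfies $\tw(\tilde{G}) = \tw(G) \le 2$ (minimum chordal completion is defined via minimizing the largest clique). The three internally vertex-disjoint $uv$-paths in $G$ persist in $\tilde{G}$ since $G \subseteq \tilde{G}$, and they remain internally disjoint because adding edges to $G$ does not introduce shared internal vertices. By Menger's theorem, the minimum $uv$-vertex-separator in $\tilde{G}$ therefore has size at least $3$.

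The next step is to use chordality to refute this. Let $S$ be a minimal $uv$-separator in $\tilde{G}$. A well-known property of chordal graphs is that every minimal separator induces a clique. Since $\tw(\tilde{G}) \le 2$, the largest clique in $\tilde{G}$ has size at most $3$, so $|S| \le 3$. I would rule out $|S| = 3$ as follows: assume $S = \{x, y, z\}$ is a triangle separating $u$ from $v$, and let $C_u, C_v$ be the components of $\tilde{G} \setminus S$ containing $u$ and $v$. Minimality of $S$ forces each vertex of $S$ to have at least one neighbour in $C_u$ and at least one neighbour in $C_v$; otherwise removing it would still leave $S \setminus \{s\}$ as a $uv$-separator. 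Contracting $C_u \cup \{u\}$ to a single super-vertex $\alpha$ and $C_v \cup \{v\}$ to a single super-vertex $\beta$, the vertices $\alpha, x, y, z$ become pairwise adjacent (the three edges within $\{x,y,z\}$ come from the triangle; the three edges from $\alpha$ come from the minimality-forced neighbours in $C_u$). Hence $\tilde{G}$ contains $K_4$ as a minor, contradicting $\tw(\tilde{G}) \le 2$. Therefore $|S| \le 2$, contradicting the Menger lower bound of $3$, so $uv$ must be an edge of $\tilde{G}$.

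The main obstacle is invoking the two standard but non-trivial facts cleanly: the equality $\tw(\tilde{G}) = \tw(G)$ for a minimum chordal completion, and the fact that each vertex of a minimal separator in a chordal graph has a neighbour in every full component (equivalently, in both $C_u$ and $C_v$). A technically equivalent route that sidesteps the chordal-separator structure is to work directly with a clique tree $T$ of $\tilde{G}$: since $uv \notin E(\tilde{G})$, no bag of $T$ contains both $u$ and $v$, so there is an edge $B_1 B_2$ of $T$ separating the $u$-bags from the $v$-bags, and $|B_1 \cap B_2| \le 2$ because bags have size at most $3$ and $B_1 \cap B_2$ contains neither $u$ nor $v$; this yields a $uv$-separator of size at most $2$, again contradicting Menger.
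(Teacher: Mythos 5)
Your proof is correct, but it takes a genuinely different route from the paper's. The paper argues directly with the three paths: it replaces two of them by shortest (hence chordless) $u$--$v$ paths $Q_1', Q_2'$ in the induced subgraphs $\tilde{G}[V(Q_1)]$ and $\tilde{G}[V(Q_2)]$, observes that $Q_1' \cup Q_2'$ is a cycle of length at least four and so has a chord by chordality, notes that the chord must run between the interiors of the two paths (chordlessness rules out a chord within one path, and shortest-path minimality rules out a chord incident to $u$ or $v$), and then uses the third path $Q_3$ together with this chord to exhibit a $K_4$ minor, contradicting $\tw(\tilde{G}) \le 2$. Your argument instead goes through separator structure: the easy direction of Menger forces any $u$--$v$ separator in $\tilde{G}$ to have size at least $3$; Dirac's theorem that minimal separators in a chordal graph are cliques, combined with the clique-size bound for $\tw \le 2$, caps the minimal separator at $3$; and the full-component property of minimal separators plus contraction of the two full components produces the contradictory $K_4$ minor in the boundary case $|S|=3$. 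Your clique-tree alternative is cleaner still: an edge of the clique tree separating the $u$-bags from the $v$-bags gives a separator of size at most $2$ immediately, killing the $|S|=3$ case before it starts. Both routes are sound; the paper's is more self-contained, needing only that $\tilde{G}$ is chordal and $K_4$-minor-free, while yours imports heavier separator machinery (Menger, Dirac's separator characterization, full components, or clique trees) in exchange for making the underlying structural fact -- few disjoint paths force a small clique separator -- more transparent.
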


\begin{proof}
Suppose for contradiction $u$ is not adjacent to $v$.
Let $Q_1,Q_2,Q_3$ be three internally disjoint paths from $u$ to $v$. Let $Q_1',Q_2'$ be the shortest path from $u$ to $v$ in $\Tilde{G}[V(Q_1)]$ and $\Tilde{G}[V(Q_2)]$, respectively. 
Note that both of them are of length at least $3$. Since $\Tilde{G}$ is chordal, the cycle obtained from $Q_1'$ and $Q_2'$ has a chord say $xy$. Note that $\{x,y\} \cap \{u,v\} = \emptyset$. Without loss of generality, we may assume that $x$ is in $Q_1'$. So $y$ is in $Q_2'$. Therefore, $Q_1',Q_2'$ and $Q_3$ gives a minor of $K_4$ in $\Tilde{G}$. This is a contradiction.
\end{proof}

\begin{lemma} \label{lem:MFThenSmcc}
 Let $G$ be a $X$-induced-minor-free connected graph with tree-width $2$. Then, there exists a smcc $\Tilde{G}$ of $G$. 
\end{lemma}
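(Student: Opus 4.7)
The plan is to proceed by induction on $|V(G)|$, first reducing to the case where $G$ is 2-connected. If $G$ has a cut vertex then every block of $G$ is an $X$-induced-minor-free partial 2-tree on fewer vertices, so by induction each block admits an smcc; gluing these smcc's at the cut vertices produces an smcc of $G$, since any two distinct blocks share at most one vertex, so every triangle of the glued chordal completion lies inside a single block, and any independent triangle would thus already have existed in some block's smcc. I henceforth assume $G$ is 2-connected.

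Take any minimum chordal completion $\tilde G$ of $G$. Because adding edges preserves 2-connectedness, and a 2-connected chordal graph on at least three vertices with clique number at most $3$ is a 2-tree, $\tilde G$ is a 2-tree on $V(G)$. Suppose for contradiction that $\tilde G$ is not special, so there exist vertices $u,v,w$ that are pairwise non-adjacent in $G$ but form a triangle in $\tilde G$. The goal is to derive an $X$-induced-minor of $G$, contradicting the hypothesis. Consider the canonical tree decomposition of $\tilde G$ whose bags are the triangles of $\tilde G$, rooted at the bag $B=\{u,v,w\}$. Removing $B$ splits the decomposition tree into at most three subtrees, one through each edge of $B$; let $V_{uv}, V_{vw}, V_{wu}$ denote their vertex sets (each containing the endpoints of the corresponding edge plus all interior vertices). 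Standard tree-decomposition properties give $V_{uv}\cap V_{vw}=\{v\}$, $V_{uv}\cap V_{wu}=\{u\}$, $V_{vw}\cap V_{wu}=\{w\}$, and every edge of $G$ lies inside a single $V_{pq}$. Minimality of $\tilde G$ implies that removing $uv$ from $\tilde G$ must create a chordless cycle of length at least $4$, and by the wing structure this cycle lies entirely inside $G[V_{uv}]\cup\{uv\}$; in particular each wing is non-degenerate.

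I next plan to show that each wing $G[V_{uv}]$ contains two internally vertex-disjoint induced $u$-to-$v$ paths (and analogously for the other two wings). The witnessing chordless cycle supplies one such path, and 2-connectedness of $G$ together with the no-cross-edges property of the wing partition supplies a second: any $u$-$v$ separator in $G[V_{uv}]$ would also separate $u$ from $v$ in $G$. Choosing the six resulting paths to be shortest induced paths inside their respective wings, and noting that the three wings pairwise intersect only in one of $u,v,w$, the six paths are globally pairwise internally disjoint, and the induced subgraph of $G$ on their union together with $\{u,v,w\}$ has no chord edges: cross-wing chords are ruled out by the wing partition, and intra-wing chords would shorten one of the chosen shortest paths. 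Contracting each of the six paths down to a single internal vertex then exhibits $X$ as an induced minor of $G$, the desired contradiction.

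The main obstacle is establishing the second internally vertex-disjoint $u$-to-$v$ path entirely inside $G[V_{uv}]$ (and analogously for the other wings). Global 2-connectedness of $G$ does not immediately give this, since one of the two internally disjoint paths in $G$ could in principle detour through $w$ and cross the other wings. Overcoming this requires exploiting minimality of $\tilde G$ to locate the forced witnessing chordless cycle inside $G[V_{uv}]\cup\{uv\}$, and then combining it with the no-cross-edges property of the wing partition to rule out any $u$-$v$ separator in $G[V_{uv}]$. A secondary subtlety is that a priori some wing could degenerate to just the edge $uv$, but this is ruled out by the same minimality argument, since otherwise $\tilde G-uv$ would still be chordal.
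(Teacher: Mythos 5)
The core difficulty is that your approach attempts to establish a strictly stronger claim than what the lemma requires, namely that \emph{every} minimum chordal completion of an $X$-induced-minor-free $2$-connected partial $2$-tree is special. This stronger claim is false. Take $G = C_6$ with vertices labelled $u, a, v, b, w, c$ in cyclic order, so that $u, v, w$ alternate. Adding the three chords $uv, vw, wu$ yields a chordal $2$-tree with $9$ edges, which is a minimum chordal completion; the triangle $\{u,v,w\}$ is an independent set of $C_6$, so this completion is not special. Since $C_6$ has only six vertices, it certainly cannot contain the nine-vertex graph $X$ as an induced minor, so no contradiction can possibly be derived. (Of course $C_6$ does admit an smcc, e.g.\ the fan from $a$, so the lemma holds; you simply cannot reach it by reasoning about an arbitrary minimum chordal completion.)

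The failure surfaces precisely at the step you flag as the main obstacle. In the example, the wing $V_{uv}$ is $\{u,v,a\}$ and $G[V_{uv}]$ contains the single path $u,a,v$; there is no second internally disjoint path, and $a$ separates $u$ from $v$ inside the wing. This does not contradict $2$-connectedness of $G$: a path from $u$ to $v$ in $G$ may leave the wing through $w$ and traverse the other two wings (here $u,c,w,b,v$), so a separator of $G[V_{uv}]$ need not separate $u$ from $v$ in $G$. The no-cross-edges property only forces such a path to exit via one of $u,v,w$, not to stay inside $V_{uv}$. The auxiliary claim that the chordless cycle of $\tilde G - uv$ lies inside $G[V_{uv}] \cup \{uv\}$ is also incorrect: that cycle consists of $\tilde G$-edges rather than $G$-edges, and it can pass through $w$. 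In the $C_6$ example the unique chordless cycle of $\tilde G - uv$ of length at least $4$ is $u,a,v,w$, which uses the added edges $vw$ and $wu$ and leaves $V_{uv}$. By contrast, the paper's proof does not try to show that minimum chordal completions are automatically special; it takes a vertex-minimal counterexample, observes it must be biconnected and hence series-parallel, and performs a case analysis on the series-parallel composition, explicitly \emph{constructing} a special completion by gluing together smccs of proper subgraphs, using Lemma~\ref{lem:IntDisPath} only to justify that certain edges may be safely added.
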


\begin{proof}
Suppose for contradiction that there exists a minimal counter-example $G$ with $\tw(G) \leq 2$ that is $X$-induced-minor-free but does not have a smcc. Since $G$ is minimal, it has to be biconnected and by the characterization of treewidth $2$ graphs, we can assume that $G$ is series-parallel. We will use the series-parallel graph characterization to prove that $G$ does not exist.

Let $s$ and $t$ be the source and terminal of $G$. Since $G$ is a counter example, it is not an edge. Suppose $(G,s,t)$ is the series composition of $(G_1,s_1,t_1)$ with $(G_2,s_2,t_2)$ ($t_1$ is identified with $s_2$). Since $G$ is a minimal counter example, there exists smccs $\Tilde{G_1}$ of $G_1$ and $\Tilde{G_2}$ of $G_2$. Let $\Tilde{G}$ be the graph obtained from $\Tilde{G_1}$ and $\Tilde{G_2}$ by identifying $t_1$ with $s_2$. 
The graph $\Tilde{G}$ is a smcc of $G$, a contradiction. Hence, the graph $(G,s,t)$ is the parallel composition of strictly smaller graphs $(G_1,s_1,t_1)$ and $(G_2,s_2,t_2)$.

\begin{claim}\label{claim:not-edge-no-edge}
    Let $G$ be the parallel composition of two smaller graphs $G_1$ and $G_2$. Furthermore, assume that $G_1$ is the series composition of two or more graphs $G_i'$ for $1 \leq i \leq m$. Then, for all $1 \leq i \leq m$, if $G_i'$ is not an edge, then $s_i't_i'$ is also not an edge.
\end{claim}
\begin{proof}
    Suppose there exists $1 \leq i \leq m$ such that $G_i'$ is not an edge but $s_i't_i'$ is an edge in $G_i'$. So $G_i'$ is of order at least $3$. There exists a smcc $G_i$ of the graph obtained from $G$ by deleting all the internal vertices of $G_i'$ (Since $G$ is a minimum counter example). Again there exists a smcc $\Tilde{G_i'}$ of $G_i'$. Define $\Tilde{G}$ to be the graph obtained from $G_i$ and $\Tilde{G_i'}$ by identifying the edge $s_i't_i'$. The graph $\Tilde{G}$ is a smcc of $G$. 
\end{proof}

Note that at least one of $G_1$ or $G_2$ is not an edge. We now split the proof into two exhaustive cases:

\begin{itemize}
    \item (Case 1) Either $G_1$ or $G_2$ is an edge.

    Without loss of generality we may assume that $G_2$ is an edge. So $s_1t_1$ is not an edge in $G_1$ (as otherwise $G_1=G$). Suppose $(G_1,s_1,t_1)$ is a parallel composition of $(G_3,s_3,t_3)$ and $(G_4,s_4,t_4)$. Since $s_1t_1 \not\in E(G_1)$, we have $s_3t_3 \not\in E(G_3)$ and $s_4t_4 \not\in E(G_4)$. Therefore, both $G_3$ and $G_4$ has at least one internal vertex. Let $G_3'$ be the graph obtained from $G_3$ by adding the edge $s_3t_3$ and $G_4'$ be the graph obtained from $G_4$ by adding the edge $s_4t_4$. Since $G$ is a minimum counter example, there exists smccs $\Tilde{G'_3}$ of $G'_3$ and $\Tilde{G_4'}$ of $G_4'$. Note that $G$ is obtained from $G_3'$ and $G_4'$ by identifying edges $s_3t_3$ and $s_4t_4$. So the graph $\Tilde{G}$ obtained from $\Tilde{G_3'}$ and $\Tilde{G_4'}$ by identifying those edges is a smcc of $G$. This is a contradiction.
    
    We may assume that $(G_1,s_1,t_1)$ is a series composition of smaller graphs $(G_1',s_1',t_1')$, $(G_2',s_2',t_2')$, $\dotsc$, $(G_m',s_m',t_m')$ for some $m \geq 2$ such that for all $2 \leq i \leq m$, the vertex $t_{i-1}'$ is identified with $s_i'$ and for all $1 \leq i \leq m, (G_j',s_j',t_j')$ is either an edge or a parallel composition of two smaller graphs.

Note that $G$ is not a cycle since $G$ is a counter example. So, there exists some $i$ such that the graph $G_i'$  is not an edge. Therefore, by Claim~\ref{claim:not-edge-no-edge}, we have that $s_i't_i'$ is not an edge.

\begin{claim}\label{claim:noijk}
There does not exist $1 \leq j (\neq i) < k (\neq i) \leq m$ such that $s_j't_j', s_k't_k'$ are non-edges. 
\end{claim}
\begin{proof}
Suppose $s_j't_j', s_k't_k'$ are non-edges, for some $1 < j (\neq i) < k (\neq i) \leq m$. Without loss of generality, we may assume that $i < j$. So $G_i'$ is a parallel composition of two smaller graphs. Hence there exist two internally disjoints paths $Q_1,Q_1'$ (each of length at least $3$), from $s_i'$ to $t_i'$ in $G_i'$. Similarly there exist two internally disjoints paths $Q_2,Q_2'$ (each of length at least $3$), from $s_j'$ to $t_j'$ in $G_j'$. Also there exist two internally disjoints paths $Q_3,Q_3'$ (each of length at least $3$), from $s_k'$ to $t_k'$ in $G_k'$. Again, there exists a path from $t_i'$ to $s_j'$, through $G_{i+1}', \dots G_{j-1}'$, a path from $t_j'$ to $s_k'$ through $G_{j+1}', \dots G_{k-1}'$. Again we have a path from $t_k'$ to $s_i'$ through $G_{k+1}', \dots G_m', {s_m't_1'}, G_1', \dots G_{i-1}'$. This gives a subdivision of $X$. This a contradiction. Hence, the above claim is true. 
\end{proof}

\begin{claim}
\label{clm:EdgeInSCC}
There exists a smcc of $G_i'$ such that $s_i't_i'$ is an edge in the completion. 
\end{claim}

\begin{proof}
Since $s_i'$ is not adjacent to $t_i'$, $(G_i',s_i',t_i')$ is parallel composition of two smaller graphs (By Claim \ref{claim:not-edge-no-edge}), say $(G_3,s_3,t_3)$ and $(G_4,s_4,t_4)$. Let $G'$ be the graph obtained from $G$ by contracting the edge $st$. This is an $X$-induced-minor-free graph. The minimality of $G$ says that $G'$ has an smcc, say $\Tilde{G'}$. By Lemma \ref{lem:IntDisPath}, $s_i'$ is adjacent to $t_i'$ in $\Tilde{G'}$. Note that the graph induced by the vertices of $G_i'$ in $\Tilde{G'}$ is a smcc of $G_i'$ in which $s_i't_i'$ is an edge.

\end{proof}

By Claim~\ref{claim:noijk}, the following cases are now exhaustive.

\begin{itemize}
    \item (Case 1a) $s_j't_j'$ is an edge, for all $1 \leq j (\neq i) \leq m$.
    
    By the above claim, there exists a smcc $\Tilde{G_i'}$ of $G_i'$ in which $s_i't_i'$ is an edge. We construct an smcc of $G$ as follows: It contains the smcc $\Tilde{G_k'}$ for all $G_k'$ and the edge $st$. We then add the edges $ss_i'$ and $s_i't_j'$ for all $1 \leq j (\neq i) \leq m$. These edges ensure that there are no chordless cycles through $G_1$ and $G_2$. This smcc gives us a contradiction. 

    \item (Case 1b) There exists $1 \leq j (\neq i) \leq m$ such that $s_i'$ is not adjacent to $t_j'$.
    
    Without loss of generality we may assume that $i<j$. By the claim \ref{clm:EdgeInSCC}, there exists a smcc $\Tilde{G_i'}, \Tilde{G_j'}$ of $G_i'$ and $s_j'$, respectively, such that $s_i't_i' \in E(\Tilde{G_i'})$ and $s_j't_j' \in E(\Tilde{G_j'})$. We construct an smcc of $G$ as follows: It contains the smcc $\Tilde{G_k'}$ for all $G_k'$ and the edge $st$. We then add edges $t_i's_k'$ for all $1 \leq k < i$, $st_k'$ for all $i < k < j$, and $s_j't$ and $s_j't_k'$ for all $j < k \leq m$. This smcc gives us a contradiction.    
\end{itemize}

The above constructions are illustrated in Figure~\ref{fig:xcase1}. The edges colored black are the edges in $G$. The red colored edges are those added to construct the smcc.

\begin{figure}[ht!]
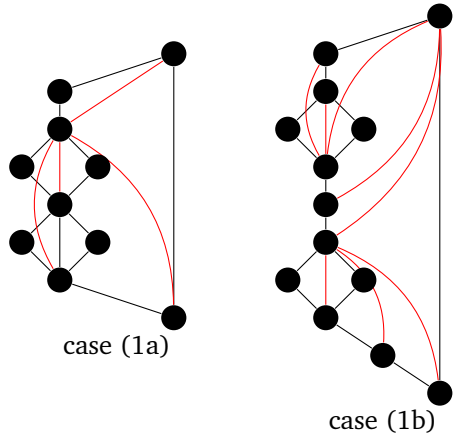

  \ctikzfig{xcase1}
  \caption{Constructing smcc in case 1.\label{fig:xcase1}}
\end{figure}

\item (Case 2) Both $G_1$ and $G_2$ are not edges. Suppose $G_1$ and $G_2$ both are parallel composition of smaller graphs. So there exists two internally disjoints paths $P,P'$ from $s_1$ to $t_1$ in $G_1$ and two internally disjoints paths $Q,Q'$ from $s_2$ to $t_2$ in $G_2$. Let $G'$ be the graph obtained from $G_1$ by adding $Q$. Since, $G$ is a minimum counter example, $G'$ has a smcc $\Tilde{G'}$. Again, $P,P',Q$ are three mutually internally disjoint paths from $s_1$ to $t_1$. By, lemma \ref{lem:IntDisPath}, $s_1t_1$ is an edge in $\Tilde{G'}$. The graph $\Tilde{G_1}$ induced by $V(G_1)$ in $\Tilde{G'}$ is a smcc such that $s_1t_1$ is an edge in $\Tilde{G_1}$. Similarly, we can show that there exists a smcc $\Tilde{G_2}$ of $G_2$ such that $s_1t_2$ is an edge in $\Tilde{G_2}$. The graph obtained from $\Tilde{G_1}$ and $\Tilde{G_2}$ by identifying the edges $s_1$ with $s_2$ and $t_1$ with $t_2$ is a smcc of $G$.

We may assume that at least one of $G_1$ or $G_2$ is a series composition of graphs. We decompose $G_1$ and $G_2$ into series composed graphs repeatedly, until we cannot. i.e., all the individual components are parallel compositions or an edge. Let $(G_1,s_1,t_1)$ be series compositions of $(G_1',s_1',t_1'), (G_2',s_2',t_2'), \dotsc, (G_m',s_m',t_m')$ and let $(G_2,s_2,t_2)$ be series compositions of $(G_{m+1}',s_{m+1}',t_{m+1}'), (G_{m+2}',s_{m+2}',t_{m+2}'), \dotsc, (G_{m+\ell}',s_{m+\ell}',t_{m+\ell}')$ in that order.

\begin{claim}
    There exists a $1 \leq i \leq m+\ell$ such that $s_i't_i'$ is an edge. 
\end{claim}
\begin{proof}
Suppose for contradiction that $s_i'$ is not adjacent to $t_i'$, for all $1 \leq i \leq m+\ell$. So, the graph $(G_1',s_1',t_1')$ is a parallel composition of two smaller graphs. Hence, there exists two internally disjoint paths, each of length at least $3$, from $s_1'$ to $t_1'$ in $G_1'$. Similarly, there are two internally disjoint paths, each of length at least $3$, from $s_i'$ to $t_i'$ in $G_i'$, for all $1 \leq i \leq m+\ell$. Since $m+\ell >2$, we get an $X$-induced-minor in $G$, a contradiction.
\end{proof}

Thus, there exists a $1 \leq i \leq m+\ell$ such that $s_i't_i'$ is an edge. By Claim~\ref{claim:not-edge-no-edge}, we can conclude that $G_i'$ is an edge. Without loss of generality, we may assume that $1 \leq i \leq m$. Note that $(G \setminus \{s_i't_i'\}, s_i',t_i')$, is the graph obtained by the series composition of $G_{i-1}',G_{i-2}', \dotsc, G_1'$, $G_2, G_m', G_{m-1}', \dotsc, G_{i+1}'$ by identifying $s_{i-k}',s_1', t_2,t_j'$ with $t_{i-k+1}', s_2,t_m'$ and $s_j'$, respectively, for $1 \leq k < i$ and $i < j \leq m$. Again, the graph $(G,s_i',t_i')$ is the parallel composition of $(G \setminus \{s_i't_i'\}, s_i',t_i')$ with an edge. Now, we can use case (1) to get a contradiction. Hence, a minimum counter example does not exist.
\end{itemize}
\end{proof}

\begin{lemma}\label{obs:hasEdge}
Let $G$ be a graph with treewidth $k$ and $T$ be a tree decomposition of $G$ such that no bag of size $k+1$ of $T$ is independent in $G$. Then, $mtw(G) \leq 2tw(G)-1$.
\end{lemma}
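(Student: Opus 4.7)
The plan is to follow the transformation from the proof of Theorem~\ref{thm:mtwvstw}, starting with the given tree decomposition $T$ of width $k$, and to refine the bag-size accounting by exploiting the hypothesis. That procedure produces a matched tree decomposition by processing bags top-down: inside each bag it fixes a matching, and every remaining vertex is either migrated into a fresh matched leaf bag or paired with a vertex pulled in from outside along a path in $T$. The central bound established there is that a bag of original size $x$ ends up with at most $2x$ vertices, which only yields $\mtw(G) \leq 2k+1$; we must save two additional vertices in the worst case.

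The sharpening I would carry out is to charge every vertex inserted into a bag $B$ during the procedure to some unmatched vertex of the matching $M_B$ selected inside $G[B]$, so that if $|M_B| = m$, the final size of $B$ is at most $x + (x - 2m) = 2x - 2m$. For any bag with $x \leq k$, this is at most $2k$ irrespective of $m$. For any bag of maximum original size $x = k+1$, the hypothesis rules out $G[B]$ being independent, so $M_B$ may be chosen with $m \geq 1$, and the final size is at most $2(k+1) - 2 = 2k$. Hence the resulting matched tree decomposition has every bag of size at most $2k$, giving width at most $2k - 1 = 2\tw(G) - 1$.

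The main obstacle will be that the top-down procedure of Theorem~\ref{thm:mtwvstw} does not let a non-root bag $B$ choose $M_B$ freely; it forces each vertex of $B \cap B_p$ to match the partner prescribed by the parent's perfect matching. I plan to address this by allowing each bag to choose its own matching independently, and then redoing the propagation so that whenever a vertex $u$ is imported into $B$ as a partner it is also added to every bag along the $T$-path back to an original bag containing $u$; this preserves the connectivity and edge properties of the tree decomposition. For size-$(k+1)$ bags the hypothesis supplies at least one local edge that can be placed inside $M_B$ before any importing begins, so the estimate $2x - 2m \leq 2k$ remains valid at every bag, completing the proof.
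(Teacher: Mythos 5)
You correctly identify the intended refinement: re-run the construction from Theorem~\ref{thm:mtwvstw} and argue that a bag of original size $x$ whose chosen internal matching has size $m$ ends up with at most $2x-2m$ vertices, so that the hypothesis (a non-independent bag of size $k+1$ lets one take $m \geq 1$) saves the needed two vertices. This is exactly the paper's one-line observation.

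However, the fix you propose for the non-root bags is not sound as stated. The reason the original top-down procedure keeps every bag within a factor of two is precisely the inheritance step you want to discard: once a vertex $v$ is assigned a partner $u$ in the topmost bag where $v$ is processed, every descendant bag containing $v$ inherits the same partner $u$ (step 1 of the non-root processing), so $v$ anchors at most one added vertex in each bag it appears in. If each bag instead picks its own maximal matching independently and imports partners for its own unmatched vertices, the same original vertex $v$ may be handed different new partners $u_1, u_2, \dotsc$ in different bags, and several of those partners may be propagated along $T$-paths that all pass through a common bag $B''$. Then $B''$ accumulates several additions all charged to the single original $v$, and the per-bag estimate $2x-2m$ — which only counts additions made while processing $B''$ itself — says nothing about these imported vertices. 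A secondary gap: a vertex $u$ imported into an intermediate bag $B''$ need not be matchable there, because the vertex it partners may already be matched within $B''$'s own chosen matching; this creates a new unmatched vertex in $B''$, triggering further imports in a cascade that your accounting does not bound. To repair the argument you would need either to recover the inheritance invariant — i.e., choose the per-bag maximal matchings consistently so that a vertex shared by several bags receives a coherent partner — or to supply a genuinely new global charging argument for the propagated additions. Neither is currently provided.
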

\begin{proof}
  Observe that in the proof of Theorem~\ref{thm:mtwvstw}, the construction yields a matched tree decomposition of width $2\tw(G) - 1$ if each bag of size $\tw(G) + 1$ has at least one edge.
\end{proof}
By Lemma~\ref{lem:MFThenSmcc}, every $G$ such that $\tw(G) = 2$ and $G$ is $X$-induced-minor-free has an smcc. Now, a standard tree decomposition $T$ of $\Tilde{G}$ is also a tree decomposition of $G$ with the additional property that every bag of size $3$ has at least one edge. We apply Lemma~\ref{obs:hasEdge} to construct a matched tree decomposition of width $3$ for $G$.
\end{proof}

To complete the characterization of matched treewidth of partial 2-trees, we prove the following lemma.

\begin{figure}[ht!]
\ctikzfig{Y}
\caption{The graph $Y$.}
\label{fig:Y}
\end{figure}

\begin{lemma}
   $mtw(Y) \geq 5$
\end{lemma}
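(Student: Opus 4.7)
The plan is to mimic the case analysis used to prove Lemma~\ref{xmtw} (the $\mtw(X) \geq 4$ argument), but pushed one level higher. We assume for contradiction that $\mtw(Y) \leq 4$, so $Y$ admits a reduced matched tree decomposition $T$ in which every bag has at most $5$ vertices. We want to derive a bag of size at least $6$ (or an unmatched bag), contradicting the assumption.

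First I would exploit the symmetry of $Y$. Root $T$ at an arbitrary leaf bag $B_l$. Since $T$ is reduced, $B_l$ must contain at least one vertex that is absent from its parent $B_p$, and this vertex $u$ must have $N_Y(u) \subseteq B_l$. By a symmetry case analysis over the possible ``outermost'' vertices of $Y$, reduce to a canonical starting configuration: $B_l$ contains a specified pair of adjacent vertices together with all of their common neighbors forced by matched-ness. This should force $|B_l| \geq 4$ and, in particular, fix which matched pair sits in $B_l$.

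Next, trace down the tree from $B_l$ toward $B_p$ and then to the subtree rooted at a bag $B_c$ that is first to contain a ``far'' vertex of $Y$ (in the $X$-proof this was $u_4$; here I expect a vertex of $Y$ that is at maximal graph distance from the chosen starting pair). Use the two facts that (i) $T$ is a tree decomposition, so the subtree of bags containing any fixed vertex is connected, and (ii) each intermediate bag must be matched, which forces its unmatched vertices to drag in partners. As in the proof of Lemma~\ref{xmtw}, show case by case that either (a) $B_c$ already contains the vertices from both ``ends'' of $Y$ plus all required matching partners, giving $|B_c| \geq 6$, or (b) $B_c$ misses one of the forced vertices and therefore must contain all of that vertex's private neighbors in $Y$, again inflating $|B_c|$ to at least $6$.

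The main obstacle is organizing the case analysis cleanly: $Y$ has several vertices that are simultaneously non-pendant, so there are many candidate leaf-bag configurations, and each must be handled without double-counting. The key technical lemma to isolate, analogous to the argument on $X$, is that for any two non-adjacent vertices $a,b$ at ``maximal distance'' in $Y$, no bag of $T$ of size at most $5$ can contain both $a$ and $b$ and simultaneously be matched, together with enough of their forced neighbors. Once this is established, routing the matching partners along the unique path in $T$ between the bag containing $a$ and the bag containing $b$ produces an intermediate matched bag of size at least $6$, the desired contradiction.
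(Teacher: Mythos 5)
Your high-level strategy is the same as the paper's: assume a reduced matched tree decomposition of width at most $4$, root at a leaf bag, and chase forced inclusions along a path in the tree to find a bag of size at least $6$. But the proposal never actually carries out the case analysis that constitutes the proof. You repeatedly gesture at ``a symmetry case analysis'' and acknowledge that ``organizing the case analysis cleanly'' is ``the main obstacle,'' but you do not identify which vertices of $Y$ play the hub roles, which vertex pairs are forced together by matchedness, or what concretely goes wrong in each branch. In the paper's proof this is the whole argument: one first shows $B_l$ must contain one of the three degree-four hub vertices $u_1,u_2,u_3$, reduces to $\{u_1,u_4\}\subseteq B_l$ and $\{u_1,u_2\}\subseteq B_p$, then locates the nearest descendant $B_c$ containing $u_3$, and tracks the appearance of $u_8,u_9,u_{10},u_{11}$ along the path from $B_p$ to $B_c$ to conclude that some intermediate bag $B_1$ or $B_2$ must already have six elements. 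The critical structural fact the paper exploits --- that $u_1, u_2, u_3$ have no common neighbour in $Y$ --- does not appear anywhere in your plan, and without it the chase does not close.

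There is also a substantive issue with the ``key technical lemma'' you propose to isolate: you claim that no matched bag of size $\le 5$ can contain both of two ``far'' non-adjacent vertices $a,b$. That is not what the paper proves, and it is not even clearly true. In the paper's argument the bag $B_c$ \emph{does} contain both $u_1$ and $u_3$ together with a matching partner $v$ of $u_1$; the contradiction comes not from forbidding coexistence of $u_1$ and $u_3$ in one bag, but from the fact that matching $u_3$ in $B_c$ (or matching $u_9,u_{11}$ along the way) drags in enough additional vertices to overflow some bag on the path from $B_p$ to $B_c$. So the lemma you want is both unproven and aimed at the wrong invariant. To repair the proposal you would need to replace the vague ``no bag can contain both $a$ and $b$'' with the actual path-chasing argument on $u_8,u_9,u_{10},u_{11}$, and you would need the no-common-neighbour property of $u_1,u_2,u_3$ to make the bag-size counts come out to $6$.
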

\begin{proof}
    The proof is similar to the proof of Lemma~\ref{xmtw}. We additionally use the fact that $u_1$, $u_2$, and $u_3$ do not have any common neighbors. Suppose for contradiction that $Y$ has a reduced, matched tree decomposition $T$ of width strictly less than $5$. Let $B_l$ be a leaf in $T$. We first argue that $B_l$ must contain one of $u_1$, $u_2$, or $u_3$. If not, it contains only a subset of the other edges, say, like $u_4u_5$ (other cases are symmetric). Since those vertices are not pendant, the neighbor of $B_l$ in $T$, which we call $B_p$, will be a superset of $B_l$. Suppose $B_l$ contains $u_1$ and $u_4$ (rest of the cases are symmetric). Since $T$ is reduced and has width less than $5$, the bag $B_p$ must contain both $u_1$ and $u_2$ (or $u_1$ and $u_3$, a symmetric case). Root $T$ at $B_l$.
    
    If $u_3 \in B_p$, we are done since $|B_p| \geq 6$. Let $B_c$ be the closest descendant of $B_p$ that contains $u_3$. Assume wlog that $u_2 \notin B_c$. Both $u_1$ and $u_2$ cannot be missing from $B_c$. In that case, all neighbors of $u_3$ must be in $B_c$ and that will imply $|B_c| \geq 6$. So $u_1$ and $u_3$ are both in $B_c$ (the other case is symmetric). It must also contain a vertex $v$ that is a neighbor of $u_1$ since it is matched. Consider the path $P$ from $B_p$ to $B_c$ in $T$. If $u_8$ ($u_{10}$) does not appear in this path, then $u_9$ ($u_{11}$ resp.) must appear on all bags in this path. Therefore, $B_p$ will have to contain $u_3$ to match $u_9$ (or $u_{11}$). So $u_8$ and $u_{10}$ must appear in this path.
    
    Let $B_1$ be the first bag in $P$ where both $u_8$ and $u_{10}$ have appeared. We split the proof into two cases.
    
    (Both $u_8$ and $u_{10}$ appear in $B_1$) If $u_2 \notin B_1$, then $|B_1| \geq 6$ and we are done. Otherwise, $B_1 = \{u_1, v, u_2, u_8, u_{10}\}$ for some neighbor $v$ of $u_1$. Let $B_2$ be the first bag from $B_1$ to $B_c$ where the edge $u_8u_9$ or $u_{10}u_{11}$ appears. Then, either $B_2 = \{u_1, v, u_2, u_{10}, u_8, u_9 \}$ ($u_{10}u_{11}$ has not appeared) or $B_2 = \{u_1, v, u_{10}, u_{11}, u_8, u_9\}$ (both edges appear simultaneously) and we are done.
    
    (One of $u_8$ or $u_{10}$ is missing from $B_1$) This means either $u_9$ or $u_{11}$ is in $B_1$. We have either $B_1 = \{u_1, v, u_2, u_{10}, u_9, u_3\}$ or $B_1 = \{u_1, v, u_2, u_8, u_{11}, u_3\}$ and we are done.
\end{proof}

Figure~\ref{fig:Z} is a super-graph of $Y$ that has lower matched treewidth.

\begin{figure}[ht!]
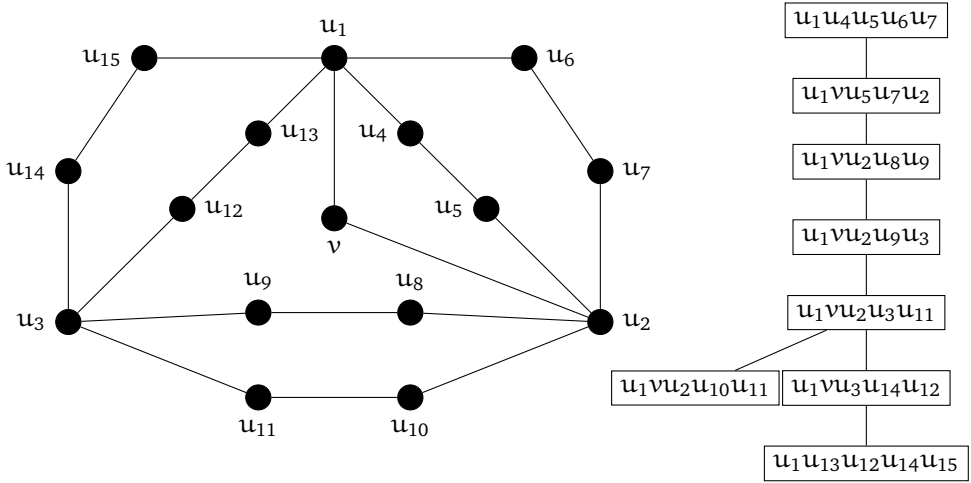

\ctikzfig{Z}
\caption{The graph $Z$ and its matched tree decomposition of width $4$.}
\label{fig:Z}
\end{figure}

\subsection{Finding and counting Subgraphs and induced subgraphs using matched treewidth}

\pkcountmtw*
\begin{proof}
There are more than 300 graphs in $\spasm(P_{10})$. We have verified that all of them have $\mtw$ at most $3$. To minimize the work, we can filter out all graphs in the spasm that has $\tw(G') = 1$ or $\tw(G') = 2$ and $G'$ is $X$-induced-minor-free. Observe that since $X$ has $9$ vertices $12$ edges, it cannot be an induced minor in any of the graphs in $\spasm(P_{10})$. Also, none of the forbidden minors for treewidth $4$ can appear in $\spasm(P_{10})$. Therefore, we only need to analyze graphs of treewidth $3$ in $\spasm(P_{10})$. There are only $18$ such graphs. They are listed in a pdf file in the repository associated with this paper (\url{https://github.com/anonymous1203/Spasm}).

Since $\spasm(P_k) \subseteq \spasm(P_{k+1})$ for all $k \geq 2$, we can make the same claim for all paths on fewer than $10$ vertices. We now use Equation~\ref{eq:subcount} to compute the result.
\end{proof}

\begin{remark}
Since $K_5 \in \spasm(P_{11})$, and treewidth of $K_5$ is $4$, the above method cannot yield an $\otilde(m^2)$ algorithm for $P_k$ where $k \geq 11$.
\end{remark}

We note that the above proof also yields $\otilde(m^2)$ time algorithms for counting any pattern in $\spasm(P_{10})$. This is because if $G \in \spasm(P_{10})$, then $\spasm(G) \subseteq \spasm(P_{10})$. In the following theorem, we point out an important class of graphs in $\spasm(P_{10})$.
\ckcountmtw*
\begin{proof}
We observe that $\spasm(C_{k}) \subset \spasm(P_{10})$ for $k \leq 9$.
\end{proof}

\begin{remark}
Since $K_5 \in \spasm(C_{10})$, and treewidth of $K_5$ is $4$, the above method cannot yield an $\otilde(m^2)$ algorithm for $C_k$ where $k \geq 10$.
\end{remark}

We can also use our efficient construction of arithemetic circuits for homomorphism polynomials for detecting small induced subgraphs. Brute-force search finds, or even counts $C_6$ as induced subgraphs in an $m$-edge host graph in $O(m^3)$ time. Bl\"aser, Komarath, and Sreenivasaiah \cite{DBLP:conf/fsttcs/BlaserKS18} showed that efficient computation of homomorphism polynomials can be used to speed-up the detection of induced subgraphs. For example, their techniques can be used to show that detecting an induced $C_6$ in an $n$-vertex host graph can be done in $O(n^4)$ time.  In this section, we derive an $\otilde(m^2)$ algorithm for detecting induced $C_6$ in an $m$-edge host graph. This algorithm is a natural analogue of the algorithm by Bla\"ser, Komarath and Sreenivasaiah \cite{DBLP:conf/fsttcs/BlaserKS18}.

We now give some definitions that are necessary to understand how homomorphism polynomials are used in induced subgraph detection.

\begin{definition}\label{def:gind}
We define the induced subgraph isomorphism polynomial for pattern graph $G$ on $n$-vertex host graphs, denoted $\gind_{G}$, as follows. The variables of the polynomial are $y_v$ and $x_{\{u, v\}}$ for all $u, v\in [n]$.
\begin{equation*}
    \gind_{G} = \sum_{G'} \prod_v y_v \prod_e x_e \prod_f (1 - x_f)
\end{equation*}
where $G'$ ranges over all (not-necessarily induced) subgraphs of $K_n$ isomorphic to $G$, $v$ ranges over the vertices of $G'$, $e$ ranges over the edges of $G'$, and $f$ ranges over the edges in $K_n$ between vertices in $G'$ but not in $G'$.
\end{definition}

We denote by $\gind_{G}(E(H))$, the polynomial obtained by substituting the adjacency in $H$ for the edge variables. Note that the monomials of $\gind_{G}(E(H))$ are products of $|V(G)|$ vertex variables and correspond to the induced subgraphs of $H$ isomorphic to $G$. In addition, these monomials all have coefficient 1 because there can only be at most $1$ induced subgraph isomorphic to $G$ on any given $k$ vertices. The induced subgraph polynomials and subgraph polynomials are related via the equation:
\begin{equation}\label{eq:gindgsub}
    \gind_{G}(E(H)) = \sum_{G'} {(-1)}^{|E(G') - E(G)|}\#\gsub_G[G'] \gsub_{G'}[H](x_e = 1) 
\end{equation}
where $G'$ ranges over all $k$-vertex supergraphs of $G$ and $\#\gsub_G[G']$ denotes the number of times $G$ occurs as a subgraph in $G'$. We use the substitution $(x_e = 1)$ to denote that all edge variables in the polynomial are substituted with $1$. Variants of this equation have been used by many authors for induced subgraph detection (See \cite{10.5555/2722129.2722240, DBLP:conf/fsttcs/BlaserKS18,Kowaluk2013CountingAD}).

We now briefly describe how to use homomorphism polynomials to detect induced subgraph isomorphisms (See \cite{DBLP:conf/fsttcs/BlaserKS18} for a more detailed description). Note that the $\gind_G(E(H))$ has the monomial $x_{v_1}\dotsm x_{v_k}$ if and only if $v_1,\dotsc,v_k$ induces a $G$. Therefore, to check whether $G$ occurs as an induced subgraph, we only have to test whether $\gind_G(E(H))$ is non-zero. Furthermore, the coefficient of every monomial is $1$ because a $k$-vertex subgraph can contain at most one induced subgraph isomorphic to $G$. Therefore, whether $H$ contains an induced subgraph isomorphic to $G$ can be reduced to whether $\gind_G(E(H))$ is non-zero \emph{modulo $2$}. The advantage of computing over a ring of characteristic $2$ is that it eliminates all $\gsub_{G'}[H](x_e = 1)$ from the right-hand side of Equation~\ref{eq:gindgsub} for which $\#\gsub_G[G']$ is even. However, we do not have efficient computations for subgraph polynomials. Here, we make use of the observation that $\gsub_{G'}[H](x_e = 1)$ is equal to the multilinear part of $\frac{1}{\#\gaut(G')}\ghom_{G'}[H](x_e = 1)$. Therefore, to test whether $\gsub_{G'}[H](x_e = 1)$ is non-zero modulo $2$, we need only test whether $\frac{1}{\#\gaut(G')}\ghom_{G'}[H](x_e = 1)$ contains a multilinear term with an odd coefficient.

To check the presence of multilinear terms with odd coefficients, we can randomly substitute elements that satisfy the equation $x^2 = 0$ from group algebras over $\mathbb{Z}_2$ \cite{koutis2008faster}. We stress that we only substitute these elements for the vertex variables. The edge variables are all always replaced by $0$ or $1$. The use of a characteristic-2 ring introduces another issue. We now cannot compute $\frac{1}{\#\gaut(G')}\ghom_{G'}[H](x_e = 1)$ for graphs that have an even number of automorphisms, by finding the homomorphism polynomial and dividing by the number of automorphisms. The solution is to compute a polynomial that \emph{avoids} these automorphisms in the multilinear part of $\ghom_{G'}[H](x_e = 1)$ for each such $G'$ so that this division becomes unnecessary, while being careful not to introduce additional multilinear terms. This is the crux of the following proof.

\csixind*

\begin{proof}
We describe how to compute polynomials for which the multilinear part is the same as $\ghom_{G'}[H](x_e = 1)$ \emph{and} the coefficient of all monomials are odd for all $6$-vertex supergraphs of $C_6$ that contain $C_6$ an odd number times. The complete list is given in Figure~\ref{fig:c6ind}.

These computations involve modifying Algorithm~\ref{alg:hom} slightly for each such $G'$. We consider the case of $C_6$. Each multilinear monomial in $\ghom_{C_6}[H](x_e = 1)$ has coefficient $12$. By ensuring that only homomorphisms $\sigma$ where $\sigma(2) = \min(\sigma(2), \sigma(3), \sigma(5), \sigma(6))$ are present in the polynomial, we can ensure that all $C_6$ subgraphs of $H$ are counted exactly thrice, once for each choice of $\{\sigma(1), \sigma(4)\}$. This check can be done when the algorithm processes bag $2365$ (Figure~\ref{fig:c6ind}) in Line~\ref{alg:line22}. Notice that we need to iterate only over the edges present in $H$ in the algorithm. This is because the other edge variables will be substituted with $0$ anyway and those monomials will definitely vanish. This is crucial in ensuring that our algorithm remains $\otilde(m^2)$ and not $\otilde(n^4)$.

We consider one more case from our list. The graph in the first row and third column in Figure~\ref{fig:c6ind} has four automorphisms (horizontal flip and vertical flip). We can ensure that these subgraphs are counted exactly once in the polynomial by ensuring that Line~\ref{alg:line22} in Algorithm~\ref{alg:hom} also checks that $\sigma(3) < \sigma(6)$ (preventing horizontal flips) and $\sigma(2) < \sigma(4)$ (preventing vertical flips). These checks can be done when the algorithm processes the bag $1634$ and $1234$ respectively.

Figure~\ref{fig:c6ind} shows the matched tree decompositions and the constraints on $\sigma$ that can be used to apply Algorithm~\ref{alg:hom} to compute all these polynomials.

\begin{figure}
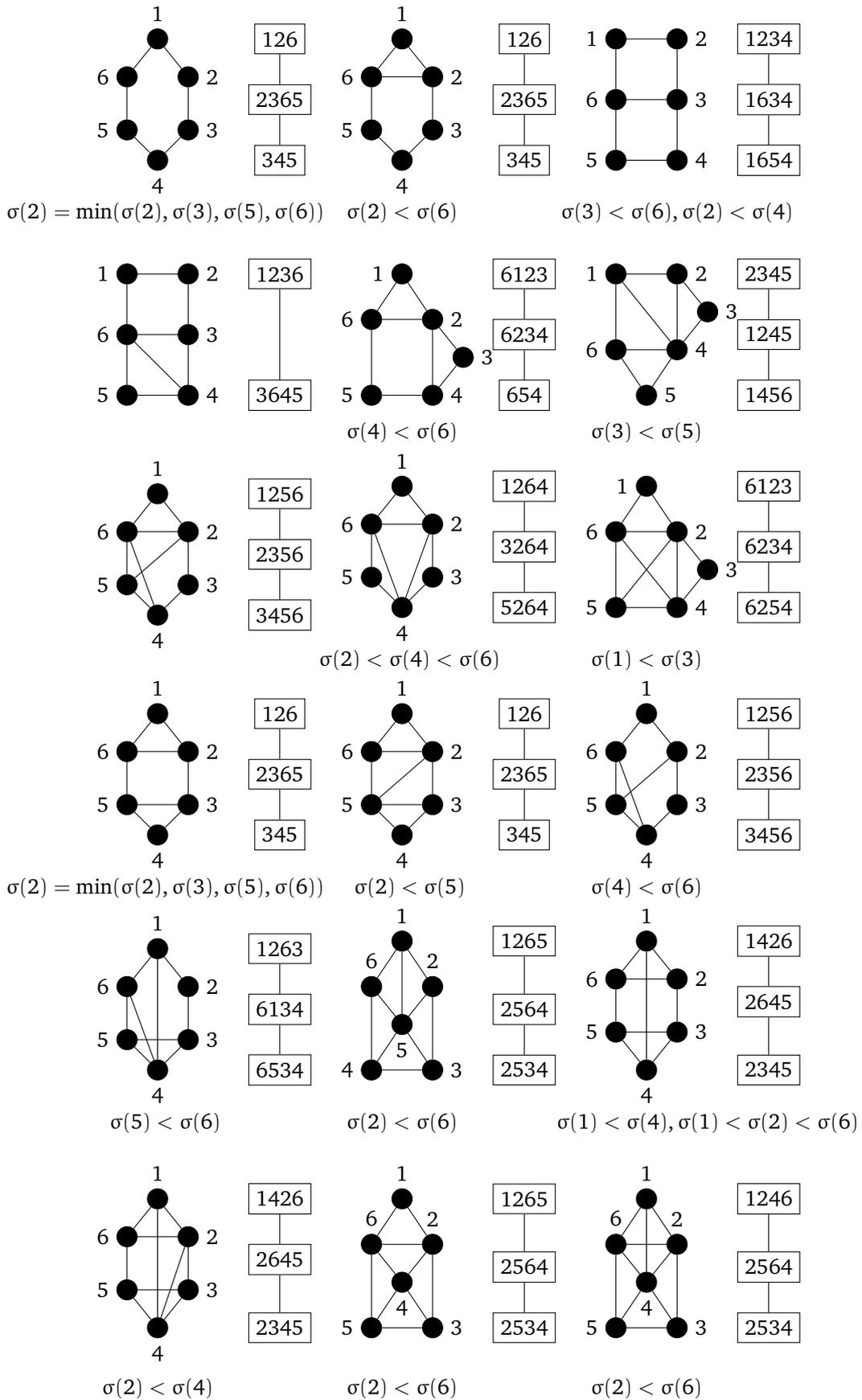

\ctikzfig{c6ind}
\caption{Detecting Induced $C_6$}
\label{fig:c6ind}
\end{figure}
\end{proof}

\pkbarind*

\begin{proof}
We first prove that $\overline{P_k}$ has matched treewidth $k-3$. We only consider the case of odd $k$ (the even case is similar). Let vertices in the path be $1,2,\dotsc,2j+1$. Then, our matched tree decomposition will have three bags. A root bag that excludes only $\{j, j+2\}$, a left child of the root bag that excludes $\{j, j+1\}$, and a right child of the root bag that excludes $\{j+1, j+2\}$. This is clearly a tree decomposition. To see that it is matched, at the root bag, we have the matched matching $(1,2j+1), (2, 2j), \dotsc, (j-1, j+3), (j+1, j+3)$. On the left child, we can keep the rest of edges the same and match $(j+2, j-1)$. Similarly, on the right child, we match $(j, j+3)$.

Since $\overline{P_k}$ has two automorphisms, we also need to show that we can avoid one automorphism while computing the homomorphism polynomial. We note that the non-identity automorphism $\tau$ must have $\tau(1) = k$ and $\tau(k) = 1$. Therefore, we can avoid this by always ensuring that $\sigma(1) < \sigma(k)$ when building the homomorphism polynomial in Algorithm~\ref{alg:hom}.

We know that $\gind_{\overline{P_k}}(E(H)) = \gsub_{\overline{P_k}}[H](x_e = 1) \pmod 2$. The theorem follows.
\end{proof}

\section{Acknowledgement:} The research work of S. Mishra is partially funded by Fondecyt Postdoctoral grant $3220618$ of Agencia National de Investigati\'{o}n y Desarrollo (ANID), Chile.

\bibliography{sparse}

\begin{thebibliography}{10}

\bibitem{Alon1997}
N.~Alon, R.~Yuster, and U.~Zwick.
\newblock Finding and counting given length cycles.
\newblock {\em Algorithmica}, 17(3):209--223, Mar 1997.
\newblock \href {https://doi.org/10.1007/BF02523189}
  {\path{doi:10.1007/BF02523189}}.

\bibitem{DBLP:conf/fsttcs/BlaserKS18}
Markus Bl{\"{a}}ser, Balagopal Komarath, and Karteek Sreenivasaiah.
\newblock Graph pattern polynomials.
\newblock In Sumit Ganguly and Paritosh~K. Pandya, editors, {\em 38th {IARCS}
  Annual Conference on Foundations of Software Technology and Theoretical
  Computer Science, {FSTTCS} 2018, December 11-13, 2018, Ahmedabad, India},
  volume 122 of {\em LIPIcs}, pages 18:1--18:13. Schloss Dagstuhl -
  Leibniz-Zentrum f{\"{u}}r Informatik, 2018.
\newblock \href {https://doi.org/10.4230/LIPIcs.FSTTCS.2018.18}
  {\path{doi:10.4230/LIPIcs.FSTTCS.2018.18}}.

\bibitem{BR21}
Marco Bressan and Marc Roth.
\newblock Exact and approximate pattern counting in degenerate graphs: New
  algorithms, hardness results, and complexity dichotomies.
\newblock In {\em 2021 IEEE 62nd Annual Symposium on Foundations of Computer
  Science (FOCS)}, pages 276--285, 2022.
\newblock \href {https://doi.org/10.1109/FOCS52979.2021.00036}
  {\path{doi:10.1109/FOCS52979.2021.00036}}.

\bibitem{10.1145/3055399.3055502}
Radu Curticapean, Holger Dell, and D\'{a}niel Marx.
\newblock Homomorphisms are a good basis for counting small subgraphs.
\newblock In {\em Proceedings of the 49th Annual ACM SIGACT Symposium on Theory
  of Computing}, STOC 2017, page 210–223, New York, NY, USA, 2017.
  Association for Computing Machinery.
\newblock \href {https://doi.org/10.1145/3055399.3055502}
  {\path{doi:10.1145/3055399.3055502}}.

\bibitem{DBLP:conf/focs/CurticapeanM14}
Radu Curticapean and D{\'{a}}niel Marx.
\newblock Complexity of counting subgraphs: Only the boundedness of the
  vertex-cover number counts.
\newblock In {\em 55th {IEEE} Annual Symposium on Foundations of Computer
  Science, {FOCS} 2014, Philadelphia, PA, USA, October 18-21, 2014}, pages
  130--139. {IEEE} Computer Society, 2014.
\newblock \href {https://doi.org/10.1109/FOCS.2014.22}
  {\path{doi:10.1109/FOCS.2014.22}}.

\bibitem{10.5555/646719.702252}
Josep D\'{\i}az, Maria~J. Serna, and Dimitrios~M. Thilikos.
\newblock Counting h-colorings of partial k-trees.
\newblock In {\em Proceedings of the 7th Annual International Conference on
  Computing and Combinatorics}, COCOON '01, page 298–307, Berlin, Heidelberg,
  2001. Springer-Verlag.

\bibitem{DBLP:journals/algorithmica/DorflerRSW22}
Julian D{\"{o}}rfler, Marc Roth, Johannes Schmitt, and Philip Wellnitz.
\newblock Counting induced subgraphs: An algebraic approach to
  {\#}w[1]-hardness.
\newblock {\em Algorithmica}, 84(2):379--404, 2022.
\newblock \href {https://doi.org/10.1007/s00453-021-00894-9}
  {\path{doi:10.1007/s00453-021-00894-9}}.

\bibitem{EisenbrandG04}
Friedrich Eisenbrand and Fabrizio Grandoni.
\newblock On the complexity of fixed parameter clique and dominating set.
\newblock {\em Theoretical Computer Science}, 326:57--67, 10 2004.
\newblock \href {https://doi.org/10.1016/j.tcs.2004.05.009}
  {\path{doi:10.1016/j.tcs.2004.05.009}}.

\bibitem{10.1007/978-3-642-32241-9_4}
Peter Floderus, Miros{\l}aw Kowaluk, Andrzej Lingas, and Eva-Marta Lundell.
\newblock Induced subgraph isomorphism: Are some patterns substantially easier
  than others?
\newblock In Joachim Gudmundsson, Juli{\'a}n Mestre, and Taso Viglas, editors,
  {\em Computing and Combinatorics}, pages 37--48, Berlin, Heidelberg, 2012.
  Springer Berlin Heidelberg.

\bibitem{DBLP:conf/stoc/FockeR22}
Jacob Focke and Marc Roth.
\newblock Counting small induced subgraphs with hereditary properties.
\newblock In Stefano Leonardi and Anupam Gupta, editors, {\em {STOC} '22: 54th
  Annual {ACM} {SIGACT} Symposium on Theory of Computing, Rome, Italy, June 20
  - 24, 2022}, pages 1543--1551. {ACM}, 2022.
\newblock \href {https://doi.org/10.1145/3519935.3520008}
  {\path{doi:10.1145/3519935.3520008}}.

\bibitem{DBLP:journals/jcss/FominLRSR12}
Fedor~V. Fomin, Daniel Lokshtanov, Venkatesh Raman, Saket Saurabh, and
  B.~V.~Raghavendra Rao.
\newblock Faster algorithms for finding and counting subgraphs.
\newblock {\em J. Comput. Syst. Sci.}, 78(3):698--706, 2012.
\newblock \href {https://doi.org/10.1016/j.jcss.2011.10.001}
  {\path{doi:10.1016/j.jcss.2011.10.001}}.

\bibitem{GT09}
Archontia~C. Giannopoulou and Dimitrios~M. Thilikos.
\newblock Obstructions for tree-depth.
\newblock {\em Electronic Notes in Discrete Mathematics}, 34:249--253, 2009.
\newblock European Conference on Combinatorics, Graph Theory and Applications
  (EuroComb 2009).
\newblock URL:
  \url{https://www.sciencedirect.com/science/article/pii/S1571065309000821},
  \href {https://doi.org/https://doi.org/10.1016/j.endm.2009.07.041}
  {\path{doi:https://doi.org/10.1016/j.endm.2009.07.041}}.

\bibitem{DBLP:journals/ipl/KloksKM00}
Ton Kloks, Dieter Kratsch, and Haiko M{\"{u}}ller.
\newblock Finding and counting small induced subgraphs efficiently.
\newblock {\em Inf. Process. Lett.}, 74(3-4):115--121, 2000.
\newblock \href {https://doi.org/10.1016/S0020-0190(00)00047-8}
  {\path{doi:10.1016/S0020-0190(00)00047-8}}.

\bibitem{DBLP:journals/corr/abs-2011-04778}
Balagopal Komarath, Anurag Pandey, and C.~S. Rahul.
\newblock Graph homomorphism polynomials: Algorithms and complexity.
\newblock {\em CoRR}, abs/2011.04778, 2020.
\newblock URL: \url{https://arxiv.org/abs/2011.04778}, \href
  {http://arxiv.org/abs/2011.04778} {\path{arXiv:2011.04778}}.

\bibitem{koutis2008faster}
Ioannis Koutis.
\newblock Faster algebraic algorithms for path and packing problems.
\newblock In {\em International Colloquium on Automata, Languages, and
  Programming}, pages 575--586. Springer, 2008.

\bibitem{Kowaluk2013CountingAD}
Miroslaw Kowaluk, Andrzej Lingas, and Eva-Marta Lundell.
\newblock Counting and detecting small subgraphs via equations.
\newblock {\em SIAM J. Discret. Math.}, 27:892--909, 2013.

\bibitem{N1985}
Jaroslav Nešetřil and Svatopluk Poljak.
\newblock On the complexity of the subgraph problem.
\newblock {\em Commentationes Mathematicae Universitatis Carolinae},
  026(2):415--419, 1985.
\newblock URL: \url{http://eudml.org/doc/17394}.

\bibitem{roth_et_al:LIPIcs.ICALP.2021.108}
Marc Roth, Johannes Schmitt, and Philip Wellnitz.
\newblock {Detecting and Counting Small Subgraphs, and Evaluating a
  Parameterized Tutte Polynomial: Lower Bounds via Toroidal Grids and Cayley
  Graph Expanders}.
\newblock In Nikhil Bansal, Emanuela Merelli, and James Worrell, editors, {\em
  48th International Colloquium on Automata, Languages, and Programming (ICALP
  2021)}, volume 198 of {\em Leibniz International Proceedings in Informatics
  (LIPIcs)}, pages 108:1--108:16, Dagstuhl, Germany, 2021. Schloss Dagstuhl --
  Leibniz-Zentrum f{\"u}r Informatik.
\newblock URL: \url{https://drops.dagstuhl.de/opus/volltexte/2021/14177}, \href
  {https://doi.org/10.4230/LIPIcs.ICALP.2021.108}
  {\path{doi:10.4230/LIPIcs.ICALP.2021.108}}.

\bibitem{Virginia08}
Virginia Vassilevska.
\newblock {\em Efficient Algorithms for Path Problems in Weighted Graphs}.
\newblock PhD thesis, Carnegie Mellon University, Pittsburgh, PA 15213, 2008.

\bibitem{west_introduction_2000}
Douglas~B. West.
\newblock {\em Introduction to Graph Theory}.
\newblock Prentice Hall, September 2000.

\bibitem{Williams2008}
Ryan Williams.
\newblock Maximum two-satisfiability.
\newblock {\em Encyclopedia of Algorithms}, pages 507--510, 2008.
\newblock \href {https://doi.org/10.1007/978-0-387-30162-4_227}
  {\path{doi:10.1007/978-0-387-30162-4_227}}.

\bibitem{10.1016/j.ipl.2008.11.004}
Ryan Williams.
\newblock Finding paths of length k in o*(2k) time.
\newblock {\em Inf. Process. Lett.}, 109(6):315–318, feb 2009.
\newblock \href {https://doi.org/10.1016/j.ipl.2008.11.004}
  {\path{doi:10.1016/j.ipl.2008.11.004}}.

\bibitem{10.5555/2722129.2722240}
Virginia~Vassilevska Williams, Joshua~R. Wang, Ryan Williams, and Huacheng Yu.
\newblock Finding four-node subgraphs in triangle time.
\newblock In {\em Proceedings of the Twenty-Sixth Annual ACM-SIAM Symposium on
  Discrete Algorithms}, SODA '15, page 1671–1680, USA, 2015. Society for
  Industrial and Applied Mathematics.

\end{thebibliography}
\end{document}